\newtheorem{theorem}{Theorem}[section]
\newtheorem{corollary}[theorem]{Corollary}
\newtheorem{lemma}[theorem]{Lemma}
\newtheorem*{remark}{Remark}
\newtheorem{prop}[theorem]{Proposition}
\newtheorem{definition}{Definition}[section]
\renewcommand{\phi}{\varphi}
  \title{	\normalsize \textsc{} 	
		 		\LARGE \textbf{\uppercase{The Gregory--Laflamme Instability of\\ the Schwarzschild Black String Exterior}}	
		}
		\author{
		Sam C. Collingbourne \footnote{scc73@cam.ac.uk}\\
		Department of Pure Mathematics and Mathematical Statistics, University of Cambridge
        }
\begin{document}
\maketitle
\begin{abstract}
\noindent In this paper, a direct rigorous mathematical proof of the Gregory--Laflamme instability~\cite{GL2} for the five-dimensional Schwarzschild black string is presented. Under a choice of ansatz for the perturbation and a gauge choice originally introduced in~\cite{Wiseman}, the linearised vacuum Einstein equation reduces to an ODE problem for a single function. In this work, a suitable rescaling and change of variables is applied which casts the ODE into a Schr\"odinger eigenvalue equation to which an energy functional is assigned. It is then shown by direct variational methods that the lowest eigenfunction gives rise to an exponentially growing mode solution which has admissible behavior at the future event horizon and spacelike infinity. After the addition of a pure gauge solution, this gives rise to a regular exponentially growing mode solution of the linearised vacuum Einstein equation in harmonic/transverse-traceless gauge.
\end{abstract}

\pagebreak
\tableofcontents
\pagebreak
\section{Introduction}
The main topic of this paper is the study of the stability problem for the Schwarzschild black string solution to the Einstein vacuum equation in~$5$ dimensions. In 1994, the work of Gregory--Laflamme~\cite{GL2} gave strong numerical evidence for the presence of an exponentially growing mode instability. This phenomenon has since been known as the Gregory--Laflamme instability. This work has been widely invoked in the physics community to infer instability of many higher dimensional spacetimes, for example, black rings, ultraspinning Myers--Perry black holes and black Saturns. The interested reader should consult~\cite{H1,horo} and references therein, as well as~\cite{HarveyBB} and \cite{HolW,PW} which give a general approach to stability problems. The purpose of the present paper is to provide a direct, self-contained and elementary mathematical proof of the Gregory--Laflamme instability of the $5D$ Schwarzschild black string. 
\subsection{Schwarzschild Black Holes, Black Strings and Black Branes}
The most basic solution to the vacuum Einstein equation
\begin{eqnarray}
\mathrm{Ric}_g=0\label{VE}
\end{eqnarray}
giving rise to the black hole phenomena is the Schwarzschild black hole solution~$(\mathrm{Sch}_n,g_s)$. It arises dynamically as the maximal Cauchy development of the following initial data: an initial hypersurface~$\Sigma_0=\mathbb{R}\times \mathbb{S}^{n-2}$, a first fundamental form (in isotropic coordinates)
\begin{eqnarray}
h_s=\Big(1+\frac{M}{2\rho^{n-3}}\Big)^{\frac{4}{n-3}}(d\rho\otimes d\rho+\rho^2\mathring{\slashed{\gamma}}_{n-2}),\qquad \rho\in (0,\infty)\cong\mathbb{R}
\end{eqnarray}
and second fundamental form~$K=0$, where~$\mathring{\slashed{\gamma}}_{n-2}$ is the metric on the unit~$(n-2)$-sphere~$\mathbb{S}^{n-2}$. This spacetime is asymptotically flat and spherically symmetric. The following Penrose diagram represents the causal structure of~$(\mathrm{Sch}_n,g_s)$ arising from this initial data, restricted to the future of~$\Sigma_0$.
\begin{center}
\begin{tikzpicture}
     
      \draw [thick] (3,6) -- (6,3);
      \draw [thick] (6,3) -- (9,6);
      \draw [dashed] (9,6) -- (12,3);
       \draw [dashed] (0,3) -- (3,6);

      \draw [decorate, decoration=snake, thick] (3,6) -- (9,6);

      \fill [decoration={zigzag}]
        [pattern=north east lines,thick] (6,3) -- (9,6)
          decorate { (9,6) -- (3,6) } -- (6,3);
            \fill [decoration={zigzag}]
        [pattern=north west lines,thick] (6,3) -- (9,6)
          decorate { (9,6) -- (3,6) } -- (6,3);
    
      \node [inner sep=0pt] (A) at (0,3) {};
      \node [inner sep=0pt] (B) at (6,3) {};
      \node [inner sep=0pt] (C) at (12,3) {};
      \draw [very thick, gray!75!black] (A) to [bend right=0] (B);
      \draw [very thick, gray!75!black] (B) to [bend left=0] (C);

      \node (scriplus) at (10.9,4.75) {\large $\mathcal{I}^{+}_A$};
      \node (scriplus) at (1.1,4.75) {\large $\mathcal{I}^{+}_B$};

      \node (iplus) at (2.75,6.3) {\large $i^{+}_B$};
      \node (inaught) at (-.3,3) {\large $i^{0}_B$};
      \node (iplus) at (9.25,6.3) {\large $i^{+}_A$};
      \node (inaught) at (12.3,3) {\large $i^{0}_A$};

      \node (r0top) at (6,6.5) {\large $r=0$};

      \node [fill=white,inner sep=1pt](Hplus) at (7.9,4.25) {\large $\mathcal{H}^{+}_A$};   
      \node [fill=white,inner sep=1pt](Hminus) at (4.1,4.25) {\large $\mathcal{H}^{+}_B$};      

      \node [fill=white, inner sep=1pt] (Sigma) at (7,2.6) {\large $\Sigma_0$};
      \node [fill=white, inner sep=1pt] (Sigma) at (6,3.4) {\large $\mathcal{S}$};
      \node [fill=white, inner sep=1pt] (Int) at (6,4.75) {\large $\mathcal{B}$};
      \node [fill=white, inner sep=1pt] (Ext) at (9,4.25) {\large $\mathcal{E}_A$}  	;
            \node [fill=white, inner sep=1pt] (Ext) at (3,4.25) {\large $\mathcal{E}_B$}  	;
      \node[mark size=2pt] at (0,3) {\pgfuseplotmark{*}};
      \node[mark size=2pt] at (12,3) {\pgfuseplotmark{*}};
      \node[mark size=2pt] at (3,6) {\pgfuseplotmark{*}};
      \node[mark size=2pt] at (9,6) {\pgfuseplotmark{*}};
         \node[mark size=2pt] at (6,3) {\pgfuseplotmark{*}};
      \end{tikzpicture}   
\end{center}
\noindent To fix notation,~$\mathcal{I}^+:=\mathcal{I}_A^+\cup\mathcal{I}_B^+$ is future null infinity,~$i^+:=i^+_A\cup i^+_B$ and~$i^0:=i^0_A\cup i^0_B$ are future timelike infinity and spacelike infinity respectively,~$\mathcal{E}_A:=J^{-}(\mathcal{I}^+_A)\cap J^+(\Sigma_0)$ is the distinguished exterior region,~$\mathcal{E}_B:=J^{-}(\mathcal{I}^+_B)\cap J^+(\Sigma_0)$ is another exterior region,~$\mathcal{B}:=\mathrm{Sch}_n\setminus J^{-}(\mathcal{I}^+)$ is the black hole region,~$\mathcal{H}^+=\mathcal{H}^+_A\cup \mathcal{H}^+_B:=\mathcal{B}\setminus \mathrm{int}(\mathcal{B})$ is the future event horizon and~$\mathcal{S}:=\mathcal{H}_A^+\cap\mathcal{H}_B^+$ is the bifurcation sphere. The wavy line denotes a singular boundary which is not part of the spacetime~$(\mathrm{Sch}_n,g_s)$ but towards which the Kretchmann curvature invariant diverges. It is in this sense that~$(\mathrm{Sch}_n,g_s)$ is singular.  Note that every point in this diagram is in fact an~$(n-2)$-sphere. The metric on the exterior~$\mathcal{E}_A$ in traditional Schwarzschild coordinates~$(t,r,\phi_1,...,\phi_{n-2})$ takes the form
\begin{eqnarray}
g_s=-D_n(r)dt\otimes dt+\frac{1}{D_n(r)}dr\otimes dr+r^2\mathring{\slashed{\gamma}}_{n-2},\qquad D_n(r)=1-\frac{2M}{r^{n-3}},
\end{eqnarray}
where~$t\in[0,\infty)$,~$r\in \big((2M)^{\frac{1}{n-3}},\infty\big)$ and~$\mathring{\slashed{\gamma}}_{n-2}$ is the metric on the unit~$(n-2)$-sphere. \\

The Lorentzian manifold that is the main topic of this paper is the Schwarzschild black string spacetime in~$5$ dimensions which is constructed from the~$4D$ Schwarzschild solution $(\mathrm{Sch}_4,g_s)$. Before focussing on this spacetime explicitly, it is of interest to discuss more general spacetimes constructed from the~$n$-dimensional Schwarzschild black hole solution~$(\mathrm{Sch}_n,g_s)$. Let~$\mathbb{S}_R^1$ denote the circle of radius~$R$ and let~$\mathrm{F}_p\in\{\mathbb{R}^p,\mathbb{R}^{p-1}\times \mathbb{S}^1_R,...,\mathbb{R}\times \prod_{i=1}^{p-1}\mathbb{S}_{R_i}^1,\prod_{i=1}^{p}\mathbb{S}_{R_i}^1\}$ with its associated~$p$-dimensional Euclidean metric~$\delta_p$. If one has the~$n$-dimensional Schwarzschild black hole spacetime~$(\mathrm{Sch}_n,g_s)$ and takes its Cartesian product with~$\mathrm{F}_p$ then one realises the~$(n+p)$-dimensional Schwarzschild black brane~$(\mathrm{Sch}_n\times \mathrm{F}_p,g_s\oplus \delta_p)$. This means that the~$(n+p)$-dimensional Schwarzschild black brane~$(\mathrm{Sch}_n\times \mathrm{F}_p,g_s\oplus \delta_p)$ is a product manifold made from Ricci-flat manifolds, which is again Ricci-flat and hence satisfies the vacuum Einstein equation~(\ref{VE}). Note that in contrast to~$(\mathrm{Sch}_n,g_s)$, the spacetimes~$(\mathrm{Sch}_n\times \mathrm{F}_p,g_s\oplus \delta_p)$ are not asymptotically flat but are called `asymptotically Kaluza--Klein'. \\

The Schwarzschild black brane spacetimes~$(\mathrm{Sch}_n\times \mathrm{F}_p,g_s\oplus \delta_p)$ arise dynamically as the maximal Cauchy development of suitably extended Schwarzschild initial data, i.e.,~$(\Sigma_0\times \mathrm{F}_p,h_s\oplus\delta_p, K=0)$. Hence, the above Penrose diagram can be reinterpreted as the Penrose diagram for the Schwarzschild black brane, but instead of each point representing a~$(n-2)$-sphere, it represents a~$\mathbb{S}^{n-2}\times \mathrm{F}_p$. In particular, the notation~$\mathcal{E}_A$ will be used henceforth to denote the distinguished exterior region of~$(\mathrm{Sch}_n\times \mathrm{F}_p,g_s\oplus \delta_p)$.  \\

Taking~$p=1$ gives rise to the~$(n+1)$-dimensional Schwarzschild black string spacetime~$\mathrm{Sch}_n\times \mathbb{R}$ or alternatively~$\mathrm{Sch}_n\times \mathbb{S}^1_R$. The topic of the present paper is the~$5D$ Schwarzschild black string spacetime~$\mathrm{Sch}_4\times \mathbb{R}$ or alternatively~$\mathrm{Sch}_4\times \mathbb{S}_R^1$. The metric on the exterior~$\mathcal{E}_A$ in standard Schwarzschild coordinates is
\begin{eqnarray}
g:=-D(r)dt\otimes dt+\frac{1}{D(r)}dr\otimes dr+r^2\mathring{\slashed{\gamma}}_{2}+dz\otimes dz,\qquad D(r)=1-\frac{2M}{r}\label{BSM}
\end{eqnarray}
where $t\in[0,\infty)$, $r\in (2M,\infty)$ and $z\in\mathbb{R}$ or $\nicefrac{\mathbb{R}}{2\pi R\mathbb{Z}}$.\\

Finally, to analyse the subsequent problem of linear stability on the exterior region~$\mathcal{E}_A$ up to the future event horizon $\mathcal{H}_A^+$, one requires a chart with coordinate functions that are regular up to this hypersurface~$\mathcal{H}^+_A\setminus \mathcal{S}$, where~$ \mathcal{S}$ now denotes the bifurcation surface. A good choice is ingoing Eddington--Finkelstein coordinates defined by
\begin{eqnarray}
v=t+r_*,\qquad \frac{dr_*}{dr}=\frac{r^{n-3}}{r^{n-3}-2M},\quad \text{with } r_*(3M)=3M+2M\log(M).
\end{eqnarray}
The~$(n+p)$-dimensional Schwarzschild black brane metric becomes
\begin{eqnarray}
g_s\oplus\delta=-D_n(r)dv\otimes dv+dv\otimes dr+dr\otimes dv+r^2\mathring{\slashed{\gamma}}_{n-2}+\delta_{ij}dz^i\otimes dz^j,\qquad D_n(r)=1-\frac{2M}{r^{n-3}}.
\end{eqnarray}
\subsection{Previous Works}
For a good introduction to the Gregory--Laflamme instability and the numerical result of~\cite{GL2} see the book chapter~\cite{GL3}. A detailed survey of the key work~\cite{GL4} related to the present paper is undertaken in section~\ref{spherical}. A brief history of the problem is presented here:
\begin{enumerate}
\item In 1988, Gregory--Laflamme examined the Schwarzschild black string spacetime and stated that it is stable~\cite{GL1}. However, an issue in the analysis arose from working in Schwarzschild coordinates which lead to incorrect regularity assumptions for the asymptotic solutions. 
\item In 1993, Gregory--Laflamme used numerics to give strong evidence for the existence of a low-frequency instability of the Schwarzschild black string and branes in harmonic gauge~\cite{GL2}. 
\item In 1994, Gregory--Laflamme generalised their numerical analysis to show instability of `magnetically-charged dilatonic' black branes~\cite{GL5} (see \cite{GL5,HorStr} for a discussion of these solutions). 
\item In 2000, Gubser--Mitra discussed the Gregory--Laflamme instability for general black branes. They conjectured that a necessary and sufficient condition for stability of the black brane spacetimes is thermodynamic stability of the corresponding black hole~\cite{GM1,GM2}.
\item In 2000, Reall~\cite{HarveyE}, with the aim of addressing the  Gubser--Mitra conjecture, explored further the relation between stability of black branes arising from static, spherically symmetric black holes and thermodynamic stability of those black holes. In particular, the work of Reall argues that there is a direct relation between the `negative mode' of the Euclidean Schwarzschild instanton solution (this mode was initially identified in a paper by Gross, Perry and Yaffe~\cite{GPY}) and the threshold of the Gregory--Laflamme instability. This idea was further explored in a work of Reall et al.~\cite{HarveyE2}, which extended the idea that `negative modes' of the Euclidean extension of a Myers--Perry black hole\footnote{The Myers--Perry black hole is the generalisation of the Kerr spacetime to higher dimensions, see~\cite{MP,H1} for details.} correspond to the threshold for the onset of a Gregory--Laflamme instability. 
\item In 2006, Hovdebo and Myers~\cite{GL4} used a different gauge (which was introduced in~\cite{Wiseman}) to reproduce the numerics from the original work of Gregory and Laflamme. This gauge choice will be called spherical gauge and will be adopted in the present work. This work discusses the presence of the Gregory--Laflamme instability for the `boosted' Schwarzschild black string and the Emparan--Reall black ring (for a discussion of this solution see~\cite{Harvey,H1,H2}).
\item In 2010, Lehner and Pretorius numerically simulated the non-linear evolution of the Gregory--Laflamme instability; see the review~\cite{NLGL} and references therein. 
\item In 2011, Figueras, Murata and Reall~\cite{HarveyBB} put forward the idea that a local Penrose inequality gives a stability criterion. Furthermore,~\cite{HarveyBB} showed numerically that this local Penrose inequality was violated for the Schwarzschild black string for a range of frequency parameters which closely match those found in the original work of Gregory--Laflamme~\cite{GL2}.
\item In 2012, Hollands and Wald~\cite{HolW} and, later in 2015, Prabu and Wald~\cite{PW} developed a general method applicable to many linear stability problems which encompasses the problem of linear stability of the Schwarzschild black string exterior $\mathcal{E}_A$. The papers~\cite{HolW} and~\cite{PW} are explored in detail in section~\ref{CEM}.
\end{enumerate}

\subsection{Statement of the Main Theorem}
The purpose of this paper is to give a direct, self-contained, elementary proof of the Gregory--Laflamme instability for the $5D$ Schwarzschild black string. \\

For the statement of the main theorem, one should have in mind the following Penrose diagram for the~$5D$ Schwarzschild black string spacetime:
\begin{center}
\begin{tikzpicture}
     
      \draw [thick] (3,6) -- (6,3);
      \draw [thick] (6,3) -- (9,6);
      \draw [dashed] (9,6) -- (12,3);
       \draw [dashed] (0,3) -- (3,6);

      \draw [decorate, decoration=snake, thick] (3,6) -- (9,6);

      \fill [decoration={zigzag}]
        [pattern=north east lines,thick] (6,3) -- (9,6)
          decorate { (9,6) -- (3,6) } -- (6,3);
            \fill [decoration={zigzag}]
        [pattern=north west lines,thick] (6,3) -- (9,6)
          decorate { (9,6) -- (3,6) } -- (6,3);

      \node [inner sep=0pt] (A) at (0,3) {};
      \node [inner sep=0pt] (B) at (6,3) {};
      \node [inner sep=0pt] (C) at (12,3) {};
      \draw [very thick, gray!75!black] (A) to [bend right=0] (B);
      \draw [very thick, gray!75!black] (B) to [bend left=0] (C);
      \draw [very thick, gray!75!black] (7,4) to [bend left=10] (12,3);
      
      \node (scriplus) at (10.9,4.75) {\large $\mathcal{I}^{+}_A$};
   
      \node [fill=white, inner sep=1.5pt] (bifurcate) at (6,3.4) {\large $\mathcal{S}$};

      \node (iplus) at (9.25,6.3) {\large $i^{+}_A$};
      \node (inaught) at (12.3,3) {\large $i^{0}_A$};
      \node [fill=white,inner sep=1pt](Hplus) at (8.25,4.6) {\large $\mathcal{H}^{+}_A$};   
      \node  (sig) at (9,3.5) {\large $\Sigma=\tilde{\Sigma}\times \mathrm{F}_1$};
     
      \node [fill=white, inner sep=1.5pt] (Int) at (6,4.75) {\large $\mathcal{B}$};
      \node [fill=white, inner sep=1pt] (Ext) at (9,4.75) {\large $\mathcal{E}_A$}  	;
      \node[mark size=2pt] at (0,3) {\pgfuseplotmark{*}};
      \node[mark size=2pt] at (12,3) {\pgfuseplotmark{*}};
      \node[mark size=2pt] at (3,6) {\pgfuseplotmark{*}};
      \node[mark size=2pt] at (9,6) {\pgfuseplotmark{*}};
        \node[mark size=2pt] at (7,4) {\pgfuseplotmark{*}};
           \node[mark size=2pt] at (6,3) {\pgfuseplotmark{*}};
      \end{tikzpicture}   
\end{center}
In the above Penrose diagram,~$\tilde{\Sigma}$ is a spacelike asymptotically flat hypersurface which intersects the future event horizon~$\mathcal{H}_A^+$ to the future of the bifurcation surface~$\mathcal{S}$ and $\mathrm{F}_1=\mathbb{R}$ or $\mathbb{S}^1_R$. One should note that~$\Sigma$ can be expressed as~$\Sigma=\{(t,r_*,\theta,\phi,z):t=f(r_*)\}$ such that~$f\sim 1$ for~$r_*\rightarrow \infty$. An explicit example would be a hypersurface of constant~$t_*$ where
\begin{eqnarray}
t_*=t+2M\log(r-2M). 
\end{eqnarray}
\begin{definition}[Mode Solution]
A solution of the linearised vacuum Einstein equation
\begin{eqnarray}
g^{cd}\nabla_c\nabla_dh_{ab}+\nabla_a\nabla_bh-2\nabla_{(b}\nabla^{c}h_{a)c}+2{{{R_{a}}^{c}}_b}^dh_{cd}=0\label{PertE}
\end{eqnarray} 
on the exterior~$\mathcal{E}_A$ of the Schwarzschild black string~$\mathrm{Sch}_4\times \mathbb{R}$ of the form
\begin{eqnarray}
h_{\alpha\beta}=e^{\mu t+i\omega z}H_{\alpha\beta}(r,\theta)\label{ansatz}
\end{eqnarray}
with~$\mu,\omega\in \mathbb{R}$ and~$(t,r,\theta,\phi,z)$ standard Schwarzschild coordinates will be called a mode solution of~(\ref{PertE}).
\end{definition}
A way of establishing the linear instability of an asymptotically flat black hole is exhibiting a mode solution of the linearised Einstein equation~(\ref{PertE}) which is smooth up to and including the future event horizon, decays towards spacelike infinity and such that $\mu>0$. 
\vspace{-1em}
\begin{adjustwidth}{18pt}{18pt}
\begin{theorem}[Gregory--Laflamme Instability]\label{RT}
For all~$|\omega|\in [\frac{3}{20M},\frac{8}{20M}]$, there exists a non-trivial mode solution~$h$ of the form~(\ref{ansatz}) to the linearised vacuum Einstein equation~(\ref{PertE}) on the exterior~$\mathcal{E}_A$ of the Schwarzschild black string background~$\mathrm{Sch}_4\times \mathbb{R}$ with~$\mu>\frac{1}{40\sqrt{10}M}>0$ and
\begin{eqnarray}
H_{\alpha\beta}(r,\theta)=\begin{pmatrix}
H_{tt}(r)&H_{tr}(r)&0&0&0\\
H_{tr}(r)&H_{rr}(r)&0&0&0\\
0&0&H_{\theta\theta}(r)&0&0\\
0&0&0&H_{\theta\theta}(r)\sin ^2\theta&0\\
0&0&0&0&0
\end{pmatrix}.\label{matrixconstruct}
\end{eqnarray}
The solution~$h$ extends regularly to~$\mathcal{H}_A^+$ and decays exponentially towards~$i_A^0$ and can thus be viewed as arising from regular initial data on a hypersurface~$\Sigma$ extending from the future event horizon~$\mathcal{H}^+_A$ to~$i^0_A$. In particular,~$h|_{\Sigma}$ and~$\nabla h|_{\Sigma}$ are smooth on~$\Sigma$. Moreover, the solution~$h$ is not pure gauge and can in fact be chosen such that the harmonic/transverse-traceless gauge conditions
\begin{eqnarray}
\begin{cases}
\nabla^ah_{ab}=0\label{HG}\\
g^{ab}h_{ab}=0
\end{cases}
\end{eqnarray}
are satisfied.\\

Suppose~$R>4M$, then one can choose~$\omega$ such that there exists an integer~$ n\in [\frac{3R}{20M},\frac{8R}{20M}]$ and therefore~$h$ induces a smooth solution on the exterior~$\mathcal{E}_A$ of the Schwarzschild black string~$\mathrm{Sch}_4\times\mathbb{S}^1_R$. Moreover, the initial data for such a mode solution on the exterior~$\mathcal{E}_A$ of~$\mathrm{Sch}_4\times\mathbb{S}^1_R$ has finite energy.\\
 
Hence, the exterior~$\mathcal{E}_A$ of the Schwarzschild black string~$\mathrm{Sch}_4\times \mathbb{R}$ or~$\mathrm{Sch}_4\times \mathbb{S}^1_R$ for~$R>4M$ is linearly unstable as a solution of the vacuum Einstein equation~(\ref{PertE}), and the instability can be realised as a mode instability in harmonic/transverse-traceless gauge~(\ref{HG}) which is not pure gauge. 
\end{theorem}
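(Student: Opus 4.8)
\emph{Strategy.} The plan is to descend, via the prescribed ansatz and a judicious gauge fixing, from the full linearised system to a single ordinary differential equation, to recast that equation as a Schr\"odinger eigenvalue problem, to produce a negative-energy bound state by the direct method of the calculus of variations, and finally to reconstruct the metric perturbation and put it into harmonic/transverse-traceless form. \textbf{Step 1 (reduction to a master ODE).} First I would insert the ansatz~(\ref{ansatz}) with the block form~(\ref{matrixconstruct}) into~(\ref{PertE}) and impose the spherical gauge of~\cite{Wiseman} reviewed in Section~\ref{spherical}. In this $SO(3)$-invariant sector the gauge conditions pin down $H_{tt},H_{tr},H_{rr},H_{\theta\theta}$ algebraically and through their first derivatives in terms of a single scalar function, and the components of~(\ref{PertE}) collapse to one second order linear ODE for that function on $r\in(2M,\infty)$, with coefficients built from $M$, $\mu$ and $\omega$.

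\textbf{Step 2 (Schr\"odinger form).} Next I would introduce a positive rescaling $\psi=w(r)\cdot(\text{master function})$ and a tortoise-type change of variable $x=x(r)$ so that the ODE becomes
\begin{equation}
-\psi''+V_\omega(x)\,\psi=-\mu^2\,\psi,\qquad x\in\mathbb{R},
\end{equation}
with $V_\omega$ smooth and bounded, tending to $\omega^2>0$ as $x\to+\infty$ (i.e.\ $r\to\infty$) and to $0$ as $x\to-\infty$ (i.e.\ $r\to 2M$), and depending on $\omega$ so that for suitable $|\omega|$ it develops a negative well --- the source of the instability. Examining the two solution branches at each end, at $r\to\infty$ the square-integrable branch $\psi\sim e^{-\sqrt{\omega^2+\mu^2}\,x}$ is precisely the one whose reconstructed perturbation decays exponentially toward $i^0_A$, while at the horizon the square-integrable branch $\psi\sim e^{\mu x}$ reconstructs to a perturbation behaving like $e^{\mu v}$ in ingoing Eddington--Finkelstein coordinates and hence extending smoothly across $\mathcal{H}^+_A$. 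Thus an admissible growing mode with rate $\mu>0$ is exactly an $L^2(\mathbb{R})$ eigenfunction of $-\partial_x^2+V_\omega$ with eigenvalue $-\mu^2<0$, and since the essential spectrum of $-\partial_x^2+V_\omega$ equals $[0,\infty)$ this happens precisely when that operator possesses a negative eigenvalue.

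\textbf{Step 3 (the variational estimate --- main obstacle).} I would then attach to $-\partial_x^2+V_\omega$ the Rayleigh quotient
\begin{equation}
Q_\omega[\psi]=\frac{\displaystyle\int_{\mathbb{R}}\big(|\psi'|^2+V_\omega\,|\psi|^2\big)\,dx}{\displaystyle\int_{\mathbb{R}}|\psi|^2\,dx},\qquad \psi\in H^1(\mathbb{R})\setminus\{0\},
\end{equation}
whose infimum is the bottom of the spectrum; a single configuration with $Q_\omega[\psi]<0$ forces $\inf Q_\omega<0$, which is then a genuine negative eigenvalue attained at some $\psi_\ast\in H^1(\mathbb{R})$ by the direct method. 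The crux is to make this quantitative and uniform over the frequency window: for every $|\omega|\in[\tfrac{3}{20M},\tfrac{8}{20M}]$ I would write down an explicit compactly supported trial function $\psi_0$ localised where $V_\omega$ is negative and estimate its Dirichlet energy against the negative part of $V_\omega$ to obtain $Q_\omega[\psi_0]<-\big(\tfrac{1}{40\sqrt{10}M}\big)^2$, whence $\mu^2=-\inf Q_\omega\geq -Q_\omega[\psi_0]>\big(\tfrac{1}{40\sqrt{10}M}\big)^2$, i.e.\ $\mu>\tfrac{1}{40\sqrt{10}M}$. Executing this explicit estimate --- bounding a one-dimensional Dirichlet integral by the depth and width of the well of $V_\omega$, uniformly on the stated interval of $\omega$ --- is the principal technical hurdle, and it is exactly where the numerical thresholds of~\cite{GL2,GL4} are turned into rigorous inequalities; the particular range of $\omega$ and the explicit constant for $\mu$ are artefacts of the chosen $\psi_0$ (the instability in fact persists as $\omega\to 0^+$, but with $\mu\to 0$).

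\textbf{Step 4 (reconstruction, non-triviality, gauge, compactification).} Given $\psi_\ast$ I would invert the rescaling and the change of variables, recover the master function and hence $H_{\alpha\beta}$ from the relations of Step~1, and check: smoothness on $\mathcal{E}_A$; exponential decay toward $i^0_A$ inherited from $\psi_\ast$; and, using the regular branch identified in Step~2 in the Eddington--Finkelstein chart, smooth extension to $\mathcal{H}^+_A$, so that $h|_{\Sigma}$ and $\nabla h|_{\Sigma}$ are smooth Cauchy data. That $h$ is not pure gauge I would deduce from the Hollands--Wald canonical energy of Section~\ref{CEM}: the growing mode inherits strictly negative canonical energy from the negative eigenvalue, whereas every pure gauge perturbation has vanishing canonical energy; alternatively one verifies directly that no vector field of mode form $e^{\mu t+i\omega z}\,\Xi(r,\theta)$ generates such an $h$. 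To reach~(\ref{HG}) I would add a pure gauge solution $\nabla_{(a}\xi_{b)}$ with $\xi_b=e^{\mu t+i\omega z}\Xi_b(r,\theta)$ chosen to solve the wave-type equation $g^{cd}\nabla_c\nabla_d\xi_b=-\nabla^a h_{ab}$ --- solvable within the same mode class since the source carries the same $(\mu,\omega)$ and the needed regularity and decay --- and then use the residual freedom (vector fields $\zeta$ of mode form with $g^{cd}\nabla_c\nabla_d\zeta_b=0$) to additionally enforce $g^{ab}h_{ab}=0$; the outcome is still a regular, non-pure-gauge mode solution. Finally, when $R>4M$ the interval $[\tfrac{3R}{20M},\tfrac{8R}{20M}]$ has length $\tfrac{R}{4M}>1$ and so contains an integer $n$; taking $\omega=n/R$ makes $e^{i\omega z}$ periodic with period $2\pi R$, so $h$ descends to the exterior of $\mathrm{Sch}_4\times\mathbb{S}^1_R$, and finiteness of the energy of the induced data on $\Sigma$ follows from the exponential decay in $r$, compactness of $\mathbb{S}^2\times\mathbb{S}^1_R$, and smoothness of the data across $\mathcal{H}^+_A$ to the future of $\mathcal{S}$.
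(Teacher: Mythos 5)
Your strategy is essentially the paper's: reduction in spherical gauge to a single master ODE, transformation to a Schr\"odinger eigenvalue problem, a direct variational argument with an explicit trial function giving a quantitative negative Rayleigh quotient uniformly over the frequency window, reconstruction of the metric perturbation, a gauge transformation to harmonic/transverse-traceless form, direct exclusion of pure gauge modes, and the pigeonhole argument for the circle of radius $R>4M$. The one substantive deviation is your proposed route to the gauge conditions~(\ref{HG}) (solving $\Box\xi_b=-\nabla^ah_{ab}$ plus residual freedom for tracelessness); the paper instead writes down an explicit mode vector field~(\ref{GT}) bringing the solution to the block form~(\ref{matrixconstruct}) and then observes (lemma~\ref{GML}) that \emph{any} mode solution of that block form with $\omega\neq 0$ automatically satisfies both conditions of~(\ref{HG}), which is cleaner and avoids a separate solvability discussion.

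There is, however, one step that fails as written. In Step~2 you assert that the horizon branch $\psi\sim e^{\mu x}$ ``reconstructs to a perturbation behaving like $e^{\mu v}$\dots and hence extending smoothly across $\mathcal{H}^+_A$'', and in Step~4 you check smooth extension \emph{before} performing the gauge transformation. In spherical gauge~(\ref{gaugechoice}) this is false: while $e^{\mu t}H_{zz}$ and $e^{\mu t}H_{vv}$ are regular for the $+$ branch, the reconstructed components $H_r$ and $H_v$ carry extra factors $(r-2M)^{-2}$ and $(r-2M)^{-1}$, so that $e^{\mu t}H'_{rr}$ and $e^{\mu t}H'_{vr}$ in ingoing Eddington--Finkelstein coordinates remain singular at $r=2M$ except for special values of $\mu$ (and the paper's appendix~\ref{bounds} shows the relevant $\mu$ are not special in this sense). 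The paper's remark following proposition~\ref{A1} states explicitly that \emph{neither} basis perturbation in spherical gauge extends smoothly across $\mathcal{H}^+_A$ in general. The correct order of operations, which is precisely the content of section~\ref{RC}, is to first add the pure gauge solution generated by~(\ref{GT}) --- the same one that produces the harmonic/transverse-traceless form --- and only then verify that the $+$ branch extends smoothly to the horizon while the $-$ branch does not. Your $H^1$ selection criterion happens to pick out the same branch, so the final object is correct, but the justification that this branch is the physically admissible one requires the gauge transformation first; misidentifying horizon regularity in an ill-adapted gauge is exactly the error of the 1988 stability claim that this paper is careful to avoid. (A smaller point: your appeal to the Hollands--Wald canonical energy to exclude pure gauge would import exactly the machinery the paper is designed to bypass; the direct verification you mention as an alternative, carried out in proposition~\ref{notpuregaugeS}, is what is needed for a self-contained proof.)
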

\end{adjustwidth}
\begin{remark}
Since the mode solution constructed in Theorem~\ref{RT} is not pure gauge, one expects that the above mode solution persists in any `good' gauge, not just~(\ref{HG}). It would be of interest to also formalise this in terms of a gauge invariant quantity. 
\end{remark}

\subsection{Difficulties and Main Ideas of the Proof} \label{DMIP}
It may seem natural to directly consider the problem in harmonic gauge since the equation of study~(\ref{PertE}) reduces to a tensorial wave equation with an inhomogeneity
\begin{eqnarray}
g^{cd}\nabla_{c}\nabla_dh_{ab}+2{{{R_{a}}^{c}}_b}^dh_{cd}=0.\label{PertE2}
\end{eqnarray}
The above equation~(\ref{PertE2}) results from the linearisation of the gauge reduced non-linear vacuum Einstein equation~(\ref{VE}) which is strongly hyperbolic and therefore well-posed. The equation~(\ref{PertE2}) reduces to a system of ODEs under the mode solution ansatz~(\ref{ansatz}) with~(\ref{matrixconstruct}). If one wishes to reduce this system to a single ODE in $H_{tt}$, $H_{tr}$, $H_{rr}$ or $H_{\theta\theta}$ one introduces a regular singular point in the range~$r\in (0,\infty)$. For certain ranges of~$\mu$ and~$\omega$, this value occurs on the exterior~$\mathcal{E}_A$, i.e., the regular singular point occurs in~$r\in (2M,\infty)$.  In particular, this regular singularity occurs on the exterior for the numerical values of $\omega$ and $\mu$ for which Gregory--Laflamme identified instability. In the original works of Gregory and Laflamme the decoupled ODE for $H_{tr}
$ was studied; see the works~\cite{GL1,GL2,GL3}. \\

It turns out that, in looking for an instability one can make a different gauge choice called spherical gauge. As shown in section~\ref{spherical}, the linearised vacuum Einstein equation~(\ref{PertE}) for a mode solution~(\ref{ansatz}) in spherical gauge can be reduced to a~$2^{\mathrm{nd}}$-order ODE of the form
\begin{eqnarray}
\frac{d^2u}{dr^2}+P_{\omega}(r)\frac{du}{dr}+Q_{\omega}(r)u=\frac{\mu^2}{D(r)^2}u,\qquad D=1-\frac{2M}{r},\label{E1}
\end{eqnarray}
where~$P_{\omega}(r)$ and~$Q_{\omega}(r)$ depend only on~$\omega$ and~$r$. This was originally introduced in~\cite{Wiseman} and avoids the issues of a regular singularity at some~$r\in (2M,\infty)$. However, in contrast to harmonic gauge, for this gauge choice, well-posedness is unclear. If one were trying to prove \emph{stability} then exhibiting a well-posed gauge would be key since well-posedness of the equations is essential for understanding general solutions.  For \textit{instability}, it turns out that it is sufficient to exhibit a mode solution of the non-gauge reduced equation~(\ref{PertE}) which is \textit{not pure gauge}. One expects then that such a mode solution will persist in all `good' gauges, of which harmonic gauge is an example. The discussion of pure gauge mode solutions in spherical gauge in section~\ref{notpuregaugeS} produces the restriction that a mode solution in spherical gauge is \emph{not} pure gauge if~$\omega\neq 0$. \\

An issue with spherical gauge is that mode solutions in the spherical gauge do not, in general, extend smoothly to the future event horizon~$\mathcal{H}^+_A$, even when they represent physically admissible solutions. However, as shown in section~\ref{RC}, one can detect what are the admissible boundary conditions at the future event horizon in spherical gauge by adding a pure gauge perturbation to the metric perturbation to try and construct a solution that indeed extends smoothly to $\mathcal{H}^+_A$. In fact, the pure gauge perturbation found is precisely one that transforms the metric perturbation to harmonic/transverse-traceless gauge~(\ref{HG}). Hence, after also identifying the admissible boundary conditions at spacelike infinity~$i^0_A$ in section~\ref{RC}, proving the existence of an unstable mode solution to the linearised vacuum Einstein equation~(\ref{PertE}) that is \emph{not} pure gauge is reduced to showing the existence of a solution to the ODE~(\ref{E1}) with~$\mu>0$ and~$\omega\neq 0$ which satisfies the admissible boundary conditions that are identified in this work. \\
 
In this paper, the ODE problem~(\ref{E1}) is approached from a direct variational point of view in section~\ref{BSU}. To run a direct variational argument, the solution~$u$ of ODE~(\ref{E1}) is rescaled and change of coordinates is applied. It is shown in section~\ref{AODE} that equation~(\ref{E1}) can be cast into a Schr\"odinger form
\begin{eqnarray}
-\Delta_{r_*}u+V_{\omega}(r_*)u=-\mu^2u,\qquad r_*=r+2M\log(r-2M)\label{E2}
\end{eqnarray}
with~$V_{\omega}$ independent of~$\mu$. The ODE~(\ref{E2}) can be interpreted as an eigenvalue problem for~$-\mu^2$; finding an eigenfunction, in a suitable space, with a negative eigenvalue will correspond to an instability. As shown in section~\ref{DVA}, this involves assigning the following energy functional to the Schr\"odinger operator on the left-hand side of~(\ref{E2}):
\begin{eqnarray}
E(u):=\langle\nabla_{r_*} u,\nabla_{r_*} u\rangle_{L^2(\mathbb{R})}+\langle V_{\omega} u,u\rangle_{L^2(\mathbb{R})}.\label{E3}
\end{eqnarray}
Using a suitably chosen test function, one can show that the infimum over functions in~$H^1(\mathbb{R})$ of this functional is negative for a range of~$\omega$. One then needs to argue that this infimum is attained as an eigenvalue, by showing this functional is lower semicontinuous and that the minimizer is non-trivial. The corresponding eigenfunction is then a weak solution in~$H^1(\mathbb{R})$ to the ODE~(\ref{E2}) with~$\mu>0$ for a range of~$\omega\in \mathbb{R}\setminus \{0\}$. Elementary one-dimensional elliptic regularity implies the solution is indeed smooth away from the future event horizon,~$\mathcal{H}^+_A$, and therefore corresponds to a classical solution of the problem~(\ref{E2}). Finally, the solution can be shown to satisfy the admissible boundary conditions by the condition that the solution lies in~${H}^1(\mathbb{R})$. \\

The paper is organised in the following manner. The remainder of the present section contains additional background on the Gregory--Laflamme instability. In section~\ref{LPT}, linear perturbation theory is reviewed and the linearised Einstein equation~(\ref{PertE}) is derived. In section~\ref{spherical}, the analysis in spherical gauge is presented. The decoupled ODE~(\ref{E1}) resulting from the linearised Einstein equation~(\ref{PertE}) is derived and it is established that the problem can be reduced to the existence of a solution to the decoupled ODE with~$\mu>0$ and~$\omega\neq 0$ satisfying admissible boundary conditions. In section~\ref{BSU}, the proof of the existence of such a solution is presented via the direct variational method.\\

 Appendix~\ref{ChR} contains a list of the Riemann tensor components and the Christoffel symbols for the Schwarzschild black string spacetime~$\mathrm{Sch}_4\times \mathbb{R}$ or~$\mathrm{Sch}_4\times \mathbb{S}^1_R$. Appendix~\ref{Sing} collects results on singularities in 2nd order ODE relevant for the discussion of the boundary conditions for the decoupled ODE~(\ref{E1}). Appendix~\ref{SchT} provides a method of transforming a 2nd order ODE into a Schr\"odinger equation. Appendix~\ref{analysis} collects some useful results from analysis that are needed in the proof of theorem~\ref{RT}. Appendix~\ref{bounds} compliments theorem~\ref{RT} with some stability results. 

\subsection{The Canonical Energy Method}\label{CEM}
 The reader should note that there are two papers~\cite{HolW,PW} concerning a very general class of spacetimes which are of relevence to the stability problem for the Schwarzschild black string. In particular, it follows from~\cite{HolW,PW} that there exists a linear perturbation of the Schwarzschild black string spacetime which is not pure gauge and grows exponentially in the Schwarzschild~$t$-coordinate. The following describes the results of these works. \\

In 2012, a paper of Hollands and Wald~\cite{HolW} gave a criterion for linear stability of stationary, axisymmetric, vacuum black holes and black branes in~$D\geq 4$ spacetime dimensions under axisymmetric perturbations. They define a quantity called the `canonical energy'~$\mathcal{E}$ of the perturbation which is an integral over an initial Cauchy surface of an expression quadratic in the perturbation. It can be related to thermodynamic quantities by 
\begin{eqnarray}
\mathcal{E}=\delta^2M-\sum_B\Omega_B\delta^2J_B-\frac{\kappa}{8\pi}\delta^2A,\label{can}
\end{eqnarray}
where~$M$ and~$J_{B}$ are the ADM mass and ADM angular momenta in the~$B^{\mathrm{th}}$ plane, and~$A$ is the cross-sectional area of the horizon. Note that the right-hand side of~(\ref{can}) refers to the second variation of thermodynamic quantities. It is remarkable that the combination $\mathcal{E}$ of these second variations is in fact determined by linear perturbations.   \\

The work~\cite{HolW} considers initial data for a perturbation of either a stationary, axisymmetric black hole or black brane with the following properties:
\begin{enumerate*}
\item[(i)] the linearised Hamiltonian and momentum constraints are satisfied,
\item[(ii)] that~$\delta M=0=\delta J_A$ and that the ADM momentum vanishes and 
\item[(iii)] specific gauge  conditions and finiteness/regularity conditions at the future horizon and infinity are satisfied.
\end{enumerate*}
In what follows, initial data satisfying (i)--(iii) will be referred to as admissible. Hollands and Wald show that if~$\mathcal{E}\geq 0$ for all admissible initial data, then one has mode stability. The work also establishes that if there exist admissible initial data such that~$\mathcal{E}<0$, then there exist admissible initial data for a perturbation which cannot approach a stationary perturbation at late times, i.e., one has failure of asymptotic stability. \\

For the Schwarzschild black hole, one can take initial data which corresponds simply to a change of the mass parameter $M\mapsto M+\alpha$ and therefore, by equation~(\ref{can}) and since the cross-sectional area of the horizon is given by $A=16\pi(M+\alpha)^2$, it follows that~$\mathcal{E}<0$. This is the `thermodynamic instability' of the Schwarzschild black hole. However, the initial data for a change of mass perturbation is manifestly \underline{not} admissible (the family of Schwarzschild black holes is, after all, dynamically stable).\\

The work of Hollands and Wald~\cite{HolW} also shows an additional result relevant specifically to the problem of stability of black \emph{branes}. Suppose there exist initial data for a perturbation of the ADM parameters of a black \emph{hole} such that $\mathcal{E}<0$. The work \cite{HolW} shows that, starting from such a perturbation of the black \emph{hole}, one can infer the existence of \underline{admissible} initial data, which depend on a parameter $l$, for a perturbation (which is not pure gauge) of the associated black \emph{brane} such that again $\mathcal{E}<0$. One should note that this argument does not give an explicit bound on $l$. This criterion formalised a conjecture by Gubser--Mitra that a necessary and sufficient condition for stability of the black brane spacetimes is thermodynamic stability of the corresponding black hole~\cite{GM1,GM2}. Since the change of mass perturbation of Schwarzschild black hole produces $\mathcal{E}<0$, this argument implies that the Schwarzschild black string fails to be asymptotically stable.  

\begin{remark} The reader should note that the Hollands and Wald paper~\cite{HolW} also showed that a necessary and sufficient condition for stability, with respect to axisymmetric perturbations, is that a `local Penrose inequality' is satisfied. The idea that a local Penrose inequality gives a stability criterion was originally discussed in the work of Figueras, Murata and Reall~\cite{HarveyBB} which gave strong evidence in favor of sufficiency of this condition for stability. Furthermore~\cite{HarveyBB} showed numerically that this local Penrose inequality was violated for the Schwarzschild black string for a range of frequency parameters which closely match those found in the original work of Gregory--Laflamme~\cite{GL2}.
\end{remark} 
The failure of asymptotic stability does not in itself imply that perturbations grow. However, the results of~\cite{HolW} were strengthened in 2015 by Prabhu and Wald~\cite{PW}. They showed, using some spectral theory, that if there exist admissible initial data for a perturbation such that~$\mathcal{E}<0$ for a black \emph{brane}, then there exists initially well-behaved perturbations that are not pure gauge and that grow exponentially in time. Having established that there exist admissible initial data for a perturbation such that~$\mathcal{E}<0$ for the Schwarzschild black string in~\cite{HolW}, existence of a linear perturbation which is not pure gauge and has exponential growth follows.\\

The present work differs from the above as it gives a direct, self-contained, elementary proof of the Gregory--Laflamme instability following the original formulation of~\cite{GL1,GL2,GL3,GL4} which is completely explicit. In particular, it gives an exponentially growing mode solution with an explicit growth rate, of the form defined by equations~(\ref{ansatz}) and~(\ref{matrixconstruct}) in harmonic/transverse-traceless gauge which is not pure gauge.
\begin{remark}
It would also be of interest to see if Theorem~\ref{RT} in the form stated could be inferred from the canonical energy method of Hollands, Wald and Prabu~\cite{HolW,PW} in an explicit way bypassing some of the functional calculus applied there. In particular, it would be interesting to explore the possible relation between the variational theory applied to $\mathcal{E}$ and that applied here (see section~\ref{DVA}). 
\end{remark}
\subsection{Outlook}
This paper brings together what is known about the Gregory--Laflamme instability as well as providing a direct elementary mathematically rigorous proof of its existence without the use of numerics and with an explicit bound on $\mu$ and $\omega$. Note that whilst only the~$5D$ Schwarzschild black string was considered here, the result of instability readily extends to higher dimensions with the replacement of~$\omega z$ in the exponential factor with~$\sum_i\omega_iz_i$.\\

Further directions of work could be to study the non-linear problem, the extension to~$\mathrm{Kerr}_4\times\mathbb{S}^1$ or~$\mathrm{Kerr}_4\times\mathbb{R}$, the extension to charged black branes of the work~\cite{GL5}, the extension to black rings or ultraspinning Myers--Perry black holes.
\subsection{Contextual Remarks}
\subsubsection{Motivation for the Study of Higher Dimensions}
The study of higher dimensions merits a few words of motivation, since, from a physical standpoint only~$3+1$ are perceived classically. First, from a purely mathematical perspective, it is of interest to see how general relativity differs in higher dimensions from the~$4D$ case. This throws light on how general Lorentzian manifolds obeying the vacuum Einstein equation~(\ref{VE}) behave. Secondly, the physics community is very interested in higher-dimensional gravity from the point of view of string theory. Understanding how general relativity behaves in higher dimensions is therefore of relevance to the low energy limit of string theory~\cite{H1}. 
\subsubsection{Some Differences in Higher Dimensions}
In higher dimensions, many results from~$4D$ general relativity no longer hold. As shown by Hawking, in~$4D$ the cross-sections of the event horizon of an asymptotically flat stationary black hole spacetime must be topologically~$\mathbb{S}^2$ (under the dominant energy condition)~\cite{HawkingEllis}. In higher dimensions, it is possible to construct explicit examples of black hole spacetimes with non-spherical cross-sectional horizon topology. For example, the black ring solution with horizon topology~$\mathbb{S}^2\times \mathbb{S}^1$~\cite{Harvey}. In higher dimensions, there also exists a generalized Kerr solution known as the Myers--Perry black hole~\cite{MP}, which has cross-sectional horizon topology~$\mathbb{S}^3$. Hawking's theorem has been generalized to higher dimensions~\cite{Galloway}, which shows that the horizon topology must be of positive scalar curvature. In 5D under the assumptions of stationarity, asymptotic flatness, two commuting axisymmetries and `rod structure' black holes are unique and further the horizon topology is either~$\mathbb{S}^3$,~$\mathbb{S}^1\times\mathbb{S}^2$ or lens space~\cite{Hollands}. \\

In~$4D$ it is conjectured that maximal developments of `generic' asymptotically flat initial data sets can asymptotically be described by a finite number of Kerr black holes. This `final state conjecture' cannot generalize immediately since there exist at least two distinct families of black hole solutions that can have the same mass and angular momentum: the Myers--Perry black hole and the black ring. Moreover, there exist distinct black ring solutions with the same mass and angular momentum~\cite{H1,H2}. The final state conjecture may need to be modified to include the property of stability.

\subsubsection{Related Works}
A few other works are of relevance to this discussion. The review paper~\cite{H1} and book chapter~\cite{H2} discuss the black ring solution~\cite{Harvey} in great detail.  This relates to the work presented here since the Gregory--Laflamme instability is often heuristically invoked when discussing higher-dimensional black hole solutions. In particular, if the black ring of study has a large radius and is sufficiently thin then it `looks like' a Schwarzschild black string and therefore would be susceptible to the Gregory--Laflamme instability. There has been heuristic and numerical results to give evidence to this claim~\cite{GL4,Jorge}. Finally, note that in 2018 \citep{GBenomio} produced the first mathematically rigorous result on the stability problem for the black ring spacetime.
\section*{Acknowlegdements}
First and foremost, I would like to express my gratitude to my supervisor Professor Mihalis Dafermos for introducing me to this project, for his guidance and his comments on this manuscript. In particular, thanks also go to Harvey Reall for his comments on this manuscript. Thanks also go to Claude Warnick, Pierre Raphael, Yakov Shlapentokh-Rothman, Christoph Kehle, Rita Teixeira da Costa and Renato Velozo-Ruiz for many useful discussions. Lastly, I'd like to thank Kasia for her support. 
\pagebreak
\section{Linear Perturbation Theory}\label{LPT}
This section provides a derivation and review of the linearised vacuum Einstein equation~(\ref{PertE}) around a general spacetime background metric~$(M,g)$ satisfying the vacuum Einstein equation~(\ref{VE}).
\subsection{Linearised Vacuum Einstein Equation}
Consider a Lorentzian manifold~$(M,g)$ with metric satisfying the vacuum Einstein equation 
\begin{eqnarray}
\mathrm{Ric}_{g}=0.
\end{eqnarray}
In this section a `perturbation' of the spacetime metric will be discussed. This will be represented by a new metric of the form~$g+\epsilon h$ with~$\epsilon>0$.~$h$ here is a symmetric bilinear form on the fibres of~$TM$. In the following, a series of results on how various quantities change to~$O(\epsilon)$ (the linear level) are derived. This will result in an expression for the Ricci tensor under such a perturbation to linear order.
\begin{remark}
An important point to note that indices are raised and lowered here with respect to~$g$.
\end{remark}

\noindent 
\begin{prop}[Change in the Ricci Tensor]\label{Ricci}
Consider a Lorentzian manifold~$(M,g)$. Suppose the metric~$\tilde{g}_{ab}=g_{ab}+\epsilon h_{ab}$ is a Lorentzian metric. Then the Ricci tensor,~$(\widetilde{\mathrm{Ric}_g})_{ab}$, of~$\tilde{g}_{ab}$ to~$O(\epsilon)$ is
\begin{eqnarray}
(\widetilde{\mathrm{Ric}_g})_{ab}= ({\mathrm{Ric}_g})_{ab}-\epsilon\frac{1}{2}\Delta_Lh_{ab},
\end{eqnarray}
where~$\Delta_L$ denotes the Lichnerowicz operator given by
\begin{eqnarray}
\Delta_Lh_{ab}=\Box_g h_{ab}+2{{{R_{a}}^{c}}_b}^dh_{cd}-2({\mathrm{Ric}_g})_{c(a}{h_{b)}}^{c}-2\nabla_{(a}\nabla^{c}h_{b)c}+\nabla_a\nabla_bh,
\end{eqnarray}
and~$h=g^{ab}h_{ab}$.
\end{prop}
\begin{proof}
Direct computation. 
\end{proof}
If one assumes~$g$ satisfies the vacuum Einstein equation~(\ref{VE}) and~$g+\epsilon h$ satisfies the vacuum Einstein equation~(\ref{VE}) to~$O(\epsilon)$ then it follows from proposition \ref{Ricci} that~$h$ must satisfy
\begin{eqnarray}
\Box_gh_{ab}+\nabla_a\nabla_bh-2\nabla_{(b}\nabla^{c}h_{a)c}+2{{{R_{a}}^{c}}_b}^dh_{cd}=0\label{LVEE}
\end{eqnarray}
to~$O(\epsilon)$. In what follows, equation~(\ref{LVEE}) will be called the linearised vacuum Einstein equation.
This will be the main equation of interest, with~$g$ the Schwarzschild black string metric
\begin{eqnarray}
g:=-D(r)dt\otimes dt+\frac{1}{D(r)}dr\otimes dr+r^2\Big(d\theta\otimes d\theta+\sin^2\theta d\phi\otimes d\phi\Big)+dz\otimes dz,\quad D(r)=1-\frac{2M}{r}.
\end{eqnarray}
\subsection{Pure Gauge Solutions in Linearised Theory}\label{GLT}
The vacuum Einstein equation~(\ref{VE}) is a system of second order quasilinear partial differential equations of the pair~$(M,g)$ which are invariant under the diffeomorphisms of~$M$. This means that for given initial data, the vacuum Einstein equation~(\ref{VE}) only determines a spacetime unique up to diffeomorphism, i.e., if there exists a diffeomorphism~$\Phi:M\rightarrow M$ then~$(M,g)$ and~$(M,\Phi_*(g))$ are equivalent solutions of the vacuum Einstein equation~(\ref{VE}). For constructing spacetimes, one often imposes conditions on local coordinates called a gauge choice. For linearised theory this can be formulated as follows. \\

Consider a Lorentzian manifold~$(M,\tilde{g}:=g+\epsilon h)$ with~$\epsilon>0$. Let~$\{\Phi_{\tau}\}$ be a~$1$-parameter family of diffeomorphisms generated by a vector field~$X$ and define~$\xi:=\tau X\in TM$. Then from the definition of the Lie derivative one has
\begin{eqnarray}
(\Phi_{\tau})_*(\tilde{g})=\tilde{g}+\mathcal{L}_{\xi}g+\mathcal{O}(\epsilon^2)
\end{eqnarray}
if one treats~$\tau=\mathcal{O}(\epsilon)$. So in the context of linearised theory, one considers two solutions to the linearised vacuum Einstein equation~(\ref{LVEE}),~$h_1$ and~$h_2$, as equivalent if
\begin{eqnarray}
h_2=h_1+\mathcal{L}_{\xi}g\Longleftrightarrow (h_2)_{ab}=(h_1)_{ab}+2\nabla_{(a}\xi_{b)}
\end{eqnarray}
for some vector field~$\xi\in TM$. 
\begin{definition}[Pure Gauge Solution]
Let $(M,g)$ be a vacuum spacetime. A solution~$h$ to the linearised vacuum Einstein equation~(\ref{LVEE}) will be called pure gauge if there exists a vector field~$\xi\in TM$ such that 
\begin{eqnarray}
h_{ab}=2\nabla_{(a}\xi_{b)}.
\end{eqnarray}
The notation~$h_{\mathrm{pg}}$ will be used to denote a pure gauge solution to the linearised vacuum Einstein equation~(\ref{LVEE}).
\end{definition}
Showing that a solution~$h$ to the linearised vacuum Einstein equation~(\ref{LVEE}) is \emph{not} pure gauge is tantamount to showing that~$h$ is not equivalent to the trivial solution. It is thus essential that the solution constructed in this paper \emph{not} be pure gauge. 
\pagebreak

\section{Analysis in Spherical Gauge}\label{spherical}
In this section a mode solution,~$h$, of the linearised vacuum Einstein equation~(\ref{LVEE}) on the exterior~$\mathcal{E}_A$ of the Schwarzschild black string spacetime~$\mathrm{Sch}_4\times \mathbb{R}$ or~$\mathrm{Sch}_4\times \mathbb{S}^1_R$ is considered. One makes the additional assumption that this mode solution preserves the spherical symmetry of~$\mathrm{Sch}_4$. So in particular the solution can be expressed in $(t,r,\theta,\phi,z)$ coordinates as
\begin{eqnarray}
h_{\alpha\beta}=e^{\mu t+i\omega z}\begin{pmatrix}
H_{tt}(r)&H_{tr}(r)&0&0&H_{tz}(r)\\
H_{tr}(r)&H_{rr}(r)&0&0&H_{rz}(r)\\
0&0&H_{\theta\theta}(r)&0&0\\
0&0&0&H_{\theta\theta}(r)\sin^2\theta&0\\
H_{tz}(r)&H_{rz}(r)&0&0&H_{zz}(r)
\end{pmatrix} \label{GP}
\end{eqnarray}
where $\alpha,\beta\in\{t,r,\theta,\phi,z\}$. Moreover, in search of instability, the most interesting case for the present work is~$\mu>0$. \\

This section contains the analysis of the ODEs resulting from the linearised Einstein vacuum equation~(\ref{LVEE}) for a mode solution of the form~(\ref{GP}) when it is expressed in spherical gauge. 
\begin{definition}[Spherical Gauge]
A mode solution~$h$ of the linearised vacuum Einstein equation~(\ref{LVEE}) on the exterior~$\mathcal{E}_A$ of the Schwarzschild black string spacetime~$\mathrm{Sch}_4\times\mathbb{R}$ is said to be in spherical gauge if it is of the form
\begin{eqnarray}
h_{\mu\nu}=e^{\mu t+i\omega z}\begin{pmatrix}
H_t(r)&\mu H_v(r)&0&0&0\\
\mu H_v(r)&H_r(r)&0&0&-i\omega H_v(r)\\
0&0&0&0&0\\
0&0&0&0&0\\
0&-i\omega H_v(r)&0&0&H_z(r)
\end{pmatrix}\label{gaugechoice}.
\end{eqnarray}
For the Schwarzschild black string spacetime~$\mathrm{Sch}_4\times\mathbb{S}^1_R$ one makes the same definition with the additional assumption that~$\omega R\in \mathbb{Z}$.
\end{definition}
\begin{remark}
The terminology `spherical gauge' is motivated by the fact that a mode solution of this form preserves the area of the spheres of the original spacetime. 
\end{remark}
First, it is shown in section \ref{Consistency} that one can impose the gauge consistently at the level of modes, i.e., if there is a mode solution of the form~(\ref{GP}), with~$\mu\neq 0$ and either~$\omega\neq 0$ or~$\frac{dH_{tz}}{dr}-H_{rz}=0$, then there is a mode
solution of the form~(\ref{gaugechoice}) differing from the original one
by a pure gauge solution. In the case where~$H_{tz}=0$,~$H_{rz}=0$ and~$H_{zz}=0$ this consistency condition is already implicit in~\cite{Wiseman,GL4}. In section \ref{HMW}, the original decoupling of the ODEs resulting from the linearised vacuum Einstein equation~(\ref{LVEE}) and the spherical gauge ansatz~(\ref{gaugechoice}) is reproduced from~\cite{GL4}. This decoupling results in a single ODE for the component~$H_z(r)$ in equation~(\ref{gaugechoice}). It is then shown, in section \ref{ExcludePG}, that if~$\omega\neq0$, then mode solutions in spherical gauge~(\ref{gaugechoice}) are not pure gauge. Next, in section \ref{RC}, the admissible boundary conditions for the solution at the future event horizon~$\mathcal{H}_A^+$ and finiteness conditions at spacelike infinity~$i_A^0$ are identified. Note this issue is subtle since, in general, \underline{both} `basis' elements for a mode solution~$h$ of the form~(\ref{gaugechoice}) are, in fact, singular at the future event horizon~$\mathcal{H}_A^+$ in this gauge. By adding a pure gauge perturbation, the admissible boundary conditions for the solution~$h$ in the form~(\ref{gaugechoice}) can be identified. Moreover, this pure gauge solution can be chosen such that, after adding it, the harmonic/transverse-traceless gauge~(\ref{HG}) conditions are satisfied.  Finally, in section \ref{Reduce}, the problem of constructing a linear mode instability of the form~(\ref{GP}) is reduced to showing there exists a solution to the decoupled ODE for~$H_z(r)$, with~$\mu>0$ and~$\omega\neq 0$, that satisfies the admissible boundary conditions at the future event horizon~$\mathcal{H}^+_A$ and spacelike infinity~$i_A^0$ (see proposition~\ref{Asymp}). 
\subsection{Consistency}\label{Consistency}
In the paper~\cite{GL4}, it is stated that any mode solution of the form in equation~(\ref{GP}) with~$H_{tz}=0$,~$H_{rz}=0$ and~$H_{zz}=0$ can be brought to the spherical gauge form~(\ref{gaugechoice}) by the addition of a pure gauge solution. Slightly more generally, one, in fact, has the following:
\begin{prop}[Consistency of the Spherical Gauge]\label{SphCon}Consider a mode solution~$h$ to the linearised Einstein vacuum equation~(\ref{LVEE}) on the exterior~$\mathcal{E}_A$ of the Schwarzschild black string spacetime~$\mathrm{Sch}_4\times \mathbb{R}$ or~$\mathrm{Sch}_4\times \mathbb{S}^1_R$ of the form~(\ref{GP}) with~$\mu\neq 0$. Further suppose that either~$\omega\neq 0$ or~$\frac{d}{dr}H_{tz}-\mu H_{rz}=0$.  Then there exists a pure gauge solution~$h_{\mathrm{pg}}$ such that~$h+h_{\mathrm{pg}}$ is of the form~(\ref{gaugechoice}). It is in this sense that the spherical gauge~(\ref{gaugechoice}) can be consistently imposed on the exterior~$\mathcal{E}_A$ of the Schwarzschild black string~$\mathrm{Sch}_4\times \mathbb{R}$ or~$\mathrm{Sch}_4\times \mathbb{S}^1_R$.
\end{prop}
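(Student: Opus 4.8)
<br>

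The plan is to produce the required pure gauge solution $h_{\mathrm{pg}} = 2\nabla_{(a}\xi_{b)}$ explicitly, by choosing a vector field $\xi$ adapted to the mode structure. Since $h$ has the form~(\ref{GP}) — spherically symmetric with dependence $e^{\mu t + i\omega z}$ on $t$ and $z$ — I would look for $\xi$ respecting the same symmetry, i.e.\ of the form $\xi = e^{\mu t + i\omega z}\big(a(r)\,\partial_t + b(r)\,\partial_r + c(r)\,\partial_z\big)$ with no angular components (so that the spheres are not distorted and the $\theta\theta$ and $\phi\phi$ structure is preserved). The deformation tensor $(\mathcal{L}_\xi g)_{ab} = 2\nabla_{(a}\xi_{b)}$ for the black string metric~(\ref{BSM}) can be computed component by component using the Christoffel symbols tabulated in Appendix~\ref{ChR}. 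The key observations are: (1) the $\theta\theta$ (hence $\phi\phi$) component of $\mathcal{L}_\xi g$ is proportional to $b(r)$ alone (it is $\tfrac{2}{r}r^2 D\, b$ up to the exponential, i.e.\ a radial rescaling of the sphere), so to stay in a gauge where the spheres keep their area — i.e.\ to kill the $H_{\theta\theta}$-type modification and match the $2\times 2$ block and $zz$ structure of~(\ref{gaugechoice}) where the $\theta\theta$ entry is zero — one is forced to take $b(r) = 0$; (2) with $b\equiv 0$, the remaining components $(\mathcal{L}_\xi g)_{tr}$, $(\mathcal{L}_\xi g)_{rz}$ involve $a'(r)$ and $c'(r)$ together with algebraic terms, while $(\mathcal{L}_\xi g)_{tt}$, $(\mathcal{L}_\xi g)_{tz}$, $(\mathcal{L}_\xi g)_{zz}$ are algebraic in $a,c$.

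Next I would read off what is needed. Spherical gauge~(\ref{gaugechoice}) demands three things of $h + h_{\mathrm{pg}}$: the $\theta\theta$ component vanishes (automatic once $b=0$, since then $h_{\theta\theta}$ is unchanged — but wait, the original $h$ in~(\ref{GP}) already has a nonzero $H_{\theta\theta}$, so actually one does need $b$ to cancel it; let me restructure). The correct reading: $b(r)$ must be chosen so that $H_{\theta\theta} + (\text{the }\theta\theta\text{ part of }\mathcal{L}_\xi g) = 0$, which is a first-order algebraic-then-ODE condition determining $b$ uniquely (the $\theta\theta$ component of $\mathcal{L}_\xi g$ is $\sim r D\, b$ plus possibly a $b'$ term depending on the Christoffels — in fact for these coordinates it is purely algebraic in $b$, so $b$ is fixed algebraically). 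Then $H_{tz}$ and $H_{rz}$ must be brought to the related pair $(\mu H_v, -i\omega H_v)$, while $H_{zz}$ is brought to the free function $H_z$ and the $tt$, $tr$, $rr$ entries to $H_t$, $\mu H_v$, $H_r$. The crucial structural feature of~(\ref{gaugechoice}) is the \emph{relations} among entries: $h_{tr} = \mu H_v$ and $h_{rz} = -i\omega H_v$ share the same $H_v$, and $h_{tr}/\mu = -h_{rz}/(i\omega)$. Matching these is where $a(r)$ and $c(r)$ get used: after fixing $b$, I would solve for $a$ and $c$ so that the transformed $tr$ and $rz$ components become proportional to $\mu$ and $-i\omega$ times a common function. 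This is a pair of first-order linear ODEs for $(a,c)$, generically solvable; the subtlety is the consistency of the overdetermined-looking system.

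The hard part — and this is exactly where the hypothesis $\omega\neq 0$ or $\tfrac{d}{dr}H_{tz} - \mu H_{rz} = 0$ enters — is showing the system for the gauge parameters is actually solvable, i.e.\ that the extra relations imposed by spherical gauge are compatible. When $\omega\neq 0$ one has genuine $z$-dependence and the $z$-component $c(r)$ of $\xi$ gives the needed extra freedom to independently adjust the $rz$ and $zz$ entries; the map from $(a,c)$ to the relevant components of $\mathcal{L}_\xi g$ is then surjective onto the space one needs. When $\omega = 0$ the function $c$ drops out of several components (it enters only through $i\omega c$), the system degenerates, and one needs the linearised Einstein equations~(\ref{LVEE}) satisfied by $h$ itself to guarantee the leftover obstruction $\tfrac{d}{dr}H_{tz} - \mu H_{rz}$ vanishes — which is precisely the stated alternative hypothesis. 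So the structure of the proof is: (i) write the general symmetry-adapted $\xi$; (ii) compute $\mathcal{L}_\xi g$ from the Appendix~\ref{ChR} data; (iii) fix $b$ algebraically from the $\theta\theta$ equation; (iv) solve the first-order ODE system for $a,c$ to match the $2\times2$ block and $zz$ entry, noting that the $H_v$-relation between the $tr$ and $rz$ entries is automatically inherited — here invoking either $\omega\neq0$ (so $c$ provides the needed degree of freedom) or the identity $\tfrac{d}{dr}H_{tz}-\mu H_{rz}=0$ (which makes the $\omega=0$ system consistent); (v) conclude $h + h_{\mathrm{pg}}$ has exactly the form~(\ref{gaugechoice}), with $H_v, H_t, H_r, H_z$ read off from the transformed components. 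I expect step (iv), the compatibility/solvability of the ODE system for the gauge parameters, to be the only real obstacle; everything else is bookkeeping with the Christoffel symbols.
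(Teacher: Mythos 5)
Your proposal follows essentially the same route as the paper: a symmetry-adapted gauge vector $\xi=e^{\mu t+i\omega z}\big(\zeta_t(r),\zeta_r(r),0,0,\zeta_z(r)\big)$, with the $\theta\theta$ component of $\mathcal{L}_\xi g$ fixing the radial function algebraically via $\zeta_r=-H_{\theta\theta}/(2(r-2M))$, the remaining two functions handling the $tz$, $tr$, $rz$ entries, and the hypothesis ($\omega\neq0$ or $\tfrac{d}{dr}H_{tz}-\mu H_{rz}=0$) arising exactly as the compatibility condition of the degenerate $\omega=0$ system. The only refinements in the paper's version are that no ODEs need to be integrated --- since $\zeta_z$ is determined algebraically from $\tilde h_{tz}=0$ (note the target form has $\tilde h_{tz}=0$, not $\mu H_v$, which sits in the $tr$ slot) and the $\zeta_t'$, $\zeta_z'$ terms cancel in the combined condition $-i\omega\tilde h_{tr}=\mu\tilde h_{rz}$, leaving a purely algebraic formula for $\zeta_t$ through the Christoffel term $\Gamma^t_{tr}\zeta_t$ --- so all three gauge functions are written down explicitly and the verification is immediate.
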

\begin{proof}
From section \ref{GLT}, a pure gauge solution is given by~$h_{\mathrm{pg}}=2\nabla_{(a}\xi_{b)}$ for a vector field~$\xi$. So,~$\tilde{h}_{ab}=h_{ab}+2\nabla_{(a}\xi_{b)}$ is the new mode solution. Consider a diffeomorphism generating vector field of the form~$ \xi=e^{\mu t+i\omega z}(\zeta_t(r),\zeta_r(r),0,0,\zeta_z(r))$. \\

If~$\omega\neq 0$, one can take
\begin{align}
\begin{split}
\zeta_t(r)&=\frac{i r(r-2M)}{2M\omega}\big(\partial_r H_{tz}(r)-\mu H_{rz}(r)\big)+\frac{r(r-2M)}{2M}H_{tr}(r)-\frac{r\mu }{2M}H_{\theta\theta}(r),\\
\zeta_r(r)&=-\frac{H_{\theta\theta}(r)}{2(r-2M)},\\
\zeta_z(r)&=-\frac{\big(H_{tz}(r)+i\omega\zeta_t(r)\big)}{\mu}
\end{split}
\end{align}
and immediately verify that~$\tilde{h}$ is of the form~(\ref{gaugechoice}). \\

If~$\frac{d}{dr}H_{tz}-\mu H_{rz}=0$, then one can take
\begin{eqnarray}
\zeta_t(r)=\frac{r(r-2M)}{2M}H_{tr}(r)-\frac{r\mu }{2M}H_{\theta\theta}(r),\quad
\zeta_r(r)=-\frac{H_{\theta\theta}(r)}{2(r-2M)},\quad
\zeta_z(r)=-\frac{\big(H_{tz}(r)+i\omega\zeta_t(r)\big)}{\mu}
\end{eqnarray}
and immediately verify that~$\tilde{h}$ is of the form~(\ref{gaugechoice}).
\end{proof}

\subsection{Reduction to ODE}\label{HMW}
Under a spherical gauge ansatz~(\ref{gaugechoice}) with~$\mu\neq 0$ and~$\omega\neq 0$, the linearised vacuum Einstein equation~(\ref{LVEE}) reduces to a system of coupled ODEs for the components~$H_t$,~$H_v$,~$H_r$ and~$H_z$. This system of ODEs can be decoupled to the single ODE for~$\mathfrak{h}:=H_z$
\begin{eqnarray}
\frac{d^2\mathfrak{h}}{dr^2}(r)+P_{\omega}(r)\frac{d\mathfrak{h}}{dr}(r)+\Big(Q_{\omega}(r)-\frac{\mu^2r^2}{(r-2M)^2}\Big)\mathfrak{h}(r)=0,\label{HZ}
\end{eqnarray}
with 
\begin{eqnarray}
P_{\omega}(r)&:=&\frac{12M}{r(\omega^2r^3+2M)}-\frac{5}{r}+\frac{1}{r-2M}\label{P},\\
Q_{\omega}(r)&:=&\frac{6M}{r^2(r-2M)}-\frac{r\omega^2}{r-2M}-\frac{12M^2}{r^2(r-2M)(\omega^2r^3+2M)}\label{Q}.
\end{eqnarray}
The following proposition establishes this decoupling of the linearised vacuum Einstein equation~(\ref{LVEE}) to the ODE~(\ref{HZ}) and the construction of a mode solution~$h$ in spherical gauge~(\ref{gaugechoice}) from a solution~$\mathfrak{h}$ to the ODE~(\ref{HZ}).
\begin{prop}\label{construct}
Given a mode solution~$h$ in spherical gauge~(\ref{gaugechoice}) with~$\mu\neq0$ and~$\omega\neq 0$ on the exterior~$\mathcal{E}_A$ of the Schwarzschild black string~$\mathrm{Sch}_4\times\mathbb{R}$ or~$\mathrm{Sch}_4\times\mathbb{S}^1_R$, the ODE~(\ref{HZ}) is satisfied by~$h_{zz}$. Conversely, given a~$C^2((2M,\infty))$ solution~$\mathfrak{h}(r)$ to the ODE~(\ref{HZ}) with~$\omega\neq 0$ and~$\mu\neq 0$, one can construct a mode solution~$h$ in spherical gauge~(\ref{gaugechoice}) to the linearised vacuuum Einstein equation~(\ref{LVEE}) on the exterior~$\mathcal{E}_A$ of the Schwarzschild black string~$\mathrm{Sch}_4\times\mathbb{R}$. If~$\omega R\in \mathbb{Z}$ then~$h$ induces a mode solution on~$\mathrm{Sch}_4\times \mathbb{S}^1_R$. 
\end{prop}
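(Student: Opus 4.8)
The plan is to prove both implications by substituting the spherical-gauge ansatz~(\ref{gaugechoice}) into the linearised vacuum Einstein equation~(\ref{LVEE}) and carrying out the algebraic decoupling explicitly, using the Christoffel symbols and Riemann tensor components tabulated in Appendix~\ref{ChR}.

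\textbf{Forward direction.} Because the ansatz~(\ref{gaugechoice}) is $SO(3)$-invariant and has the stated sparsity pattern, the only components of~(\ref{LVEE}) that are not identically zero are the $tt$, $tr$, $rr$, $\theta\theta$ (equivalently $\phi\phi=\sin^2\theta\,[\theta\theta]$), $zz$, $tz$ and $rz$ components; after factoring out $e^{\mu t+i\omega z}$ each reduces to a linear second-order ODE in $r$ for the four profiles $H_t,H_v,H_r,H_z$. I would then take suitable linear combinations of these equations together with their first $r$-derivatives to eliminate $H_t$, $H_v$ and $H_r$ — the elimination requires $\mu\neq 0$ and division by $\omega^2r^3+2M$, which is why $\omega\neq 0$ is assumed — leaving the single equation~(\ref{HZ}) for $\mathfrak{h}=H_z$, equivalently for $h_{zz}=e^{\mu t+i\omega z}\mathfrak{h}$, with $P_\omega,Q_\omega$ as in~(\ref{P})--(\ref{Q}). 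Note that the only denominators produced are $r$, $r-2M$ and $\omega^2r^3+2M$, and the last is strictly positive for $r>2M$ and $\omega\in\mathbb{R}$, so in this gauge the decoupled ODE has no singular point on $\mathcal{E}_A$ beyond the horizon itself.

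\textbf{Converse direction.} Given a $C^2((2M,\infty))$ solution $\mathfrak{h}$ of~(\ref{HZ}) (automatically $C^\infty$ on $(2M,\infty)$, since $P_\omega,Q_\omega$ are), I would set $H_z:=\mathfrak{h}$ and define $H_t,H_v,H_r$ by exactly the formulae produced in the elimination step above — expressions polynomial in $\mathfrak{h}$ and $\mathfrak{h}'$ with coefficients rational in $r,\mu,\omega$ (any occurrence of $\mathfrak{h}''$ removed using~(\ref{HZ}) itself). By construction $h$ has the form~(\ref{gaugechoice}) and is smooth on $\mathcal{E}_A$, and the constraint component equations together with the $zz$ equation of~(\ref{LVEE}) hold automatically: the former because the formulae for $H_t,H_v,H_r$ were extracted from them, the latter because, after substitution, it is precisely~(\ref{HZ}). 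It remains to check the three component equations not used in the construction ($tt$, $rr$, $\theta\theta$). This follows from the linearised contracted Bianchi identity: the linearised Einstein tensor $(\delta G)_{ab}$ of $h$ satisfies $\nabla^a(\delta G)_{ab}=0$ identically on the Ricci-flat background, and $(\delta G)_{ab}=0$ is equivalent to~(\ref{LVEE}) (tracing shows $\delta R=0$); for the spherically symmetric mode, the already-established vanishing of the constraint and $zz$ components turns this identity into a determined homogeneous first-order ODE system in $r$ for the residual quantities $\delta R_{tt},\delta R_{rr},\delta R_{\theta\theta}$, forcing them to vanish. Alternatively, and perhaps more transparently, these three equations can be verified by direct substitution, again using~(\ref{HZ}) to eliminate $\mathfrak{h}''$. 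Finally, if $\omega R\in\mathbb{Z}$ then $e^{i\omega z}$ is $2\pi R$-periodic, so $h$ descends to a smooth symmetric $2$-tensor on $\mathrm{Sch}_4\times\mathbb{S}^1_R$ which, the equation~(\ref{LVEE}) being local, solves it there as well.

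\textbf{Main obstacle.} The delicate part is the decoupling computation itself — verifying that the specific combination of component equations and their derivatives collapses to the exact coefficient functions~(\ref{P})--(\ref{Q}), and, in the converse, that the Bianchi (or the direct) argument genuinely closes, i.e.\ that the three ``used'' equations suffice to kill the entire residual system. This is a finite but error-prone calculation; the structural reason it works is that spherical symmetry leaves exactly four independent component equations once the three Bianchi relations in the $t$, $r$, $z$ directions are accounted for, matching the four profile functions $H_t,H_v,H_r,H_z$.
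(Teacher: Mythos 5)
Your proposal follows essentially the same route as the paper's proof: substituting the spherical-gauge ansatz~(\ref{gaugechoice}) into~(\ref{LVEE}) yields the coupled system~(\ref{eqs1})--(\ref{eqs7}); the equations that are algebraic or first order in $H_v$, $H_t$ are solved (using $\mu\neq 0$, $\omega\neq 0$ and the positivity of $\omega^2r^3+2M$ on $r>2M$) to give the expressions~(\ref{HR})--(\ref{HT}); substitution then produces the decoupled ODE~(\ref{HZ}) with coefficients~(\ref{P})--(\ref{Q}); and the converse reconstructs $H_t,H_v,H_r$ from those same formulae and checks that the full system is then satisfied. The one point where you genuinely diverge is the closure step in the converse, where you invoke the linearised contracted Bianchi identity. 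As stated this does not quite close: the identity $\nabla^a(\delta G)_{ab}=0$ yields a homogeneous \emph{first-order linear ODE system} in $r$ for the residual components, and such a system admits nontrivial solutions; to conclude vanishing you must either supply a vanishing condition at one point or verify that the only radial derivatives appearing are $\partial_r$ of the components $\delta G_{rb}$ that have already been shown to vanish, so that the system is in fact algebraic in the residuals. Your stated fallback --- direct substitution of~(\ref{HR})--(\ref{HT}) into the remaining component equations, using~(\ref{HZ}) to eliminate $\mathfrak{h}''$ --- is precisely what the paper does (it simply asserts that~(\ref{eqs1})--(\ref{eqs7}) are then all satisfied), so the proposal is sound provided that computation, which you correctly identify as the error-prone part, is carried out.
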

\begin{proof}
Let~$h$ be a mode solution in spherical gauge~(\ref{gaugechoice}) with~$\mu\in \mathbb{R}$ and~$\omega\in \mathbb{R}$ satisfying the linearised vacuum Einstein equation~(\ref{LVEE}) on the exterior~$\mathcal{E}_A$ of the Schwarzschild black string~$\mathrm{Sch}_4\times \mathbb{R}$ or~$\mathrm{Sch}_4\times \mathbb{S}^1_R$. Equivalently, the following system of ODE has to be satisfied:
\begin{eqnarray}
\mu\omega H_r&=&\frac{2M\mu\omega}{r(r-2M)}H_v,\label{eqs1}\\
\mu \omega^2H_v&=&\frac{ \mu }{2 }\frac{dH_z}{dr}-\frac{\mu (r-2 M)H_r  }{ r^2 }-\frac{ \mu M   H_z  }{2 r (r - 2 M) },\label{eqs2}\\
 \omega \frac{d H_t}{dr}&=&\frac{\omega M H_t }{
  r( r - 2M)} - \frac{\omega(r-2 M  ) (2 r-3 M ) H_r }{r^3} + 
 2 \mu^2 \omega H_v, \label{eqs3}\\
 H_t&=&\frac {(r - 2 M)  ( r ( \omega^2-\mu^2 )-2 M \omega^2 )  
        } { M }H_v  +\frac {(r - 2 M)^2(r+M)  } { M r^2}H_r \label{eqs4}\\
           &&+\frac {(r - 2 M)^3 } {2 M r}\frac{dH_r}{dr} -\frac {(r - 2 M)^2 } {2 M} \frac{dH_z }{dr}+\frac {r(r - 2 M)} {2 M }\frac{dH_t}{dr}, \nonumber\\
 \frac{d^2H_z}{dr^2}&=&\omega^2H_r-\frac{\omega^2r^2}{(r-2M)^2}H_t+\frac{(r-M)}{r(r-2M)}(4\omega^2H_v-2\frac{dH_z}{dr})+\frac{\mu^2r^2}{(r-2M)^2}H_z+2\omega^2\frac{dH_v}{dr},\label{eqs5}\\
    \frac{d^2H_z}{dr^2}&=&\frac{2M(2r-3M)}{r(r-2M)^3}H_t-\frac{\big(6M^2-(\mu^2+\omega^2)r^4+2Mr(\omega^2r^2-2)\big)}{r^3(r-2M)}H_r\label{eqs6}\\
    &&-\frac{2M(2M\omega^2+r(\mu^2-\omega^2))}{r(r-2M)^2}H_v+\frac{2r-3M}{r^2}\frac{dH_r}{dr}-\frac{2\mu^2 r+4M\omega^2-2\omega^2 r}{r-2M}\frac{dH_v}{dr}\nonumber\\
    &&-\frac{M}{r(r-2M)}\frac{dH_z}{dr}-\frac{M}{(r-2M)^2}\frac{dH_t}{dr}+\frac{r}{r-2M}\frac{d^2H_t}{dr^2},\nonumber
    \end{eqnarray}
    \begin{eqnarray}
    \frac{d^2H_t}{dr^2}&=&\frac{\omega^2r^4-2M\omega^2r^3-2M^2}{r^2(r-2M)^2}H_t-\Big(\mu^2+\frac{2M^2}{r^4}\Big)H_r-\frac{r\mu^2}{r-2M}H_z\label{eqs7}\\
    &&+\frac{4\mu^2r^2+4M^2\omega^2-2Mr(3\mu^2+\omega^2)}{r^2(r-2M)}H_v-\frac{2r-5M}{r(r-2M)}\frac{dH_t}{dr}-\frac{M(r-2M)}{r^3}\frac{dH_r}{dr}\nonumber\\
    &&+2\mu^2\frac{dH_v}{dr}+\frac{M}{r^2}\frac{dH_z}{dr}\nonumber.
\end{eqnarray}

Now, if~$\mu\neq 0$ and~$\omega\neq 0$, then from equations~(\ref{eqs1}) and~(\ref{eqs2}) one can find~$H_v$ in terms of~$H_z$ and~$\frac{dH_z}{dr}$. This can then be used in equation~(\ref{eqs3}) to give and equation for~$\frac{dH_t}{dr}$ in terms of~$H_t$,~$H_z$ and~$\frac{dH_z}{dr}$. All of these expressions can be used to express~$H_t$ in terms of~$H_z$,~$\frac{dH_z}{dr}$ and~$\frac{d^2H_z}{dr^2}$ via equation~(\ref{eqs4}). The resulting equations are
\begin{eqnarray}
H_r(r)&=&-\frac{M^2r}{(r-2M)^2(\omega^2r^2+2M)}H_z(r)+\frac{Mr^2}{(r-2M)(\omega^2r^2+2M)}\frac{dH_z}{dr},\label{HR}\\
H_v(r)&=&-\frac{Mr^2}{(2(r-2M)(\omega^2r^2+2M)}H_z(r)+\frac{r^3}{2(\omega^2r^2+2M)}\frac{dH_z}{dr},\label{HV}\\
H_t(r)&=&\frac{2M^2(r-3M)+M\omega^2r^3(2r-5M)-\omega^4r^6(r-2M)}{r(\omega^2r^3+2M)^2}H_z,\label{HT}\\
&&-\frac{2(r-2M)(M(r-4M)+(2r-5M)\omega^2r^3)}{(\omega^2r^3+2M)^2}\frac{dH_z}{dr}+\frac{r(r-2M)^2}{\omega^2r^3+2M}\frac{d^2H_z}{dr^2}\nonumber.
\end{eqnarray}
Finally, one can use the above expressions to obtain a decoupled ODE for~$\mathfrak{h}:=H_z$, namely
\begin{eqnarray}
\frac{d^2\mathfrak{h}}{dr^2}(r)+P_{\omega}(r)\frac{d\mathfrak{h}}{dr}(r)+\Big(Q_{\omega}(r)-\frac{\mu^2r^2}{(r-2M)^2}\Big)\mathfrak{h}(r)=0,
\end{eqnarray}
with 
\begin{eqnarray}
P_{\omega}(r)&:=&\frac{12M}{r(\omega^2r^3+2M)}-\frac{5}{r}+\frac{1}{r-2M},\\
Q_{\omega}(r)&:=&\frac{6M}{r^2(r-2M)}-\frac{r\omega^2}{r-2M}-\frac{12M^2}{r^2(r-2M)(\omega^2r^3+2M)}.
\end{eqnarray}

Conversely, given any~$C^2((2M,\infty))$ solution~$\mathfrak{h}(r)$ to the ODE~(\ref{HZ}) with~$\omega\neq 0$ and~$\mu\neq 0$ one can define~$H_z(r)=\mathfrak{h}(r)$. Since~$\omega\neq 0$, one can use equations~(\ref{HR})--(\ref{HT}) to construct~$H_{t}(r)$,~$H_{r}(r)$ and~$H_{v}(r)$. These then define the components of a mode solution~$h$ in spherical gauge~(\ref{gaugechoice}). Explicitly
\begin{eqnarray}
h=e^{\mu t+i\omega z}\begin{pmatrix}
H_t(r)&\mu H_v(r)&0&0&0\\
\mu H_v(r)&H_r(r)&0&0&-i\omega H_v(r)\\
0&0&0&0&0\\
0&0&0&0&0\\
0&-i\omega H_v(r)&0&0&H_z(r)
\end{pmatrix}.
\end{eqnarray}
If the ODE~(\ref{HZ}) is satisfied and~(\ref{HR})--(\ref{HT}) define~$H_r$,~$H_v$ and~$H_t$, then equations~(\ref{eqs1})--(\ref{eqs7}) are also satisfied. Therefore, a mode solution~$h$ constructed in this manner solves the linearised vacuum Einstein equation~(\ref{LVEE}) on the exterior~$\mathcal{E}_A$ of the Schwarzschild black string~$\mathrm{Sch}_4\times \mathbb{R}$. If~$\omega R\in \mathbb{Z}$ then this construction also gives a mode solution~$h$ which solves the linearised vacuuum Einstein equation~(\ref{LVEE}) on the exterior~$\mathcal{E}_A$ of the Schwarzschild black string~$\mathrm{Sch}_4\times \mathbb{S}^1_R$.
\end{proof}
\begin{remark}
If~$\omega=0$ and~$\mu\neq 0$, then one can add an additional pure gauge solution~$h_{\mathrm{pg}}$ to a mode solution~$h$ in spherical gauge~(\ref{gaugechoice}) such that~$h+h_{\mathrm{pg}}$ is also in spherical gauge~(\ref{gaugechoice}) with~$H_t(r)\equiv 0$. The relevant choice of pure gauge solution is given by~$(h_{\mathrm{pg}})_{ab}=2\nabla_{(a}\xi_{b)}$ with
\begin{eqnarray}
\xi=e^{\mu t}\Big(-\frac{H_t(r)}{2\mu},0,0,0,0\Big).
\end{eqnarray}

A mode solution~$h$ in spherical gauge with~$H_t(r)\equiv 0$ satisfying the linearised vacuum Einstein equation~(\ref{LVEE}) on the exterior~$\mathcal{E}_A$ of the Schwarzschild black string is then again equivalent to the system of ODE~(\ref{eqs1})-(\ref{eqs7}) (with~$\omega=0$ and~$H_t\equiv 0$) being satisfied. Equations~(\ref{eqs1}) and~(\ref{eqs3}) are automatically satisfied by $\omega=0$. The equation~(\ref{eqs5}) automatically gives the decoupled equation~(\ref{HZ}) for $H_z$. Then, equation~(\ref{eqs2}) can be solved for~$H_r$ in terms of~$H_z$ and~$\frac{dH_z}{dr}$. Equation~(\ref{eqs4}) can be used to solve for~$H_v$ in terms of~$H_z$ and~$\frac{dH_z}{dr}$. At this point, the equations~(\ref{eqs6}) and~(\ref{eqs7}) are automatically satisfied. Therefore, again a solution to the ODE~(\ref{HZ}) induces a mode solution in spherical gauge with $H_t=0$.
\end{remark}
\subsection{Excluding Pure Gauge Perturbations}\label{ExcludePG}
\noindent This section contains a proof that if~$\omega\neq 0$ then a non-trivial mode solution~$h$ of the form~(\ref{gaugechoice}) cannot be a pure gauge solution. More precisely, one has the following proposition: 
\begin{prop}\label{notpuregaugeS}
Suppose~$\omega\neq 0$ and~$\mu\in \mathbb{R}$. A non-trivial mode solution~$h$ in spherical gauge~(\ref{gaugechoice}) of the linearised vacuum Einstein equation~(\ref{LVEE}) on the exterior~$\mathcal{E}_A$ of the Schwarzschild black string~$\mathrm{Sch}_4\times \mathbb{R}$ or~$\mathrm{Sch}_4\times\mathbb{S}^1_R$ cannot be pure gauge. 
\end{prop}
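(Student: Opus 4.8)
The plan is to argue by contradiction: suppose $h$ in spherical gauge~(\ref{gaugechoice}) is pure gauge, so $h_{ab} = 2\nabla_{(a}\xi_b)$ for some vector field $\xi$, and derive that $h$ must in fact be trivial. The first step is to pin down the form of $\xi$. Since $h$ is a mode solution with $t,z$-dependence $e^{\mu t + i\omega z}$ and is spherically symmetric, one expects that $\xi$ may be taken of the form $\xi = e^{\mu t + i\omega z}(\zeta_t(r),\zeta_r(r),\zeta_\theta(r,\theta),\zeta_\phi(r,\theta),\zeta_z(r))$; in fact the vanishing of the angular blocks of $h$ in~(\ref{gaugechoice}) — note that $h_{\theta\theta}$, $h_{\phi\phi}$, and all mixed angular components are zero, in contrast to the general form~(\ref{GP}) — should force $\zeta_\theta = \zeta_\phi = 0$ and leave $\zeta_t,\zeta_r,\zeta_z$ functions of $r$ only. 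I would extract this by writing out $2\nabla_{(a}\xi_b)$ using the Christoffel symbols from Appendix~\ref{ChR} and matching against the zero entries of~(\ref{gaugechoice}): the $(\theta\theta)$ and $(\phi\phi)$ equations give $\zeta_r$ proportional to $\partial_\theta \zeta_\theta$-type terms which must vanish, and the $(r\theta)$, $(t\theta)$, $(z\theta)$ equations then kill the remaining angular freedom.

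The second step is to use the three genuinely nonzero off-diagonal constraints that are \emph{forced to vanish} in spherical gauge: the $(t\theta)$, $(r\theta)$ components being zero, together with the fact that in~(\ref{gaugechoice}) the $(tz)$ component is zero while the $(rz)$ component equals $-i\omega H_v$ and the $(tr)$ component equals $\mu H_v$ with the \emph{same} function $H_v$. Writing these as equations in $\zeta_t,\zeta_r,\zeta_z$, one gets a small linear ODE system. The key consequence I would aim for: the condition $(h_{pg})_{tz} = 0$ reads schematically $\mu \zeta_z + i\omega \zeta_t = 0$ (the $\nabla_{(t}\xi_{z)}$ term has no Christoffel contributions since $g_{tz}=0$ and the metric is $t$- and $z$-independent), while the relation between $(h_{pg})_{rz}$, $(h_{pg})_{tr}$ forcing them both to be multiples of a single $H_v$ gives another relation. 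Combined with the $(zz)$-equation $(h_{pg})_{zz} = 2\nabla_{(z}\xi_{z)} = 2i\omega\,\xi_z \cdot(\text{coefficient})$, i.e. $H_z \propto i\omega \zeta_z$, and the $(\theta\theta)$-equation already handled, one should be able to solve the system completely and show every component of $h$ is determined by, say, $\zeta_z$ alone, and then that consistency (e.g. the $(tt)$ or $(rr)$ equation, or simply re-imposing that the result solve~(\ref{LVEE})) forces $\zeta_z \equiv 0$, hence $h \equiv 0$.

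Concretely I would: (1) compute $2\nabla_{(a}\xi_b)$ for $\xi = e^{\mu t+i\omega z}(\zeta_t,\zeta_r,0,0,\zeta_z)$ component by component; (2) from the vanishing $(\theta\theta)$, $(t\theta)$, $(r\theta)$, $(z\theta)$ entries conclude the ansatz for $\xi$ is consistent and the angular components of $\xi$ vanish; (3) from $(h_{pg})_{tz} = 0$ get $\mu\zeta_z = -i\omega\zeta_t$; (4) from requiring $(h_{pg})_{tr} = \mu H_v$ and $(h_{pg})_{rz} = -i\omega H_v$ with a common $H_v$, extract a relation tying $\zeta_t' $ and $\zeta_z'$ to $\zeta_r$, and combined with (3) reduce everything to a single unknown; (5) feed the remaining diagonal equations ($tt$, $rr$, $zz$) — which give $H_t, H_r, H_z$ explicitly in terms of that single unknown and its derivatives — and check that the overdetermined system forces the unknown, and hence $h$, to vanish. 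The main obstacle I anticipate is step~(4)--(5): the algebra of matching the pure-gauge expression to the precise structure of~(\ref{gaugechoice}) (one function $H_v$ controlling two slots, the $tz$ slot vanishing) is where a naive count of equations versus unknowns is tight, and one must be careful that $\omega \neq 0$ is used essentially — indeed the proposition is false for $\omega = 0$, where the $t$-only gauge vector of the Remark following Proposition~\ref{construct} produces a nontrivial pure gauge solution in spherical gauge. So the $\omega\neq 0$ hypothesis must enter precisely at the point where one divides by $\omega$ to solve for $\zeta_t$ or $\zeta_z$ in terms of the other, collapsing the solution space to the trivial one.
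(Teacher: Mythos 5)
Your proposal has a genuine gap at its very first step: you assume that the gauge vector field can be taken in mode form, $\xi = e^{\mu t + i\omega z}(\zeta_t(r),\zeta_r(r),\zeta_\theta,\zeta_\phi,\zeta_z(r))$, with the angular components subsequently forced to vanish. Nothing in the hypothesis justifies this. The statement to be proved is that $h$ is not equal to $2\nabla_{(a}\xi_{b)}$ for \emph{any} vector field $\xi$, and the $\xi$ realising a given pure gauge perturbation is far from unique (it is ambiguous up to addition of an arbitrary Killing field of the background: for instance $\xi = \partial_\phi$ has a non-vanishing angular component yet generates the zero perturbation, and $\xi=\partial_t$, $\partial_z$ are likewise available). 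So one cannot ``match against the zero entries of~(\ref{gaugechoice})'' to conclude $\zeta_\theta=\zeta_\phi=0$, nor assert that $\xi$ inherits the $e^{\mu t + i\omega z}$ dependence of $h$; establishing a mode reduction for $\xi$ (say by applying $\mathcal{L}_{\partial_t}-\mu$ and $\mathcal{L}_{\partial_z}-i\omega$ and analysing the resulting Killing-field obstructions) is itself a nontrivial lemma that your proposal does not supply. Since the whole content of the proposition is to exclude \emph{every} candidate $\xi$, restricting at the outset to mode-form $\xi$ begs the question.

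The paper's proof is structured precisely to avoid this issue: it takes a completely arbitrary $\xi$ and repeatedly differentiates the component identities in $z$. From $h_{z\theta}=0$ and $h_{\theta\theta}=0$ it deduces $\partial_z^2\xi_\theta = 0$ and then $(r-2M)\partial_z^2\xi_r=0$, so that applying $\partial_z^2$ to the $h_{rr}$ identity yields $-\omega^2 H_r e^{\mu t+i\omega z}=0$ \emph{without ever knowing $\xi$ explicitly}; this is exactly where $\omega\neq 0$ enters, giving $H_r\equiv 0$, after which the field equations~(\ref{eqs1})--(\ref{eqs4}) propagate the vanishing to $H_v$, $H_z$ and $H_t$. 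Your endgame (steps (3)--(5)) is in the same spirit as this last stage and would go through once $H_r=H_v=0$ is known, but as written the argument does not get there: the reduction of $\xi$ to mode form with trivial angular part is asserted, not proved, and it is the crux of the matter.
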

\begin{proof}
Assume~$\omega\neq 0$. If $h$ is pure gauge, it must be possible to write~$h_{ab}=2\nabla_{(a}\xi_{b)}$ for some vector field~$\xi$. Therefore one finds
\begin{eqnarray}
h_{zz}&=&H_z(r)e^{\mu t+i\omega z} \implies2\partial_z\xi_z=H_z(r)e^{\mu t+i\omega z}\label{geq1},\\
h_{z\theta}&=&0\implies\partial_{\theta}\xi_z+\partial_z\xi_{\theta}=0\label{geq2}.
\end{eqnarray}
Applying~$\partial_z$ to the equation~(\ref{geq2}), using that partial derivatives commute and that, from equation~(\ref{geq1}),~$\partial_z\xi_z$ clearly does not depend on~$\theta$ gives 
\begin{eqnarray}
\partial_z^2\xi_{\theta}=0.
\end{eqnarray}

Next,~$h_{\theta\theta}=0$ implies
\begin{eqnarray}
\partial_{\theta}\xi_{\theta}-\Gamma_{\theta\theta}^r\xi_r=0\label{geq4}.
\end{eqnarray}
From appendix \ref{ChR},~$\Gamma_{\theta\theta}^r=(r-2M)$. Hence, taking two derivatives of~(\ref{geq4}) in the~$z$ direction and using~$\partial_z^2\xi_{\theta}=0$ gives
\begin{eqnarray}
\Gamma_{\theta\theta}^r\partial_z^2\xi_r=(r-2M)\partial_z^2\xi_r=0.
\end{eqnarray}
Therefore,~$\partial_z^2\xi_r=0$ on~$\mathcal{E}_A$.\\

From the~$h_{rr}$ component one has, 
\begin{eqnarray}
2\partial_r\xi_r-2\Gamma_{rr}^r\xi_r=2\partial_r\xi_r+\frac{2M}{r(r-2M)}\xi_r=H_re^{\mu t+i\omega z}\label{geq3}
\end{eqnarray}
where one uses~$\Gamma^r_{rr}=-\frac{M}{r(r-2M)}$ from appendix \ref{ChR}. Taking the second~$z$ derivative of equation~(\ref{geq3}) and using~$\partial_z^2\xi_r=0$ on~$\mathcal{E}_A$ gives
\begin{eqnarray}
\omega^2H_r=0\quad \text{on}\quad \mathcal{E}_A.
\end{eqnarray}
Since~$\omega\neq 0$, this implies~$H_r\equiv 0$ on the exterior~$\mathcal{E}_A$. Since~$\omega\neq 0$, equation~(\ref{eqs1}) implies that if~$H_r=0$ on~$\mathcal{E}_A$, then~$H_v\equiv 0$ on~$\mathcal{E}_A$. Using the~$h_{zr}$ component, one finds
\begin{eqnarray}
\partial_z\xi_r+\partial_r\xi_z=-i\omega H_ve^{\mu t+i\omega z}=0\implies \partial_r(\partial_z\xi_z)=0\implies \frac{dH_z}{dr}=0\quad \text{on}\quad \mathcal{E}_A,
\end{eqnarray}
where one uses the identity~$\partial_z^2\xi_r=0$ on~$\mathcal{E}_A$ in the first implication and that~$\partial_z\xi_z=H_z(r)e^{\mu t+i\omega z}$ in the second implication. The linearised vacuum Einstein equation~(\ref{LVEE}) under this ansatz (equation~(\ref{eqs2})) then implies~$H_z\equiv 0$ on~$\mathcal{E}_A$ and therefore, from equations~(\ref{eqs3}) and~(\ref{eqs4}),~$H_t\equiv 0$ on~$\mathcal{E}_A$. Hence,~$h\equiv 0$ on~$\mathcal{E}_A$. 
\end{proof}
\subsection{Admissible Boundary Conditions}\label{RC}
One can construct two sets of distinguished solutions to the ODE~(\ref{HZ}) associated to the ``end points" of the interval~$(2M,\infty)$. Note that, by definition~\ref{RegSingDef} from appendix~\ref{Sing},~$r=2M$ is a regular singularity, as $2M$ is not an ordinary point and
\begin{eqnarray}
(r-2M)P_{\omega}(r)\quad \text{ and }\quad (r-2M)^2\Big(Q_{\omega}(r)-\frac{\mu^2r^2}{(r-2M)^2}\Big)
\end{eqnarray}
are analytic near~$r=2M$. By definition~\ref{irregsinginf}, the ODE~(\ref{HZ}) has an irregular singularity at infinity, since there exist convergent series expansions
\begin{eqnarray}
P_{\omega}(r)=\sum_{n=0}^{\infty}\frac{p_n}{z^n}\quad \text{ and }\quad Q_{\omega}(r)-\frac{\mu^2r^2}{(r-2M)^2}=\sum_{n=0}^{\infty}\frac{q_n}{z^n}
\end{eqnarray}
in a neighbourhood of infinity with~$p_0=0$,~$p_1=-4$,~$q_0=-(\omega^2+\mu^2)$ and~$q_1=-2M(\omega^2+2\mu^2)$. The asymptotic analysis of the ODEs around these points is examined in the following two subsections. This analysis of the ODE~(\ref{HZ}) near~$r=2M$ and~$r=\infty$ will lead to the identification of the admissible boundary conditions for a mode solution~$h$ in spherical gauge~(\ref{gaugechoice}) of the linearised Einstein vacuum equation~(\ref{LVEE}).  
\subsubsection{The Future Event Horizon~$\mathcal{H}^+_A$}\label{asymhor}
The goal of this section is to identify the admissible boundary conditions for a solution~$\mathfrak{h}$ to the ODE~(\ref{HZ}) near~$r=2M$. This requires one to understand the behaviour near~$r=2M$ of the mode solution~$h$ in spherical gauge~(\ref{gaugechoice}) of the linearised vacuum Einstein equation~(\ref{LVEE}) which results (through the construction in proposition~\ref{construct}) from~$\mathfrak{h}$.\\

Associated with the future event horizon $\mathcal{H}^+_A$, there exists a basis~$\mathfrak{h}^{2M,\pm}$ for solutions to the ODE~(\ref{HZ}). From~$\mathfrak{h}^{2M,\pm}$ one can examine the behavior near~$r=2M$ of any mode solution~$h$ in spherical gauge~(\ref{gaugechoice}) with~$\mu\neq 0$ and~$\omega\neq 0$ through proposition \ref{construct}. A mode solution~$h$ in spherical gauge~(\ref{gaugechoice}) with~$\mu> 0$ and~$\omega\neq 0$ constructed from~$\mathfrak{h}^{2M,-}$ never smoothly extends to the future event horizon. A mode solution~$h$ in spherical gauge~(\ref{gaugechoice}) with~$\mu>0$ and~$\omega\neq 0$ constructed from~$\mathfrak{h}^{2M,+}$ also does not smoothly extend to the future event horizon unless~$\mu$ satisfies particular conditions. However, if~$h$ is a mode solution in spherical gauge~(\ref{gaugechoice}) with~$\mu>0$ and~$\omega\neq 0$ constructed from~$\mathfrak{h}^{2M,+}$ then, after the addition of a pure gauge solution~$h_{\mathrm{pg}}$, it turns out one can smoothly extend~$h+h_{\mathrm{pg}}$ to the future event horizon. Moreover, it will be shown that~$h+h_{\mathrm{pg}}$ satisfies the harmonic/transverse-traceless gauge~(\ref{HG}) conditions. This will be the content of proposition~\ref{A1}. \\

First, some preliminaries. The coefficients of the ODE~(\ref{HZ}) extend meromorphically to~$r=2M$ and behave asymptotically as
\begin{eqnarray}
P_{\omega}(r)=\frac{1}{r-2M}+\mathcal{O}(1)\qquad Q_{\omega}(r)-\frac{\mu^2r^2}{(r-2M)^2}=-\frac{4M^2\mu^2}{(r-2M)^2}+\mathcal{O}\Big(\frac{1}{r-2M}\Big).
\end{eqnarray}
So one may write the ODE~(\ref{HZ}) as
\begin{eqnarray}
\frac{d^2\mathfrak{h}}{dr^2}+\Big(\frac{1}{r-2M}+\mathcal{O}(1)\Big)\frac{d\mathfrak{h}}{dr}-\Big(\frac{4M^2\mu^2}{(r-2M)^2}+\mathcal{O}\Big(\frac{1}{r-2M}\Big)\Big)\mathfrak{h}&=&0.\label{AsympODE}
\end{eqnarray}

From appendix~\ref{Sing}, the indicial equation associated to the ODE~(\ref{AsympODE}) is
\begin{eqnarray}
I(\alpha)=\alpha^2-4M^2\mu^2,
\end{eqnarray}
which has roots
\begin{eqnarray}
\alpha_{\pm}:=\pm 2M\mu.
\end{eqnarray}
If~$\alpha_+-\alpha_-=4M\mu \not\in \mathbb{Z}$, then one can deduce from theorem~\ref{ThE1} the asymptotic basis for solutions near~$r=2M$. If~$\alpha_+-\alpha_-=4M\mu \in \mathbb{Z}$ then the relevant result for the asymptotic basis of solutions is theorem~\ref{ThE2}. Combining the results of theorems \ref{ThE1} and \ref{ThE2} one has the following basis for solutions for~$\mu>0$
\begin{eqnarray}
\mathfrak{h}^{2M,+}(r)&:=&(r-2M)^{2M\mu}\sum_{n=0}^{\infty}a^{+}_n(r-2M)^n,\label{AHP}\\
\mathfrak{h}^{2M,-}(r)&:=&\begin{cases}
\sum_{n=0}^{\infty}a^-_{n}(r-2M)^{n-2M\mu}+C_N\mathfrak{h}^{2M,+}(r)\ln(r-2M)\quad\text{if}\quad 4M\mu=N\in\mathbb{Z}_{>0}\label{AHM}\\
(r-2M)^{-2M\mu}\sum_{n=0}^{\infty}a^{-}_n(r-2M)^{n}\qquad \text{otherwise},
\end{cases}
\end{eqnarray}
where the coefficents~$a_n^+$,~$a_n^-$ and the anomalous term~$C_N$ can be calculated recursively (see theorems \ref{ThE1} and \ref{ThE2}). A general solution to the ODE~(\ref{HZ}) will be of the form
\begin{eqnarray}
\mathfrak{h}(r)=k_1\mathfrak{h}^{2M,+}(r)+k_2\mathfrak{h}^{2M,-}(r)\label{horasymbasis}
\end{eqnarray}
with~$ k_1,k_2\in \mathbb{R}$. \\

If~$4M\mu$ is \emph{not} an integer or~$4M\mu$ is an integer and~$C_N=0$, then the asymptotic basis for solutions for~$\mu>0$ reduces to
\begin{eqnarray}
\mathfrak{h}^{2M,+}(r)&=&(r-2M)^{2M\mu}\sum_{n=0}^{\infty}a^{+}_n(r-2M)^n,\label{basis1}\\
\mathfrak{h}^{2M,-}(r)&=&(r-2M)^{-2M\mu}\sum_{n=0}^{\infty}a^{-}_n(r-2M)^{n}.\label{basis2}
\end{eqnarray}
In equations~(\ref{basis1}) and~(\ref{basis2}), the first order coefficients of the basis can be calculated to be
\begin{eqnarray}
a_1^{\pm}=\frac{\pm \mu(20M^2\omega^2-1)+4M(\mu^2-\omega^2+4M^2\mu^2\omega^2+2M^2\omega^4)}{(1\pm 4M\mu)(4M^2\omega^2+1)}.
\label{coeff1}
\end{eqnarray}

The main result of this section is the following:
\begin{prop}\label{A1}
Suppose~$\mu>0$,~$\omega\neq 0$ and let~$\mathfrak{h}$ be a solution to the ODE~(\ref{HZ}). Let~$h$ be the mode solution on the exterior~$\mathcal{E}_A$ of the Schwarzschild black string~$\mathrm{Sch}_4\times\mathbb{R}$ constructed from~$H_z=\mathfrak{h}$ in proposition~\ref{construct}. Then there exists a pure gauge solution~$h_{\mathrm{pg}}$ such that~$h+h_{\mathrm{pg}}$ extends to a smooth solution of the linearised vacuum Einstein equation~(\ref{LVEE}) at the future event horizon~$\mathcal{H}_A^+$ if~$k_2=0$, where $k_2$ is defined in equation~(\ref{horasymbasis}). Moreover,~$h+h_{\mathrm{pg}}$ can be chosen to satisfy the harmonic/transverse-traceless gauge~(\ref{HG}) conditions. 
\end{prop}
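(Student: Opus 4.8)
The plan is to pass to ingoing Eddington--Finkelstein coordinates $(v,r,\theta,\phi,z)$ with $v=t+r_*$, in which smoothness of a mode perturbation across $\mathcal{H}^+_A$ is just smoothness of its components at $r=2M$, and then to produce a single mode gauge vector $\xi=e^{\mu t+i\omega z}\big(\zeta_t(r),\zeta_r(r),0,0,\zeta_z(r)\big)$ whose associated pure gauge solution $h_{\mathrm{pg}}=2\nabla_{(a}\xi_{b)}$ does two things simultaneously: it puts $h$ into harmonic/transverse-traceless gauge $(\ref{HG})$, and it cancels the singular behaviour of $h$ at $r=2M$. Because $g$ is Ricci-flat, $h_{\mathrm{pg}}$ automatically solves $(\ref{LVEE})$ (as in Section~\ref{GLT}), so $h+h_{\mathrm{pg}}$ is again a solution on $\mathcal{E}_A$; only the two displayed properties require proof. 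Note that even with $\xi_\theta=\xi_\phi=0$ the transformation reshuffles the $zz$- and $rz$-content of spherical gauge $(\ref{gaugechoice})$ into an $h_{\theta\theta}$-component (through the Christoffel symbol $\Gamma^r_{\theta\theta}$), which is why the output can have the form $(\ref{matrixconstruct})$.

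First I would record the coordinate change: from $dt=dv-D^{-1}\,dr$ one gets $h^{\mathrm{EF}}_{vv}=h_{tt}$, $h^{\mathrm{EF}}_{vr}=h_{tr}-D^{-1}h_{tt}$, $h^{\mathrm{EF}}_{rr}=D^{-2}h_{tt}-2D^{-1}h_{tr}+h_{rr}$, $h^{\mathrm{EF}}_{rz}=h_{rz}$ and $h^{\mathrm{EF}}_{vz}=h_{tz}=0$, with the angular block unchanged. Substituting the formulas $(\ref{HR})$--$(\ref{HT})$ of Proposition~\ref{construct} for $H_r,H_v,H_t$ in terms of $\mathfrak{h}$, and using $e^{\mu t}=e^{\mu v}(r-2M)^{-2M\mu}\psi(r)$ with $\psi$ smooth and nonvanishing near $r=2M$, I would read off the leading behaviour of each component of $h^{\mathrm{EF}}$ as $r\to 2M$ in the case $k_2=0$, i.e. $\mathfrak{h}=\mathfrak{h}^{2M,+}\sim(r-2M)^{2M\mu}$. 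Then $e^{\mu t}\mathfrak{h}$ is smooth, but $H_r$ and $H_v$ pick up extra inverse powers of $r-2M$ from the factors $D^{-1},D^{-2}$ in $(\ref{HR})$--$(\ref{HT})$ and in the coordinate change, so $h^{\mathrm{EF}}_{vr}$, $h^{\mathrm{EF}}_{rr}$ and $h^{\mathrm{EF}}_{rz}$ are in general singular at $r=2M$ even when $k_2=0$. This identifies exactly which singular terms $h_{\mathrm{pg}}$ must remove.

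Next I would impose $(\ref{HG})$ on $h+h_{\mathrm{pg}}$. For the mode ansatz this becomes a system of ODEs in $r$ for $\zeta_t,\zeta_r,\zeta_z$ (the $\theta$ and $\phi$ components of the gauge conditions holding automatically by spherical symmetry): a first-order equation from the trace condition together with the $b\in\{t,r,z\}$ components of $\nabla^a(h+h_{\mathrm{pg}})_{ab}=0$. The system is overdetermined but consistent; one checks the redundancy using that $h$ solves $(\ref{LVEE})$ and that $\mathfrak{h}$ solves $(\ref{HZ})$ --- these are precisely the identities that make the surplus equations follow from the rest. Solving explicitly (substituting $(\ref{HR})$--$(\ref{HT})$ and repeatedly lowering orders with $(\ref{HZ})$) expresses $\zeta_t,\zeta_r,\zeta_z$ as rational functions of $r$ times $\mathfrak{h},\mathfrak{h}',\mathfrak{h}''$, modulo a residual gauge freedom (homogeneous solutions, i.e. mode vectors $\xi$ with $\Box\xi_b=0$). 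I would then fix this residual freedom by matching against the singular terms found in the previous step, so that $h_{\mathrm{pg}}$ cancels them.

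Finally I would verify smoothness. With $\mathfrak{h}=\mathfrak{h}^{2M,+}=(r-2M)^{2M\mu}\sum_n a_n^+(r-2M)^n$, the solved $\zeta_t,\zeta_r,\zeta_z$ behave near $r=2M$ like $(r-2M)^{2M\mu}$ times a Laurent tail (with, in the resonant case $4M\mu\in\mathbb{Z}_{>0}$, a possible $\ln(r-2M)$); in EF coordinates the factor $e^{\mu t}$ supplies the compensating $(r-2M)^{-2M\mu}$, and one checks term by term that, for the chosen residual parameter, every component of $(h+h_{\mathrm{pg}})^{\mathrm{EF}}$ is an honest smooth (indeed analytic) function of $r$ at $2M$: the inverse powers coming from $H_r,H_v$ cancel against those produced by $h_{\mathrm{pg}}$, and the logarithm, when present, cancels because it enters only through the anomalous coefficient $C_N$ of $(\ref{AHM})$, which the condition $k_2=0$ suppresses. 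Equivalently, smoothness is exactly the statement that $h+h_{\mathrm{pg}}$ lies in the regular branch of the Fuchsian system at $r=2M$, whose indicial exponents differ by $4M\mu$, and the residual harmonic-gauge freedom is precisely what is needed to select that branch. Since $h+h_{\mathrm{pg}}$ solves $(\ref{LVEE})$ on $\mathcal{E}_A$ and extends to a smooth symmetric two-tensor on a neighbourhood of $\mathcal{H}^+_A\setminus\mathcal{S}$, it is a smooth solution there satisfying $(\ref{HG})$. The main obstacle is reconciling the two demands on $\xi$: a priori ``harmonic/transverse-traceless'' and ``smooth at $\mathcal{H}^+_A$'' overdetermine $\xi$, and the reason they can be met at once is that the resonance structure of $(\ref{HZ})$ at $r=2M$ matches the residual gauge freedom available within $(\ref{HG})$ --- with the most delicate bookkeeping concentrated in the integer case $4M\mu\in\mathbb{Z}$.
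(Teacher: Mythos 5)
Your overall strategy is sound and you correctly identify the essential facts: regularity must be checked in ingoing Eddington--Finkelstein coordinates, the components $h^{\mathrm{EF}}_{vr}$, $h^{\mathrm{EF}}_{rr}$, $h^{\mathrm{EF}}_{rz}$ are the singular ones even when $k_2=0$, the $\theta\theta$-component is generated through $\Gamma^r_{\theta\theta}$, and the logarithm of the resonant case lives only in $\mathfrak{h}^{2M,-}$ and so is killed by $k_2=0$. However, your route through the key step differs from the paper's, and the difference matters. You propose to impose the harmonic/transverse-traceless conditions as an overdetermined system of ODEs for $\zeta_t,\zeta_r,\zeta_z$, solve it, and then use residual gauge freedom (mode vectors with $\Box\xi=0$) to cancel the singular terms at $\mathcal{H}^+_A$. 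The paper instead writes down a single explicit algebraic gauge vector, $\xi=e^{\mu t+i\omega z}\big(-\tfrac{\mu H_z}{2\omega^2},\,\tfrac{2\omega^2H_v-H_z'}{2\omega^2},\,0,\,0,\,\tfrac{iH_z}{2\omega}\big)$, determined uniquely by the purely algebraic requirement that $h+h_{\mathrm{pg}}$ have the restricted component form~(\ref{matrixconstruct}); it then invokes Lemma~\ref{GML}, which shows that for $\omega\neq 0$ the $z$-row of the linearised Einstein equation \emph{forces} any solution of that form to satisfy~(\ref{HG}). So no differential system for $\xi$ is ever solved, and --- crucially --- no residual tuning is available or needed: regularity at $\mathcal{H}^+_A$ falls out of the explicit expansion of the transformed components once $k_2=0$. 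This exposes the one soft spot in your argument: your claim that the residual harmonic-gauge freedom is ``precisely what is needed to select the regular branch'' is asserted, not proved, and it is in fact a red herring --- the gauge vector that works is rigid (pinned by the algebraic form condition), and regularity is a consequence of the structure of~(\ref{HR})--(\ref{HT}) rather than of a judicious choice within a gauge orbit. If you carried out your explicit computation you would discover this, so the plan would succeed, but as written the reconciliation of ``TT gauge'' with ``smooth at the horizon'' --- which you yourself flag as the main obstacle --- is left to an unverified matching argument, whereas the paper's Lemma~\ref{GML} dissolves the obstacle entirely.
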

\begin{remark}
To determine admissible boundary conditions of~$\mathfrak{h}$ at $r=2M$ it is essential that one works in coordinates that extend regularly across this hypersurface. A good choice is ingoing Eddington--Finkelstein coordinates~$(v,r,\theta,\phi,z)$ defined by
\begin{eqnarray}
v=t+r_*(r),\qquad r_*(r)=r+2M\log|r-2M|.
\end{eqnarray}
Also note that for the boundary conditions to be admissible, one needs to consider all components of the mode solution~$h$ constructed from $\mathfrak{h}$ via proposition~\ref{construct}. These remarks will be implemented in the proof of proposition~\ref{A1}. 
\end{remark}
Before proving the statement of proposition~\ref{A1} it is useful to prove the following lemma:
\begin{lemma}\label{GML}
Let $h$ be a mode solution of the linearised vacuum Einstein equation~(\ref{LVEE}) of the form
\begin{eqnarray}
h_{\alpha\beta}=e^{\mu t+i\omega z}\begin{pmatrix}
H_{tt}(r)&H_{tr}(r)&0&0&0\\
H_{tr}(r)&H_{rr}(r)&0&0&0\\
0&0&H_{\theta\theta}(r)&0&0\\
0&0&0&H_{\theta\theta}(r)\sin^2\theta&0\\
0&0&0&0&0
\end{pmatrix}.\label{TTA}
\end{eqnarray}
Then $h$ satisfies the harmonic/transverse-traceless gauge conditions:
\begin{eqnarray}
\begin{cases}
\nabla^ah_{ab}=0\\
g^{ab}h_{ab}=0
\end{cases}
\end{eqnarray}
if $\omega\neq 0$.
\end{lemma}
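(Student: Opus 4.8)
The plan is to verify both gauge conditions by direct computation, exploiting the fact that the metric perturbation has the special block form~(\ref{TTA}) and that the only nonzero components $H_{tt},H_{tr},H_{rr},H_{\theta\theta}$ depend on $r$ alone, while the $t$- and $z$-dependence is entirely carried by the overall factor $e^{\mu t+i\omega z}$. First I would treat the trace condition, which is the easier of the two: writing $g^{ab}h_{ab} = g^{tt}h_{tt}+2g^{tr}h_{tr}+g^{rr}h_{rr}+g^{\theta\theta}h_{\theta\theta}+g^{\phi\phi}h_{\phi\phi}$ and using $g^{tt}=-1/D$, $g^{tr}=0$, $g^{rr}=D$, $g^{\theta\theta}=1/r^2$, $g^{\phi\phi}=1/(r^2\sin^2\theta)$ together with $h_{\phi\phi}=h_{\theta\theta}\sin^2\theta$, one gets $g^{ab}h_{ab}=e^{\mu t+i\omega z}\big(-\tfrac{1}{D}H_{tt}+D H_{rr}+\tfrac{2}{r^2}H_{\theta\theta}\big)$. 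This is not automatically zero from the ansatz alone, so I would need to extract this as an algebraic consequence of the linearised equation~(\ref{LVEE}) (equivalently, of the reduced ODE system); that is, the tracelessness must follow from $h$ actually solving~(\ref{LVEE}) with $\omega\neq 0$, presumably by taking a suitable combination of the component equations, or by noting that the trace $h$ itself satisfies a homogeneous second-order ODE whose only solution with admissible behaviour and $\omega\neq0$ is $h\equiv 0$. I expect this is where the hypothesis $\omega\neq 0$ enters.

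Next I would verify the divergence condition $\nabla^a h_{ab}=0$, i.e. $g^{ac}\nabla_c h_{ab}=0$ for each free index $b\in\{t,r,\theta,\phi,z\}$. Expanding, $\nabla^a h_{ab} = g^{ac}\big(\partial_c h_{ab} - \Gamma^d_{ca}h_{db} - \Gamma^d_{cb}h_{ad}\big)$. The $b=z$ component: since no $h_{az}$ is nonzero and no Christoffel symbol connects $z$ to the $(t,r,\theta,\phi)$ block (the metric is a direct product with the flat $z$-line), every term vanishes identically, giving $\nabla^a h_{az}=0$ for free. The $b=\phi$ component should vanish by the spherical symmetry of the ansatz (the $\phi$-dependence is trivial and the $\theta\theta$/$\phi\phi$ structure is exactly that of $r^2\mathring{\slashed\gamma}_2$), so this reduces to checking that $\partial_\theta(\cdots)$ and the angular Christoffel contractions cancel, which they do because $h_{\theta\theta}$ and $h_{\phi\phi}=h_{\theta\theta}\sin^2\theta$ mimic a conformal-to-round structure. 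That leaves the $b=t$ and $b=r$ components as the substantive ones: these will produce first-order ODE relations among $H_{tt},H_{tr},H_{rr},H_{\theta\theta}$ (with coefficients involving $D$, $r$, $M$, $\mu$, $\omega$), and I would need to show these relations are implied by the linearised Einstein equations. The cleanest route is probably to invoke the contracted Bianchi identity: for any $h$ solving~(\ref{LVEE}), the "harmonic-gauge operator" $\nabla^a h_{ab}-\tfrac12\nabla_b h$ satisfies a homogeneous wave equation (this is the standard fact underlying well-posedness of the gauge-reduced system), so if I can show this quantity has admissible fall-off and the right regularity, and that for $\omega\neq 0$ the only such solution of that wave equation is zero, then combined with the already-established tracelessness $h=0$ I conclude $\nabla^a h_{ab}=0$.

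The main obstacle, as I see it, is establishing tracelessness: the ansatz~(\ref{TTA}) has enough free functions that neither gauge condition is automatic, so the content of the lemma is really that solving~(\ref{LVEE}) in this restricted form \emph{forces} the harmonic/traceless conditions when $\omega\neq 0$. I would handle this by first showing that the trace $h:=g^{ab}h_{ab}$, which must be of the form $e^{\mu t+i\omega z}T(r)$, satisfies — as a consequence of the trace of equation~(\ref{LVEE}) and the vacuum condition $R=0$ — a homogeneous second-order linear ODE in $r$; then I would argue that for $\omega\neq 0$ the indicial/asymptotic analysis (analogous to that already carried out for $\mathfrak h$ near $r=2M$ and near $r=\infty$ in section~\ref{RC}) shows that a solution which is to come from a genuine perturbation built via proposition~\ref{construct} must have $T\equiv 0$; this uses $\omega\neq0$ crucially because at $\omega=0$ the trivial constant-mass-type perturbation provides a nonzero-trace counterexample. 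Once $h\equiv0$, the reduced equation~(\ref{LVEE}) becomes $\Box_g h_{ab}-2\nabla_{(a}\nabla^c h_{b)c}+2R_{a}{}^{c}{}_b{}^d h_{cd}=0$, and the same Bianchi-identity argument applied to $V_b:=\nabla^a h_{ab}$ (now $V_b$ solves a homogeneous wave equation) together with the asymptotics kills $V_b$, completing the proof. A purely computational alternative — just plugging the ansatz into~(\ref{LVEE}), reading off the component ODEs, and checking the four combinations $-\tfrac1D H_{tt}+D H_{rr}+\tfrac{2}{r^2}H_{\theta\theta}=0$, $\nabla^a h_{at}=0$, $\nabla^a h_{ar}=0$ directly — is available as a fallback but is the kind of routine grind I would relegate to appendix~\ref{ChR}'s Christoffel list.
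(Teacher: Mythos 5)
Your strategy has a genuine gap: both of your main routes (deriving a homogeneous ODE for the trace and killing it by asymptotic analysis; applying the contracted Bianchi identity to $V_b:=\nabla^a h_{ab}-\tfrac12\nabla_b h$ and killing it by ``admissible fall-off'') require decay or admissibility hypotheses that the lemma does not assume. Lemma~\ref{GML} is stated for \emph{any} mode solution of the form~(\ref{TTA}) with $\omega\neq 0$, with no boundary conditions at $\mathcal{H}^+_A$ or $i^0_A$, and indeed the paper's proof uses none. For a mode ansatz the gauge-propagation wave equation for $V_b$ reduces to a homogeneous ODE system in $r$, which has plenty of nontrivial solutions; without boundary conditions you cannot conclude $V_b\equiv 0$, so your argument as proposed establishes only a weaker, conditional statement. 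Your fallback (grinding out the component ODEs and checking the algebraic combinations directly) would work but is not what you carried out.

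The idea you are missing is that the gauge conditions fall out \emph{locally and algebraically} from the $a=z$ components of the linearised Einstein equation itself. Because the ansatz has $h_{z\beta}=0$ for all $\beta$, and because the black string is a metric product with the flat $z$-line so that every Christoffel symbol and every Riemann component carrying a $z$ index vanishes, one finds $\nabla_\gamma\nabla_\delta h_{z\beta}=0$ identically. Hence in equation~(\ref{LVEE}) with $a=z$ the terms $\Box_g h_{zb}$, $\nabla_b\nabla^c h_{zc}$ and $2R_z{}^c{}_b{}^d h_{cd}$ all drop, leaving
\begin{eqnarray}
\nabla_z\big(\nabla_\beta h-\nabla^\gamma h_{\beta\gamma}\big)=i\omega\,\big(\nabla_\beta h-\nabla^\gamma h_{\beta\gamma}\big)=0,
\end{eqnarray}
since $\nabla_z=\partial_z$ acts on the mode as multiplication by $i\omega$. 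For $\omega\neq 0$ this is exactly the harmonic gauge condition; taking its $\beta=z$ component gives $\partial_z h=i\omega h=0$, hence tracelessness, and substituting back gives transversality. This is where $\omega\neq 0$ enters --- not through any asymptotic or uniqueness argument --- and it disposes of the ``main obstacle'' you identified without ever needing the reduced ODE system, the Bianchi identity, or boundary conditions.
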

\begin{proof}
First, it is instructive to write out explicit expressions for $\nabla_{c}h_{ab}$ and $\nabla_{c}\nabla_{d}h_{ab}$ in coordinates. These are the following:
\begin{align}
\nabla_{\gamma}h_{\alpha\beta}=&\partial_{\gamma}h_{\alpha\beta}-\Gamma^{\lambda}_{\gamma\alpha}h_{\lambda\beta}-\Gamma^{\lambda}_{\gamma\beta}h_{\alpha\lambda}\label{nab}\\
\nabla_{\gamma}\nabla_{\delta}h_{\alpha\beta}=&\partial_{\gamma}(\partial_{\delta}h_{\alpha\beta}-\Gamma^{\lambda}_{\delta\alpha}h_{\lambda\beta}-\Gamma^{\lambda}_{\delta\beta}h_{\alpha\lambda})-\Gamma^{\mu}_{\gamma\delta}(\partial_{\mu}h_{\alpha\beta}-\Gamma^{\lambda}_{\mu\alpha}h_{\lambda\beta}-\Gamma^{\lambda}_{\mu\beta}h_{\alpha\lambda})\label{nabnab}\\
&-\Gamma^{\mu}_{\gamma\alpha}(\partial_{\delta}h_{\mu\beta}-\Gamma^{\lambda}_{\delta\mu}h_{\lambda\beta}-\Gamma^{\lambda}_{\delta\beta}h_{\mu\lambda})-\Gamma^{\mu}_{\gamma\beta}(\partial_{\delta}h_{\alpha\mu}-\Gamma^{\lambda}_{\delta\alpha}h_{\lambda\mu}-\Gamma^{\lambda}_{\delta\mu}h_{\alpha\lambda})\nonumber.
\end{align}
If one takes the ansatz~(\ref{TTA}) and $\alpha=z$ in equation~(\ref{nabnab}), then, since $h_{z\beta}=0$ for all $ \beta\in \{t,r,\theta,\phi,z\}$ and, from appendix~\ref{ChR},~$\Gamma^{\lambda}_{z\beta}=0$ for all $ \beta,\lambda\in \{t,r,\theta,\phi,z\}$, 
\begin{eqnarray}
\nabla_{\gamma}\nabla_{\delta}h_{\alpha\beta}=0 \qquad (\alpha=z)\label{nabnab2}.
\end{eqnarray}
Hence, 
\begin{align}
g^{\gamma\delta}\nabla_{\gamma}\nabla_{\delta}h_{\alpha\beta}&=0 \qquad (\alpha=z)\label{nabnab3}\\
g^{\delta\beta}\nabla_{\gamma}\nabla_{\delta}h_{\alpha\beta}&=0 \qquad (\alpha=z)\label{nabnab4}.
\end{align}

Consider the linearised vacuum Einstein equation~(\ref{LVEE}) in coordinates
\begin{eqnarray}
g^{\gamma\delta}\nabla_{\gamma}\nabla_{\delta}h_{\alpha\beta}+\nabla_{\alpha}\nabla_{\beta}h-\nabla_{\alpha}\nabla^{\gamma}h_{\beta\gamma}-\nabla_{\beta}\nabla^{\gamma}h_{\alpha\gamma}+2{{{R_{\alpha}}^{\gamma}}_{\beta}}^{\delta}h_{\gamma\delta}=0.
\end{eqnarray}
Since, from equations~(\ref{nabnab3})--(\ref{nabnab4}) and, from appendix~\ref{ChR}, $R_{z\beta\gamma\delta}=0$, it follows that the linearised vacuum Einstein equation in local coordinates with $\alpha=z$ and under the ansatz~(\ref{TTA}) reduces to
\begin{eqnarray}
\nabla_{z}(\nabla_{\beta}h-\nabla^{\gamma}h_{\beta\gamma})=0\label{DH}.
\end{eqnarray}
Further, $\nabla_z=\partial_z$, so using the explicit $z$-dependence of the ansatz~(\ref{TTA}), the equation~(\ref{DH}) reduces to
\begin{eqnarray}
\omega(\nabla_{\beta}h-\nabla^{\gamma}h_{\beta\gamma})=0.
\end{eqnarray}
Since~$\omega\neq 0$, the harmonic gauge condition
\begin{eqnarray}
\nabla_{\beta}h-\nabla^{\gamma}h_{\beta\gamma}=0\label{oHG}
\end{eqnarray}
 is satisfied. If $\beta=z$ then, using equation~(\ref{nab}) and~$\nabla_z=\partial_z$, equation~(\ref{oHG}) reduces to
 \begin{eqnarray}
 \partial_zh=\omega h=0\implies  h=0\label{oTT}
 \end{eqnarray}
 since $\omega\neq 0$. Substituting~(\ref{oTT}) into equation~(\ref{oHG}) gives the transverse condition
 \begin{eqnarray}
 \nabla^{\gamma}h_{\beta\gamma}=0.
 \end{eqnarray}
\end{proof}
\begin{proof}[Proof of Proposition~\ref{A1}.]
Consider~$H_z^{2M,\pm}:=\mathfrak{h}^{2M,\pm}$ where~$\mathfrak{h}^{2M,\pm}$ are given by equations~(\ref{basis1}) and~(\ref{basis2}) with first order coefficients~(\ref{coeff1}). Taking~$k_2=0$ is equivalent to examining the basis element~$H_z^{2M,+}$. Since~$\mu>0$ and~$\omega\neq 0$, one can use proposition \ref{construct} to construct the components~$H_t$,~$H_r$ and~$H_v$ associated to~$H_z^{2M,\pm}$. Substituting the basis into equations~(\ref{HR})--(\ref{HT}), one finds
\begin{eqnarray}
H_{r}^{2M,\pm}&=&(r-2M)^{-2\pm 2M\mu}\Big(\frac{M^2( \pm 4M\mu-1)}{1+4M^2\omega^2}+\frac{M(4M^2(2\mu^2+\omega^2)\pm 6 M\mu-1)}{2(1+4M^2\omega^2)}(r-2M)\\&&+\mathcal{O}((r-2M)^2)\Big),\nonumber\\
H_{t}^{2M,\pm}&=&(r-2M)^{\pm 2M\mu}\Big(\frac{(1+ 4M\mu)( 4M\mu-1)}{4(1+4M^2\omega^2)}\\
&&+\frac{3+4M^2(8\mu^2-\omega^2)\pm2M\mu(8M^2(2\mu^2+\omega^2)-11)}{8M(1+4M^2\omega^2)}(r-2M)+\mathcal{O}((r-2M)^2)\Big),\nonumber\\
H_v^{2M,\pm}&=&(r-2M)^{-1+ 2M\mu}\Big(\frac{M^2(\pm 4M\mu-1)}{1+4M^2\omega^2}+\frac{M(2M^2(2\mu^2+\omega^2)-1\pm 5M\mu)}{1+4M^2\omega^2}(r-2M)\\
&&+\mathcal{O}((r-2M)^2)\Big)\nonumber.
\end{eqnarray}

 Consider a pure gauge solution~$h_{\mathrm{pg}}=2\nabla_{(a}\xi_{b)}$ generated by the following vector field
\begin{eqnarray}
\xi=e^{\mu t+i\omega z}\Big(-\frac{\mu H_z(r)}{2\omega^2},\frac{2\omega^2H_v(r)-\frac{dH_z}{dr}(r)}{2\omega^2},0,0,\frac{iH_z(r)}{2\omega}\Big)\label{GT}
\end{eqnarray}
where~$H_v$ is defined via equation~(\ref{HV}). This gives a new solution to the linearised vacuum Einstein equation~(\ref{LVEE})
\begin{eqnarray}
\tilde{h}_{\mu\nu}&=&h_{\mu\nu}+2\nabla_{(\mu}\xi_{\nu)}=e^{\mu t+i\omega z}\begin{pmatrix}\label{pertt}
\tilde{H}_{tt}(r)&\tilde{H}_{tr}(r)&0&0&0\\
\tilde{H}_{tr}(r)&\tilde{H}_{rr}(r)&0&0&0\\
0&0&\tilde{H}_{\theta\theta}(r)&0&0\\
0&0&0&\tilde{H}_{\theta\theta}(r)\sin^2\theta&0\\
0&0&0&0&0\\
\end{pmatrix},
\end{eqnarray}
with the following expressions for the matrix components
\begin{eqnarray}
\tilde{H}_{tt}(r)&=&c_1(r)H_z(r)+c_2(r)\frac{dH_z}{dr}(r),\label{trans1}\\
\tilde{H}_{\theta\theta}(r)&=&c_3(r)H_z(r)+c_4(r)\frac{dH_z}{dr}(r),\label{trans4}\\
\tilde{H}_{rr}(r)&=&\frac{r^2}{(r-2M)^2}\tilde{H}_{tt}(r)-\frac{2}{r(r-2M)}\tilde{H}_{\theta\theta}(r),\label{trans2}\\
\tilde{H}_{tr}(r)&=&-\frac{2M\mu}{\omega^2(2M+r^3\omega^2)}\Big(\frac{dH_z}{dr}(r)-\frac{M}{r(r-2M)}H_z(r)\Big),\label{trans3}
\end{eqnarray}
where
\begin{align}
\begin{split}
c_1(r)&:=\frac{6M^2(r-2M)}{r(\omega^2r^3+2M)^2}-\frac{2M(r-2M)}{r(\omega^2r^3+2M)}+\frac{\mu^2r^3}{\omega^2r^3+2M}-\frac{\mu^2}{\omega^2},\\
c_2(r)&:=\frac{M(r-2M)}{\omega^2r^3}-\frac{M(r-2M)}{\omega^2r^3+2M}-\frac{6M(4M^2-4Mr+r^2)}{(\omega^2r^3+2M)^2},\\
c_3(r)&:=-\frac{Mr^2}{\omega^2r^3+2M},\qquad c_4(r):=\frac{r^3(r-2M)}{\omega^2r^3+2M}-\frac{r-2M}{\omega^2}.
\end{split}
\end{align}
Note that equations~(\ref{HZ}) and~(\ref{HR})--(\ref{HT}) have been used to derive equations~(\ref{trans1})--(\ref{trans3}). By lemma~\ref{GML}, this new mode solution~(\ref{pertt}) satisfies the harmonic/transverse-traceless gauge:
\begin{eqnarray}
\begin{cases}
g^{\mu\nu}\tilde{h}_{\mu\nu}=0\label{TransTraceless}\\
\nabla^{\mu}\tilde{h}_{\mu\nu}=0.
\end{cases}
\end{eqnarray}

As remarked above, to determine admissible boundary conditions of~$\mathfrak{h}$ at $r=2M$ it is essential that one works in coordinates that extend regularly across this hypersurface. Moreover, to identify the boundary conditions to be admissible, one needs to consider all components of the mode solution~$h$ constructed from $\mathfrak{h}$ via proposition~\ref{construct}. The following formulas give the transformation to ingoing Eddington--Finkelstein coordinates for the components of the mode solution~$h$ defined in equation~(\ref{pertt}):
\begin{align}
\begin{split}
\tilde{H}_{vv}'&=\Big(\frac{\partial t}{\partial v}\Big)^2\tilde{H}_{tt},\label{TEFH}\\
\tilde{H}_{vr}'&=\Big(\frac{\partial t}{\partial v}\Big)\Big(\frac{\partial r}{\partial r}\Big)\tilde{H}_{tr}+\Big(\frac{\partial t}{\partial v}\Big)\Big(\frac{\partial t}{\partial r}\Big)\tilde{H}_{tt}\\
&=\tilde{H}_{tr}-\frac{r}{r-2M}\tilde{H}_{tt},\\
\tilde{H}_{rr}'&=\Big(\frac{\partial t}{\partial r}\Big)^2\tilde{H}_{tt}+\Big(\frac{\partial t}{\partial r}\Big)\Big(\frac{\partial r}{\partial r}\Big)\tilde{H}_{tr}+\Big(\frac{\partial r}{\partial r}\Big)^2\tilde{H}_{rr}\\
&=\frac{r^2}{(r-2M)^2}\tilde{H}_{tt}-\frac{r}{r-2M}\tilde{H}_{tr}+\tilde{H}_{rr},
\end{split}
\end{align}
where one uses~$t=v-r_*(r)$ with~$r_*(r)=r+2M\log|r-2M|$.
Explicitly, the equations~(\ref{TEFH}) can be computed to be
\begin{eqnarray}
\tilde{H}_{vv}'&=&\frac{2M(2M\mu^2r+\omega^2(\mu^2r^4-Mr+2M^2)+\omega^4r^3(r-2M))}{r(\omega^3r^3+2M\omega)^2}H_z(r)\\
&&-\frac{2M(r-2M)(\omega^2r^3(3r-7M)-2M^2)}{r^3(\omega^3r^3+2M\omega)^2}\frac{dH_z}{dr}(r),\nonumber\\
\tilde{H}_{vr}'&=&\Big(\frac{\mu(\mu r^2+M)}{r\omega^2(r-2M)}-\frac{\mu^2r^4+M\mu r^2-2M(r-2M)}{(r-2M)(\omega^2r^3+2M)}-\frac{6M^2}{(\omega^2r^3+2M)^2}\Big)H_{z}\\
&&+\Big(\frac{6Mr(r-2M)}{(\omega^2r^3+2M)^2}+\frac{r(\mu r^2+M)}{\omega^2r^3+2M}-\frac{\mu r^2+M}{\omega^2r^2}\Big)\frac{dH_z}{dr},\nonumber\\
\tilde{H}_{rr}'&=&\Big(\frac{2r(M\mu r^2+\mu^2r^4-M(r-2M))}{(r-2M)^2(\omega^2r^3+2M)}+\frac{12M^2r}{(\omega^2r^3+2M)^2(r-2M)}-\frac{2\mu(\mu r+M)}{\omega^2(r-2M)^2}\Big)H_z\\
&&+\Big(\frac{2(\mu r^2+r-M)}{\omega^2r(r-2M)}-\frac{12Mr^2}{(\omega^2r^3+2M)^2}-\frac{2r^2(\mu r^2+r-M)}{(r-2M)(\omega^2r^3+2M)}\Big)H_{z}',\nonumber\\
\tilde{H}_{\theta\theta}'&=&-\frac{Mr^2}{\omega^2r^3+2M}H_z(r)-\frac{2M(r-2M)}{\omega^4r^3+2M\omega^2}\frac{dH_z}{dr}(r)
\end{eqnarray}
where the ODE~(\ref{HZ}) with~$\mathfrak{h}=H_z$ has been used. 
To determine the behavior of these new metric perturbation components close to the future event horizon~$\mathcal{H}^+_A$ one must substitute~$H^{2M,\pm}_z(r):=\mathfrak{h}^{2M,\pm}(r)$ from equations~(\ref{AHP})--(\ref{AHM}). Substituting~$H^{2M,\pm}_z(r):=\mathfrak{h}^{2M,\pm}(r)$ from equations~(\ref{basis1})  and~(\ref{basis2}) into these expressions gives leading order behavior close to the future event horizon~$\mathcal{H}_A^+$ determined by the relations
\begin{eqnarray}
\tilde{H}_{vv}^{2M,\pm}&=&f_{vv}(r)(r-2M)^{\pm 2M\mu},\\
\tilde{H}_{vr}^{2M,\pm}&=&\Big(\frac{(\mu\mp \mu)(1+4M\mu)}{2\omega^2(1+4M^2\omega^2)}(r-2M)^{-1}+f_{vr}(r)\Big)(r-2M)^{\pm 2M\mu},\\
\tilde{H}_{rr}^{2M,\pm}&=&\Big(\frac{-2(1\mp 1)M\mu(1+4M\mu)}{\omega^2(1+4M^2\omega^2)}(r-2M)^{-2}+k_{\pm}(r-2M)^{-1}+f_{rr}(r)\Big)(r-2M)^{\pm 2M\mu},\\
\tilde{H}_{\theta\theta}^{2M,\pm}&=&f_{\theta\theta}(r)(r-2M)^{\pm2M\mu}
\end{eqnarray}
with~$f_{vv}$,~$f_{vr}$,~$f_{rr}$,~$f_{\theta\theta}$ smooth functions of~$r\in [2M,\infty)$ which are non-vanishing at $2M$,~$k_+=0$ and~$k_-$ a non-zero constant depending on~$\omega,M$ and~$\mu$. Therefore, multiplying~$\tilde{H}_{vv}^{2M,+}$,~$\tilde{H}_{vr}^{2M,+}$,~$\tilde{H}_{rr}^{2M,+}$ and~$\tilde{H}_{\theta\theta}^{2M,+}$ by~${e^{\mu t}=e^{\mu v}e^{-\mu r}(r-2M)^{-2M\mu}}$ gives
\begin{eqnarray}
e^{\mu t+i\omega z}\tilde{H}_{vv}^{2M,+}&=&f_{vv}(r)e^{\mu v-\mu r+i\omega z},\\
e^{\mu t+i\omega z}\tilde{H}_{vr}^{2M,+}&=&f_{vr}(r)e^{\mu v-\mu r+i\omega z},\\
e^{\mu t+i\omega z}\tilde{H}_{rr}^{2M,+}&=&f_{rr}(r)e^{\mu v-\mu r+i\omega z},\\
e^{\mu t+i\omega z}\tilde{H}_{\theta\theta}^{2M,+}&=&f_{\theta\theta}(r)e^{\mu v-\mu r+i\omega z},
\end{eqnarray}
which can indeed be smoothly extended to the future event horizon~$\mathcal{H}^+_A$.
\end{proof}
\begin{remark}
The form of the pure gauge solution defined by equation~(\ref{GT}) can be derived as follows. From lemma~\ref{GML}, a mode solution $\tilde{h}$ of the form~(\ref{TTA}) satisfies the harmonic/transverse-traceless~(\ref{HG}) gauge conditions. Take a mode solution $h$ in spherical gauge~(\ref{gaugechoice}) add the pure gauge solution $h_{\mathrm{pg}}=2\nabla_{(a}\xi_{b)}$ for some vector field
\begin{eqnarray}
\xi=e^{\mu t+i\omega z}\zeta
\end{eqnarray}
where $\zeta$ is a vector field which depends only on $r$. From a direct calculation of $h+h_{\mathrm{pg}}$ one can see that, to obtain a solution $\tilde{h}$ of the form~(\ref{TTA}), $\zeta$ must be given by equations~(\ref{GT}). 
\end{remark}
\begin{remark}
To explicitly see the singular behaviour of the mode solution~$h^{\pm}$ in spherical gauge~(\ref{gaugechoice}) with~$\mu>0$ and~$\omega\neq 0$ associated, via proposition~\ref{construct}, to either $\mathfrak{h}^{2M,\pm}$, consider directly transforming to ingoing Eddington--Finkelstein coordinates. This transformation gives the following basis elements:
\begin{eqnarray}
{H_{rr}^{2M,\pm}}'&=&\Big(\frac{\partial t}{\partial r}\Big)^2H^{2M,\pm}_{t}+2\Big(\frac{\partial t}{\partial r}\Big)\mu H^{2M,\pm}_{v}+H^{2M,\pm}_{r}(r),\\
{H_{vv}^{2M,\pm}}'&=&{H_{t}^{2M,\pm}}(r),\\
{H_{vr}^{2M,\pm}}'&=&\Big(\frac{\partial t}{\partial r}\Big)H_{t}^{2M,\pm}(r)+\mu H_{v}^{2M,\pm}(r),\\
{H_{zz}^{2M,\pm}}'&=&H_z^{2M,\pm}(r),\label{HzzT}
\end{eqnarray}
where~$H_v^{2M,\pm}$,~$H_{t}^{2M,\pm}$ and~$H_{r}^{2M,\pm}$ are the basis for solutions for~$H_v$,~$H_t$ and~$H_r$ constructed from proposition~(\ref{construct}). These relevant expressions can be found from equations~(\ref{HR})--(\ref{HT}). \\

First, if~$4M\mu$ is a positive integer and the coefficent~$C_N$ does not vanish then, by equation~(\ref{HzzT}), the basis element~${H^{2M,-}_{zz}}'(r)=H_z^{2M,-}=\mathfrak{h}^{2M,-}$ has an essential logarithmic divergence and is therefore always singular at the future event horizon~$\mathcal{H}^+_A$.\\

If~$C_N=0$ or~$4M\mu$ is not a positive integer then the basis elements~$H_z^{2M,\pm}=\mathfrak{h}^{2M,\pm}$ are given by equations~(\ref{basis1}) and~(\ref{basis2}) with first order coefficients~(\ref{coeff1}). Substituting the basis into equations~(\ref{HR})--(\ref{HT}) for the other metric perturbation component, one finds
\begin{eqnarray}
H_{r}^{2M,\pm}&=&(r-2M)^{-2\pm 2M\mu}\Big(\frac{M^2( \pm 4M\mu-1)}{1+4M^2\omega^2}+\frac{M(4M^2(2\mu^2+\omega^2)\pm 6 M\mu-1)}{2(1+4M^2\omega^2)}(r-2M)\\&&+\mathcal{O}((r-2M)^2)\Big),\nonumber\\
H_{t}^{2M,\pm}&=&(r-2M)^{\pm 2M\mu}\Big(\frac{(1+ 4M\mu)( 4M\mu-1)}{4(1+4M^2\omega^2)}\\
&&+\frac{3+4M^2(8\mu^2-\omega^2)\pm2M\mu(8M^2(2\mu^2+\omega^2)-11)}{8M(1+4M^2\omega^2)}(r-2M)+\mathcal{O}((r-2M)^2)\Big),\nonumber\\
H_v^{2M,\pm}&=&(r-2M)^{-1+ 2M\mu}\Big(\frac{M^2(\pm 4M\mu-1)}{1+4M^2\omega^2}+\frac{M(2M^2(2\mu^2+\omega^2)-1\pm 5M\mu)}{1+4M^2\omega^2}(r-2M)\\
&&+\mathcal{O}((r-2M)^2)\Big)\nonumber.
\end{eqnarray}
Transforming to ingoing Eddington--Finkelstein coordinates gives
\begin{eqnarray}
{H_{rr}^{2M,\pm}}'&= &(r-2M)^{-2\pm 2M\mu}\Big(\frac{2M^2(1-2M\mu(1\mp 1))( 4M\mu-1)}{1+4M^2\omega^2}\\
&&+\frac{2M^2\mu \big((3\mp4)+2(9\mp 7)M\mu-(1\mp 1)4M^2(2\mu^2+\omega^2)\big)}{1+4M^2\omega^2}(r-2M) \nonumber\\
&&+\mathcal{O}((r-2M)^2)\Big),\nonumber\\
{H_{vv}^{2M,\pm}}'&= & (r-2M)^{\pm 2M\mu}\Big(\frac{(1+ 4M\mu)(4M\mu-1)}{4(1+4M^2\omega^2)}+\mathcal{O}(r-2M)\Big),\\
{H_{vr}^{2M,\pm}}'&=&(r-2M)^{-1\pm 2M\mu}\Big(\frac{M(2M\mu(1\mp 2)-1)(\pm 4M\mu -1)}{2(1+4M^2\omega^2)}+\mathcal{O}(r-2M)\Big),\\
{H_{zz}^{2M,\pm}}'&=&(r-2M)^{\pm 2M\mu}(1+\mathcal{O}(r-2M)).
\end{eqnarray}
Note that the full mode solution~$h$ constructed from proposition~\ref{construct} involves a factor of~$e^{\mu t}=e^{\mu v}e^{-\mu r}(r-2M)^{-2M\mu}$ so, after multiplication by this exponential factor, one can see that the basis elements~${H_{\mu\nu}^{2M,-}}'$ are always singular, i.e., a solution with~$k_2\neq 0$ is always singular at the future event horizon. The components~$e^{\mu t}{H_{vv}^{2M,+}}'$ and~$e^{\mu t}{H_z^{2M,+}}'$ are unconditionally smooth. However, in general, the components~$e^{\mu t}{H_{rr}^{2M,+}}'$ and~$e^{\mu t}{H_{vr}^{2M,+}}'$ remain singular at the future event horizon~$\mathcal{H}^+_A$ unless~$4M\mu=1$ or~$-2+2M\mu\in \mathbb{N}\cup \{0\}$ or~$-2+2M\mu>2$\footnote{In appendix \ref{bounds} it is shown that for existence of a solution~$\mathfrak{h}$ with~$\mu>0$ which has~$k_2=0$ and is finite at infinity (see section \ref{asympinf}) then~$\mu<\frac{3}{16M}\sqrt{\frac{3}{2}}<\frac{1}{4M}$.}. So \underline{neither} basis perturbation $h^{\pm}$ in spherical gauge~(\ref{gaugechoice}) extends, in general, smoothly across the future event horizon $\mathcal{H}^+_A$. 
\end{remark}

\subsubsection{Spacelike Infinity $i^0_A$}\label{asympinf}
The goal of this section is to identify the admissible boundary conditions for a solution~$\mathfrak{h}$ to the ODE~(\ref{HZ}) as~$r\rightarrow \infty$. This requires one to understand the behaviour as~$r\rightarrow \infty$ of the mode solution~$h$ in spherical gauge~(\ref{gaugechoice}) of the linearised vacuum Einstein equation~(\ref{LVEE}) which results (through the construction in proposition~\ref{construct}) from~$\mathfrak{h}$.\\

In this section, a basis for solution~$\mathfrak{h}^{\infty,\pm}$ associated to~$r\rightarrow \infty$ is constructed. This basis~$\mathfrak{h}^{\infty,\pm}$ captures the asymptotic behavior of any solution to the ODE~(\ref{HZ}) as~$r\rightarrow \infty$. In particular, as $r\rightarrow \infty$,~$\mathfrak{h}^{\infty,+}$ grows exponentially and~$\mathfrak{h}^{\infty,-}$ decays exponentially. It will be shown that after the addition of the pure gauge solution~$h_{\mathrm{pg}}$ defined in equations~(\ref{GT}) and~(\ref{pertt}),~$h+h_{\mathrm{pg}}$ is a mode solution in harmonic/transverse-traceless gauge~(\ref{HG}) to the linearised Einstein vacuum equation which is a linear combination of solutions which grow or decay exponentially as~$r\rightarrow \infty$. The admissible boundary condition will be that the solution should decay exponentially, from which it will follow that $\mathfrak{h}=a\mathfrak{h}^{\infty,-}$. \\

One should note that the functions $P_{\omega}(r)$ and $Q_{\omega}(r)-\frac{\mu^2r^2}{(r-2M)^2}$ admit convergent series expansions in a neighbourhood of~$r=\infty$
\begin{align}
P_{\omega}(r)=\sum_{n=0}^{\infty}\frac{p_n}{r^n}\qquad
Q_{\omega}(r)=\sum_{n=0}^{\infty}\frac{q_n}{r^n}
\end{align}
with~$p_0=0$,~$p_1=-4$,~$q_0=-(\omega^2+\mu^2)$ and~$q_1=-2M(\omega^2+2\mu^2)$. 
Therefore,~$r=\infty $ is an irregular singular point of the ODE~(\ref{HZ}) according to the discussion of appendix~\ref{Sing}. The equations~(\ref{lambdapm}) and~(\ref{mupm}) from appendix~\ref{Sing} give
\begin{eqnarray}
\lambda_{\pm}=\pm\sqrt{\mu^2+\omega^2},\quad \mu_{\pm}=2\pm \frac{M(2\mu^2+\omega^2)}{\sqrt{\mu^2+\omega^2}}.
\end{eqnarray}
From theorem~\ref{ThE3}, there exists a unique basis for solutions~$\mathfrak{h}^{\infty,\pm}(r)$ to the ODE~(\ref{HZ}) satisfying
\begin{eqnarray}
\mathfrak{h}^{\infty,\pm}=e^{\pm\sqrt{\mu^2+\omega^2}r}r^{2\pm \frac{M(2\mu^2+\omega^2)}{\sqrt{\mu^2+\omega^2}}}+\mathcal{O}\Big(e^{\pm\sqrt{\mu^2+\omega^2}r}r^{1\pm \frac{M(2\mu^2+\omega^2)}{\sqrt{\mu^2+\omega^2}}}\Big).\label{AI}
\end{eqnarray}
Therefore a general solution will be of the form
\begin{eqnarray}
\mathfrak{h}=c_1\mathfrak{h}^{\infty,+}+c_2\mathfrak{h}^{\infty,-}\label{basisinflincomb}
\end{eqnarray}
with~$ c_1,c_2\in \mathbb{R}$.
\begin{prop}\label{A2}
Let~$\mathfrak{h}$ be a solution to the ODE~(\ref{HZ}). Let~$h$ be the mode solution to the linearised vacuum Einstein equation~(\ref{LVEE}) in spherical gauge~(\ref{gaugechoice}) associated to the solution~$\mathfrak{h}$ and let~$h_{\mathrm{pg}}$ the pure gauge solution defined by equations~(\ref{GT}) and~(\ref{pertt}) such that~$h+h_{\mathrm{pg}}$ satisfies the harmonic/transverse-traceless gauge~(\ref{HG}) conditions. Then the solution~$h+h_{\mathrm{pg}}$ to the ODE~(\ref{HZ}) decays exponentially towards spacelike infinity~$i^0_A$ if~$c_1=0$, where $c_1$ is defined by equation~(\ref{basisinflincomb}).
\end{prop}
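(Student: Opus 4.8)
The plan is to track the exponential decay of the ODE solution $\mathfrak{h}$ through the explicit formulas of section~\ref{asymhor} that express the metric perturbation $\tilde h:=h+h_{\mathrm{pg}}$ in terms of $\mathfrak{h}$ and $\tfrac{d\mathfrak{h}}{dr}$, and then to restrict to a hypersurface asymptotic to a level set of $t$. First I would observe that imposing $c_1=0$ in~(\ref{basisinflincomb}) forces $\mathfrak{h}=c_2\,\mathfrak{h}^{\infty,-}$, so that, by~(\ref{AI}),
\[
\mathfrak{h}(r)=c_2\,e^{-\sqrt{\mu^2+\omega^2}\,r}\,r^{\,2-\frac{M(2\mu^2+\omega^2)}{\sqrt{\mu^2+\omega^2}}}\bigl(1+\mathcal{O}(1/r)\bigr),\qquad r\to\infty,
\]
which decays exponentially since $\omega\neq 0$.

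The key preliminary step is to upgrade this to exponential decay of $\tfrac{d\mathfrak{h}}{dr}$. I would do this by passing to the first-order system for $Y:=(\mathfrak{h},\tfrac{d\mathfrak{h}}{dr})$ associated to~(\ref{HZ}), whose coefficient matrix converges as $r\to\infty$ to a constant matrix with eigenvalues $\pm\sqrt{\mu^2+\omega^2}$ up to an $\mathcal{O}(1/r)$ perturbation; the asymptotic-integration theory underlying Theorem~\ref{ThE3} (equivalently, term-by-term differentiability of the asymptotic expansion at the rank-one irregular singularity, or a Levinson-type argument) then produces a solution of the system whose \emph{both} components behave like $e^{-\sqrt{\mu^2+\omega^2}r}$ times a power of $r$. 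Since that solution is a multiple of $\bigl(\mathfrak{h}^{\infty,-},\tfrac{d}{dr}\mathfrak{h}^{\infty,-}\bigr)$, it follows that $\tfrac{d\mathfrak{h}}{dr}$ — and, using~(\ref{HZ}) once more, also $\tfrac{d^2\mathfrak{h}}{dr^2}$ — decays exponentially.

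Next I would feed this into~(\ref{trans1})--(\ref{trans3}): each of $\tilde H_{tt}$, $\tilde H_{\theta\theta}$, $\tilde H_{rr}$, $\tilde H_{tr}$ is a linear combination of $\mathfrak{h}$ and $\tfrac{d\mathfrak{h}}{dr}$ with coefficients built from the rational functions $c_1(r),\dots,c_4(r)$ and the factors $\tfrac{r^2}{(r-2M)^2}$, $\tfrac{2}{r(r-2M)}$, $\tfrac{2M\mu}{\omega^2(2M+\omega^2r^3)}$, all of which are bounded (indeed $\mathcal{O}(1)$ or smaller) as $r\to\infty$; hence every component of $\tilde h$ decays exponentially in $r$. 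The same holds for the ingoing Eddington--Finkelstein components $\tilde H'_{vv}$, $\tilde H'_{vr}$, $\tilde H'_{rr}$, $\tilde H'_{\theta\theta}$ of section~\ref{asymhor}, since the coordinate change has Jacobian entries ($\tfrac{\partial t}{\partial r}=-\tfrac{r}{r-2M}$) bounded near $r=\infty$. Finally, since the mode solution carries the factor $e^{\mu t+i\omega z}$ and $\Sigma$ is asymptotic to a level set of $t$ (it is $\{t=f(r_*)\}$ with $f\to 1$, e.g.\ a constant-$t_*$ slice), this factor stays bounded as one approaches $i^0_A$ along $\Sigma$; therefore $\tilde h|_\Sigma$ and $\nabla\tilde h|_\Sigma$ decay exponentially towards $i^0_A$, which is the assertion.

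The main obstacle is precisely the derivative step: one may not differentiate the leading relation~(\ref{AI}) naively, so the argument must rest on the exact content of Theorem~\ref{ThE3} (a genuine differentiable asymptotic expansion) or on an asymptotic-integration lemma for the first-order system; the remainder is bookkeeping confirming that none of the rational coefficients or Jacobian factors reintroduces growth. I would also note that the converse is immediate from the same formulas: if the constant $c_1\neq 0$ then $\tilde H_{tt}$ inherits the exponential growth of $\mathfrak{h}^{\infty,+}$, because the coefficient of the $\mathfrak{h}^{\infty,+}$-contribution in $\tilde H_{tt}$ equals $c_1(r)+\sqrt{\mu^2+\omega^2}\,c_2(r)$, which is non-vanishing for $r$ large by a direct check of the leading terms.
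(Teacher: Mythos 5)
Your proposal is correct and follows essentially the same route as the paper: the paper's proof likewise substitutes the basis $\mathfrak{h}^{\infty,\pm}$ into the transformation formulas~(\ref{trans1})--(\ref{trans4}) and reads off that every component of $h+h_{\mathrm{pg}}$ behaves like $e^{\pm\sqrt{\mu^2+\omega^2}\,r}$ times a power of $r$, so that $c_1=0$ gives exponential decay. The only difference is that you explicitly justify the differentiability of the asymptotic expansion (needed to control $\tfrac{d\mathfrak{h}}{dr}$) and the boundedness of $e^{\mu t}$ along $\Sigma$, points the paper leaves implicit in its appeal to theorem~\ref{ThE3}.
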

\begin{proof}
Defining~$H^{\infty,\pm}_z(r):=\mathfrak{h}^{\infty,\pm}(r)$ and using equations~(\ref{trans1})--(\ref{trans4}) one can construct the corresponding basis for solutions as~$\tilde{H}_{tt}$,~$\tilde{H}_{tr}$,~$\tilde{H}_{rr}$ and~$\tilde{H}_{\theta\theta}$ from proposition~\ref{construct}. Note that equations~(\ref{trans1})--(\ref{trans4}) define the components of the mode solution~$h+h_{\mathrm{pg}}$ to the linearised vacuum Einstein equation~(\ref{LVEE}) which satisfies harmonic/transverse-traceless gauge~(\ref{HG}). Asymptotically~$\tilde{H}_{tt}$,~$\tilde{H}_{tr}$,~$\tilde{H}_{rr}$ and~$\tilde{H}_{\theta\theta}$ have the following behavior:
\begin{eqnarray}
H_{tt}^{\infty,\pm}&=&e^{\pm\sqrt{\mu^2+\omega^2}r}r^{-1\pm \frac{M(2\mu^2+\omega^2)}{\sqrt{\mu^2+\omega^2}}}+\mathcal{O}\Big(e^{\pm\sqrt{\mu^2+\omega^2}r}r^{-2\pm \frac{M(2\mu^2+\omega^2)}{\sqrt{\mu^2+\omega^2}}}\Big),\\
H_{tr}^{\infty,\pm}&=&e^{\pm\sqrt{\mu^2+\omega^2}r}r^{-1\pm \frac{M(2\mu^2+\omega^2)}{\sqrt{\mu^2+\omega^2}}}+\mathcal{O}\Big(e^{\pm\sqrt{\mu^2+\omega^2}r}r^{-2\pm \frac{M(2\mu^2+\omega^2)}{\sqrt{\mu^2+\omega^2}}}\Big),\\
H_{rr}^{\infty,\pm}&=&e^{\pm\sqrt{\mu^2+\omega^2}r}r^{-1\pm \frac{M(2\mu^2+\omega^2)}{\sqrt{\mu^2+\omega^2}}}+\mathcal{O}\Big(e^{\pm\sqrt{\mu^2+\omega^2}r}r^{-2\pm \frac{M(2\mu^2+\omega^2)}{\sqrt{\mu^2+\omega^2}}}\Big),\\
H_{\theta\theta}^{\infty,\pm}&=&e^{\pm\sqrt{\mu^2+\omega^2}r}r^{1\pm \frac{M(2\mu^2+\omega^2)}{\sqrt{\mu^2+\omega^2}}}+\mathcal{O}\Big(e^{\pm\sqrt{\mu^2+\omega^2}r}r^{\pm \frac{M(2\mu^2+\omega^2)}{\sqrt{\mu^2+\omega^2}}}\Big).
\end{eqnarray}
It is clear from these expressions that, if~$c_1=0$, then the mode solution~$h+h_{\mathrm{pg}}$ decays exponentially as~$r\rightarrow\infty$.
\end{proof}

\subsection{Reduction of the Proof of Theorem \ref{RT}}\label{Reduce}
This section summarises propositions \ref{construct}, \ref{notpuregaugeS}, \ref{A1} and \ref{A2} to give a full description of the permissible asymptotic behavior of a mode solution~$h$ in spherical gauge~(\ref{gaugechoice}) which is not pure gauge. This provides a reduction of theorem \ref{RT} to proving that there exists a solution~$\mathfrak{h}$ to the ODE~(\ref{HZ}) which has~$\mu>0$,~$\omega\neq 0$ and obeys the admissible boundary conditions:~$k_2=0$ and~$c_1=0$.

\begin{prop}\label{Asymp}
Let~$\mu> 0$ and~$\omega\in \mathbb{R}$ with~$\omega\neq 0$. Let~$\mathfrak{h}^{2M,\pm}$ be the basis for the space of solutions to the ODE~(\ref{HZ}) as defined in equations~(\ref{AHP}) and~(\ref{AHM}) and~$\mathfrak{h}^{\infty,\pm}$ be the basis for the space of solutions to the ODE~(\ref{HZ}) as defined in equation~(\ref{AI}). In particular, to any solution~$\mathfrak{h}$ of the ODE~(\ref{HZ}) one can ascribe four numbers~$k_1,k_2,c_1,c_2\in \mathbb{R}$ defined by
\begin{eqnarray}
\mathfrak{h}(r)&=&k_1\mathfrak{h}^{2M,+}(r)+k_2\mathfrak{h}^{2M,-}(r),\\
\mathfrak{h}(r)&=&c_1\mathfrak{h}^{\infty,+}(r)+c_2\mathfrak{h}^{\infty,-}(r).
\end{eqnarray}
Let~$h$ be the mode solution in spherical gauge~(\ref{gaugechoice}) to the linearised vacuum Einstein equation~(\ref{LVEE}) on the exterior~$\mathcal{E}_A$ of Schwarzschild black string~$\mathrm{Sch}_4\times \mathbb{R}$ associated to~$\mathfrak{h}$ via proposition~\ref{construct}. Let~$h_{\mathrm{pg}}$ be the pure gauge solution as defined in equations~(\ref{GT}) and~(\ref{pertt}). Then~$h+h_{\mathrm{pg}}$ decays exponentially towards spacelike infinity~$i^0_A$ and is smooth at the future event horizon~$\mathcal{H}^+_A$ if~$k_2=0$ and~$c_1=0$. Moreover, $h+h_{\mathrm{pg}}$ satisfies the harmonic/transverse-traceless gauge conditions~(\ref{HG}) and cannot be a pure gauge solution.\\

Under the additional assumption that~$\omega R\in \mathbb{Z}$ the mode solution~$h$ defined above can be interpreted as a mode solution to the linearised vacuum Einstein equation~(\ref{LVEE}) on the exterior~$\mathcal{E}_A$ of Schwarzschild black string~$\mathrm{Sch}_4\times \mathbb{S}_R^1$. Hence, if~$\omega R\in \mathbb{Z}$ the above statement applies to the exterior~$\mathcal{E}_A$ of~$\mathrm{Sch}_4\times \mathbb{S}^1_{R}$.
\end{prop}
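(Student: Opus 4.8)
The plan is to assemble the statement from the four propositions already established in this section, so the proof is almost entirely bookkeeping; there is essentially no new analytic content. First I would start from a solution $\mathfrak{h}$ of the ODE~(\ref{HZ}) with $\mu>0$ and $\omega\neq 0$ and invoke proposition~\ref{construct} to produce the associated mode solution $h$ in spherical gauge~(\ref{gaugechoice}) on the exterior $\mathcal{E}_A$ of $\mathrm{Sch}_4\times\mathbb{R}$; since $H_z=\mathfrak{h}$, the solution $h$ is non-trivial precisely when $\mathfrak{h}\not\equiv 0$. Because $r=2M$ is a regular singular point and $r=\infty$ an irregular singular point of~(\ref{HZ}), the asymptotic analysis of sections~\ref{asymhor} and~\ref{asympinf} supplies the bases $\mathfrak{h}^{2M,\pm}$ and $\mathfrak{h}^{\infty,\pm}$, hence well-defined real coefficients $k_1,k_2,c_1,c_2$ attached to $\mathfrak{h}$ through the two expansions in the statement.

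Next, I would apply proposition~\ref{A1}: it furnishes the explicit pure gauge solution $h_{\mathrm{pg}}=2\nabla_{(a}\xi_{b)}$ with $\xi$ given by~(\ref{GT}), so that $h+h_{\mathrm{pg}}$ is of the form~(\ref{pertt}) and, by lemma~\ref{GML} (using $\omega\neq 0$), satisfies the harmonic/transverse-traceless gauge conditions~(\ref{HG}); moreover, under the hypothesis $k_2=0$ the same proposition shows $h+h_{\mathrm{pg}}$ extends smoothly across $\mathcal{H}^+_A$, the point being that after multiplication by $e^{\mu t}=e^{\mu v}e^{-\mu r}(r-2M)^{-2M\mu}$ the components $\tilde H_{vv},\tilde H_{vr},\tilde H_{rr},\tilde H_{\theta\theta}$ in ingoing Eddington--Finkelstein coordinates become smooth functions of $r\in[2M,\infty)$. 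Then, with the same $h_{\mathrm{pg}}$ and under the hypothesis $c_1=0$, proposition~\ref{A2} shows the components $\tilde H_{tt},\tilde H_{tr},\tilde H_{rr},\tilde H_{\theta\theta}$ all decay like $e^{-\sqrt{\mu^2+\omega^2}\,r}$ times a power of $r$ as $r\to\infty$, i.e.\ $h+h_{\mathrm{pg}}$ decays exponentially towards $i^0_A$.

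It remains to record that $h+h_{\mathrm{pg}}$ is not pure gauge. Here one cannot quote proposition~\ref{notpuregaugeS} directly, since $h+h_{\mathrm{pg}}$ is in harmonic/transverse-traceless gauge rather than spherical gauge; instead I would use that the pure gauge solutions form a linear subspace of the solution space of~(\ref{LVEE}): if $h+h_{\mathrm{pg}}$ were pure gauge, then $h=(h+h_{\mathrm{pg}})-h_{\mathrm{pg}}$ would be a non-trivial pure gauge solution in spherical gauge, contradicting proposition~\ref{notpuregaugeS} (applicable since $\omega\neq 0$ and $h\not\equiv 0$). Finally, for the $\mathrm{Sch}_4\times\mathbb{S}^1_R$ statement, the hypothesis $\omega R\in\mathbb{Z}$ guarantees, via the last sentence of proposition~\ref{construct} and the fact that the gauge vector field $\xi$ in~(\ref{GT}) carries the same $e^{i\omega z}$ factor, that $h$, $h_{\mathrm{pg}}$ and hence $h+h_{\mathrm{pg}}$ descend to globally well-defined mode solutions on the circle-compactified exterior, so every assertion above transfers verbatim. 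I expect the only genuinely non-routine point to be this last non-pure-gauge argument — tracking the fact that adding a pure gauge solution cannot create or destroy the pure-gauge property — together with checking that $\xi$ is itself single-valued on $\mathrm{Sch}_4\times\mathbb{S}^1_R$ when $\omega R\in\mathbb{Z}$.
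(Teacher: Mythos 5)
Your proposal is correct and follows essentially the same route as the paper, which presents Proposition~\ref{Asymp} explicitly as a summary assembling propositions~\ref{construct}, \ref{notpuregaugeS}, \ref{A1} and~\ref{A2} without writing out a separate proof. Your linearity argument for the non-pure-gauge claim (that $h+h_{\mathrm{pg}}$ pure gauge would force $h$ itself to be pure gauge, contradicting proposition~\ref{notpuregaugeS}) correctly supplies the one detail the paper leaves implicit, and your check that $\xi$ in~(\ref{GT}) is single-valued when $\omega R\in\mathbb{Z}$ is likewise consistent with the paper's intent.
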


The next section (see proposition \ref{propG}) will prove the existence of a solution~$\mathfrak{h}$ to the ODE~(\ref{HZ}) satisfying the properties of proposition~\ref{Asymp}. In particular, for all~${|\omega|\in[\frac{3}{20M},\frac{8}{20M}]}$, a solution~$\mathfrak{h}$ to the ODE~(\ref{HZ}) with~$\mu>\frac{1}{40\sqrt{10}M}>0$,~$k_2=0$ and~$c_1=0$ is constructed. If~$R>4M$, then there exists an integer~${n\in [\frac{3R}{20M},\frac{8R}{20M}]}$. Hence, one can choose~$\omega~$ such that the constructed~$\mathfrak{h}$ gives rise to a mode solution on~$\mathrm{Sch}_4\times \mathbb{S}^1_R$. Moreover, on~$\mathrm{Sch}_4\times \mathbb{S}^1_R$,~$h$ will manifestly have finite energy in the sense that~$||h|_{\Sigma}||_{H^1}$ and~$||\partial_{t_*} h|_{\Sigma}||_{L^2}$ are finite\footnote{On~$\mathrm{Sch}_4\times \mathbb{R}$,~$h$ will not have finite energy due to the periodic behaviour in $z$ on $\mathbb{R}$}. Thus, theorem~1.1 follows from proposition~\ref{Asymp} and proposition~\ref{propG}. 

\pagebreak
\section{The Variational Argument}\label{BSU}
 By proposition~\ref{Asymp}, the proof of theorem \ref{RT} has now been reduced to exhibiting a solution~$ \mathfrak{h}$ to~(\ref{HZ}) with~$\mu>0$,~$\omega\neq 0$,~$k_2=0$ and~$c_1=0$. This section establishes the required proposition thus completing the proof.
\begin{prop}\label{propG}
For all~$|\omega|\in[\frac{3}{20M},\frac{8}{20M}]$ there exists a~$C^{\infty}((2M,\infty))$ solution~$\mathfrak{h}$ to ODE~(\ref{HZ}) with~$\mu>0$, and in the language of proposition~\ref{Asymp},~$k_2=0$ and~$c_1=0$.
 \end{prop}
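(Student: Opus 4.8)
The plan is to convert the ODE~(\ref{HZ}) into the Schr\"odinger eigenvalue problem~(\ref{E2}), realize $-\mu^2$ as the lowest value of the energy functional~(\ref{E3}), and run the direct method of the calculus of variations.

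\emph{Step 1: reduction to Schr\"odinger form.} Pass to the variable $r_*=r+2M\log(r-2M)$, so $\frac{d}{dr_*}=D\frac{d}{dr}$, and perform the Liouville substitution $\mathfrak{h}=f(r)\,u(r_*)$ with $f=\exp\big(-\tfrac12\int(P_\omega D-D')\,dr_*\big)$ chosen (Appendix~\ref{SchT}) to kill the first-order term. Then~(\ref{HZ}) becomes $-u''+V_\omega u=-\mu^2 u$ on $r_*\in\mathbb{R}$, with $V_\omega=\tfrac12\frac{d}{dr_*}(P_\omega D-D')+\tfrac14(P_\omega D-D')^2-Q_\omega D^2$ — a rational function of $r$ that, crucially, does not involve $\mu$. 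Using the simplifications $Q_\omega(r)=\frac{r\omega^2(4M-\omega^2r^3)}{(r-2M)(\omega^2r^3+2M)}$ and $P_\omega D-D'=\frac{4(r-2M)(M-\omega^2r^3)}{r^2(\omega^2r^3+2M)}$, one checks that $P_\omega D-D'$ extends analytically to $r=2M$ and vanishes there, $Q_\omega D^2\to0$ as $r\to2M$, and $f$ has a finite positive limit at $r=2M$; consequently $V_\omega$ extends continuously to $r_*\to-\infty$ with limit $0$, and $V_\omega\to\omega^2$ as $r_*\to+\infty$. In particular $V_\omega$ is smooth and bounded on $\mathbb{R}$, so $E$ in~(\ref{E3}) is well defined on $H^1(\mathbb{R})$ with $E(u)\ge-\|V_\omega\|_{L^\infty}\|u\|_{L^2}^2$.

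\emph{Step 2: negativity of the infimum (the main obstacle).} Set $\lambda_0(\omega):=\inf\{E(u):u\in H^1(\mathbb{R}),\ \|u\|_{L^2}=1\}$. Taking as a test function the first Dirichlet eigenfunction of a finite interval $[a,b]\subset\mathbb{R}$ (in the $r_*$ variable), extended by zero, gives $\lambda_0(\omega)\le \frac{\pi^2}{(b-a)^2}+\sup_{[a,b]}V_\omega$. The heart of the matter is to exhibit, for every $|\omega|\in[\frac{3}{20M},\frac{8}{20M}]$, an interval on which $V_\omega$ is sufficiently negative that this quantity stays below $-\frac{1}{16000M^2}=-\big(\frac{1}{40\sqrt{10}M}\big)^2$; this then yields $\lambda_0(\omega)<-\frac{1}{16000M^2}$. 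For the stated $\omega$-range the region $\{r:\ \omega^2r^3<4M\}$ contains an interval of moderate $r$-size on which $Q_\omega>0$ and the $\tfrac12(P_\omega D-D')_{,r_*}+\tfrac14(P_\omega D-D')^2$ contributions do not overwhelm $-Q_\omega D^2$, making $\sup_{[a,b]}V_\omega$ a negative multiple of $M^{-2}$ with room to spare against $\frac{\pi^2}{(b-a)^2}$. Carrying this out with explicit constants, uniformly over the $\omega$-interval, is the bulk of the work; everything else is soft. (Note $V_0\equiv\frac{2M(r-2M)}{r^4}\ge0$, consistent with stability at $\omega=0$, so the strict positivity of $|\omega|$ is genuinely used.)

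\emph{Step 3: attainment, Euler--Lagrange, regularity.} Granting $\lambda_0(\omega)<0$, take a minimizing sequence $u_n$; from $\|u_n'\|_{L^2}^2=E(u_n)-\langle V_\omega u_n,u_n\rangle\le E(u_n)+\|V_\omega\|_{L^\infty}$ it is bounded in $H^1(\mathbb{R})$, so along a subsequence $u_n\rightharpoonup u$ in $H^1$ and $u_n\to u$ in $L^2_{\mathrm{loc}}$ and a.e. Writing $V_\omega=V_\omega^+-V_\omega^-$ with $V_\omega^-$ bounded and vanishing at $\pm\infty$, local compactness gives $\langle V_\omega^- u_n,u_n\rangle\to\langle V_\omega^- u,u\rangle$, while weak lower semicontinuity of $\|\cdot\|_{L^2}$ and Fatou's lemma give $\|u'\|_{L^2}^2+\langle V_\omega^+u,u\rangle\le\liminf(\|u_n'\|_{L^2}^2+\langle V_\omega^+u_n,u_n\rangle)$; hence $E(u)\le\lambda_0(\omega)$ and $\|u\|_{L^2}\le1$. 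If $0<\|u\|_{L^2}<1$ then $u/\|u\|_{L^2}$ has energy $\le\lambda_0(\omega)/\|u\|_{L^2}^2<\lambda_0(\omega)$ (here $\lambda_0<0$ is used), a contradiction; and $\|u\|_{L^2}=0$ gives $E(u)=0>\lambda_0(\omega)$, also impossible; so $\|u\|_{L^2}=1$ and $u$ is a minimizer, weakly solving $-u''+V_\omega u=\lambda_0(\omega)u$. Set $\mu:=\sqrt{-\lambda_0(\omega)}$, so $\mu>\frac{1}{40\sqrt{10}M}>0$. Since $V_\omega\in C^\infty(\mathbb{R})$ and $u\in H^1(\mathbb{R})\hookrightarrow C^0(\mathbb{R})$, one-dimensional elliptic bootstrap (Appendix~\ref{analysis}) gives $u\in C^\infty(\mathbb{R})$; undoing the substitution, $\mathfrak{h}(r)=f(r)\,u(r_*(r))\in C^\infty((2M,\infty))$ is non-trivial (as $u\not\equiv0$, $f>0$) and solves~(\ref{HZ}).

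\emph{Step 4: boundary conditions.} Since $u\in L^2(\mathbb{R})$ it is $L^2$ near each end, hence coincides with the recessive solution of $-u''+V_\omega u=-\mu^2u$ at $\pm\infty$. As $r_*\to-\infty$ this solution behaves like $e^{\mu r_*}=e^{\mu r}(r-2M)^{2M\mu}$, and since $f$ has a finite nonzero limit at $r=2M$ we get $\mathfrak{h}\sim\mathrm{const}\cdot(r-2M)^{2M\mu}$, which by~(\ref{AHP})--(\ref{AHM}) forces $k_2=0$. As $r_*\to+\infty$, WKB applied to $V_\omega+\mu^2=(\omega^2+\mu^2)-\tfrac{2M\omega^2}{r}+O(r^{-2})$ together with $f\sim r^2$ shows the recessive solution matches $\mathfrak{h}^{\infty,-}$ in~(\ref{AI}), so $c_1=0$. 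Thus $\mathfrak{h}$ has all the properties required in Proposition~\ref{Asymp}; combined with this $\mathfrak{h}$, Proposition~\ref{Asymp} completes the proof of Theorem~\ref{RT}. I expect Step~2 to be the delicate point: obtaining the quantitative upper bound on $V_\omega$ over a well-chosen interval, with the explicit constant and uniformly over $|\omega|\in[\frac{3}{20M},\frac{8}{20M}]$, that beats the Dirichlet gradient cost.
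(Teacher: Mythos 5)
Your overall architecture (Schr\"odinger reformulation, direct method for the minimiser, elliptic regularity, and $H^1$-membership forcing $k_2=0$ and $c_1=0$) coincides with the paper's, and Steps 1, 3 and 4 are essentially correct as written. The genuine gap is Step 2, which you yourself flag as ``the heart of the matter'' but do not carry out --- and, worse, the specific strategy you propose there fails quantitatively. Working in the rescaled variables $x=\frac{r}{2M}$, $\hat\omega=2M\omega$, the potential~(\ref{pot}) has a well whose depth is only about $0.17$ (e.g.\ $V\approx-0.174$ at $x=1.5$ for $\hat\omega=\frac{1}{2}$) and whose width at any appreciable fraction of that depth is only a few units of $x_*$; moreover $V\to 0$ as $x_*\to-\infty$ (indeed $V\sim(x-1)\big(\hat\omega^2-5+\frac{6}{1+\hat\omega^2}\big)$, which is \emph{positive} near the horizon for the smaller $\hat\omega$ in your range) and $V\to\hat\omega^2>0$ as $x_*\to+\infty$. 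Your test bound $\lambda_0\le\frac{\pi^2}{(b-a)^2}+\sup_{[a,b]}V_\omega$ therefore can never be made negative: to beat a well of depth $0.17$ one needs $b-a\gtrsim \pi/\sqrt{0.17}\approx 7.6$ in $x_*$, and on any interval of that length $\sup V_\omega$ is far above $-0.17$ (it is in fact $\ge 0$ once the interval reaches either tail). The eigenvalue exists only because one-dimensional wells of this kind bind with a \emph{tiny} binding energy --- the paper certifies only $E_0<-\frac{1}{4000}$ in these units --- so it cannot be detected by a compactly supported mode paying the Dirichlet kinetic cost; it requires a globally supported trial function with tails tuned to both asymptotic regimes.

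This is precisely what the paper's proof of the proposition consists of: the explicit choice $u_T(x_*)=x(1+|\hat\omega|^2x^3)(x-1)^{\frac{1}{n}}e^{-4|\hat\omega|(x-1)}$ with $n=100$ (the prefactor $x(1+\hat\omega^2x^3)$ undoes the weight $w$, the factor $(x-1)^{1/n}$ keeps $u_T\in H^1$ near the horizon, and the exponential tail controls the $\hat\omega^2$ barrier at infinity), followed by a closed-form evaluation of $\frac{E(u_T)}{\|u_T\|_{L^2}^2}$ as an explicit rational function of $n$ and $|\hat\omega|$ via the confluent hypergeometric recurrences~(\ref{HG0})--(\ref{HG3}), and a Sturm-algorithm check that the resulting numerator polynomial is negative on $|\hat\omega|\in[\frac{3}{10},\frac{8}{10}]$. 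That explicit computation --- which also supplies the quantitative bound $\mu>\frac{1}{40\sqrt{10}M}$ used in Theorem~\ref{RT} --- is the actual content of the proposition and is absent from your proposal; until you produce a trial function and verify its Rayleigh quotient is negative uniformly on the stated $\omega$-interval, the proof is incomplete, and the route you sketch for doing so would need to be replaced.
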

  In order to exhibit such a solution~$\mathfrak{h}$ to the ODE~(\ref{HZ}), it is convenient to rescale the solution and change coordinates in the ODE~(\ref{HZ}) so as to recast as a Schr\"odinger equation for a function $u$. This transformation is given in section~\ref{AODE}. In section~\ref{DVA} an energy functional is assigned to the resulting Schr\"odinger operator. With the use of a test function (constructed in section~\ref{TF}), a direct variational argument can be run to establish that for~$|\omega|\in[\frac{3}{20M},\frac{8}{20M}]$, there exists a weak solution~$u\in H^1(\mathbb{R})$ with~$||u||_{H^1(\mathbb{R})}=1$ such that~$\mu>0$. The proof of proposition~\ref{propG} concludes by showing that the solution~$u$ is indeed smooth for~$r\in (2M,\infty)$ and satisfies the conditions of proposition~\ref{Asymp}, i.e.,~$k_2=0$ and~$c_1=0$. 
\subsection{Schr\"odinger Reformulation}\label{AODE}
To reduce the number of parameters in the ODE~(\ref{HZ}), one can eliminate the mass parameter with ${x:=\frac{r}{2M}}$,~${\hat{\mu}:=2M\mu}$ and~${\hat{\omega}:=2M\omega}$ to find
\begin{eqnarray}
\frac{d^2\mathfrak{h}}{dx^2}(x)+p_{\hat{\omega}}(x)\frac{d\mathfrak{h}}{dx}+\Big(q_{\hat{\omega}}(x)-\frac{\hat{\mu}^2x^2}{(x-1)^2}\Big)\mathfrak{h}(x)=0,\label{HZX}
\end{eqnarray}
with
\begin{eqnarray}
p_{\hat{\omega}}(x)&=&\frac{1}{x-1}-\frac{5}{x}+\frac{6}{x(\hat{\omega}^2x^3+1)},\\
q_{\hat{\omega}}(x)&=&\frac{3}{x^2(x-1)}-\frac{\hat{\omega}^2x}{x-1}-\frac{3}{x^2(x-1)(1+\hat{\omega}^2x^3)}.
\end{eqnarray}

 Following proposition~\ref{Sch} from appendix~\ref{SchT} one can now transform the equation~(\ref{HZX}) into regularised Schr\"odinger form by introducing a weight function~$\mathfrak{h}(x)=w(x)\tilde{\mathfrak{h}}(x)$ and changing coordinates to~$x_*=\frac{r_*}{2M}=x+\log|x-1|$. This will produce a Schr\"odinger operator with a potential which decays to zero at the future event horizon and tends to the constant $\hat{\omega}^2$ at spatial infinity. From proposition~\ref{Sch} the weight function must satisfy the ODE
\begin{eqnarray}
\frac{dw}{dx}+\frac{(1-2\omega^2x^3)}{x(1+\omega^2x^3)}w=0.
\end{eqnarray}
 The desired solution for the weight function is
\begin{eqnarray}
w(x)=\frac{(1+\hat{\omega}^2x^3)}{x}.
\end{eqnarray}
The ODE~(\ref{HZX}) becomes
\begin{eqnarray}
-\frac{d^2\tilde{\mathfrak{h}}}{dx_*^2}(x_*)+V(x_*)\tilde{\mathfrak{h}}(x_*)=-\hat{\mu}^2\tilde{\mathfrak{h}}(x_*),\label{ODEF}
\end{eqnarray}
where~$V:\mathbb{R}\rightarrow \mathbb{R}$ can be found from equation~(\ref{PotT}) to be 
\begin{eqnarray}
 V(x_*)=\hat{\omega}^2\frac{(x-1)}{x}+\frac{(6x-11)(x-1)}{x^4}+\frac{18(x-1)^2}{x^4(1+\hat{\omega}^2x^3)^2}-\frac{6(4x-5)(x-1)}{x^4(1+\hat{\omega}^2x^3)},\quad x\in (1,\infty),\label{pot}
 \end{eqnarray}
 where~$x$ is understood as an implicit function of~$x_*$.\\
 
 As a trivial consequence of proposition~\ref{Asymp} in section~\ref{RC} on asymptotics of the solution to the ODE~(\ref{HZ}), one has the following proposition for the asymptotics of the Schr\"odinger equation~(\ref{ODEF}). 
\begin{prop}\label{Asymp2}
Assume~$\hat{\mu}>0$. To any solution~$\tilde{\mathfrak{h}}$ to the Schr\"odinger equation~(\ref{ODEF}) one can ascribe four numbers~$\tilde{k}_1,\tilde{k}_2,\tilde{c}_1,\tilde{c}_2\in \mathbb{R}$ defined by
\begin{eqnarray}
\tilde{\mathfrak{h}}(x_*)&=&\tilde{k}_1\tilde{\mathfrak{h}}^{2M,+}(x_*)+\tilde{k}_2\tilde{\mathfrak{h}}^{2M,-}(x_*)\quad \text{as}\quad x_*\rightarrow -\infty,\\
\tilde{\mathfrak{h}}(x_*)&=&\tilde{c}_1\tilde{\mathfrak{h}}^{\infty,+}(x_*)+\tilde{c}_2\tilde{\mathfrak{h}}^{\infty,-}(x_*)\quad \text{as}\quad x_*\rightarrow \infty
\end{eqnarray}
with
\begin{eqnarray}
\tilde{\mathfrak{h}}^{2M,\pm}&:=&\frac{\mathfrak{h}^{2M,\pm}}{w}\label{AHT},\\
\tilde{\mathfrak{h}}^{\infty,\pm}&:=&\frac{\mathfrak{h}^{\infty,\pm}}{w}\label{AIT}.
\end{eqnarray}
The conditions that~$\tilde{c}_1=0$ and~$\tilde{k}_2=0$ are equivalent to, in the language of proposition~\ref{Asymp},~$c_1=0$ and~$k_2=0$. 
\end{prop}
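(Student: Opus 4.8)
The plan is to deduce Proposition~\ref{Asymp2} directly from the observation that the correspondence of Proposition~\ref{Sch} between the ODE~(\ref{HZX}) and the Schr\"odinger equation~(\ref{ODEF}) is a \emph{linear} isomorphism of solution spaces which, by the very definitions~(\ref{AHT}) and~(\ref{AIT}), carries the asymptotic bases of~(\ref{HZ}) onto those of~(\ref{ODEF}). Concretely, I would first recall from Proposition~\ref{Sch} that $\tilde{\mathfrak{h}}(x_*)$ solves~(\ref{ODEF}) precisely when $\mathfrak{h}(x) := w(x)\,\tilde{\mathfrak{h}}(x_*(x))$ solves~(\ref{HZX}), with $w(x) = (1+\hat\omega^2 x^3)/x$ and $x_* = x + \log|x-1|$; undoing the rescaling $x = r/(2M)$, $\hat\mu = 2M\mu$, $\hat\omega = 2M\omega$ then identifies solutions of~(\ref{HZX}) with solutions of~(\ref{HZ}). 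Since $w$ is smooth and strictly positive on all of $(1,\infty)$ (its numerator $1+\hat\omega^2 x^3$ being positive there), the assignment $\mathfrak{h}\mapsto\tilde{\mathfrak{h}} = \mathfrak{h}/w$ composed with the coordinate change is a linear bijection of the two-dimensional solution spaces.

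The core of the argument is then a one-line computation: if $\mathfrak{h} = k_1\mathfrak{h}^{2M,+} + k_2\mathfrak{h}^{2M,-} = c_1\mathfrak{h}^{\infty,+} + c_2\mathfrak{h}^{\infty,-}$, dividing through by $w$ and using~(\ref{AHT}), (\ref{AIT}) gives $\tilde{\mathfrak{h}} = k_1\tilde{\mathfrak{h}}^{2M,+} + k_2\tilde{\mathfrak{h}}^{2M,-} = c_1\tilde{\mathfrak{h}}^{\infty,+} + c_2\tilde{\mathfrak{h}}^{\infty,-}$, so by uniqueness of the expansion coefficients $\tilde{k}_i = k_i$ and $\tilde{c}_i = c_i$. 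In particular $\tilde{c}_1 = 0\iff c_1 = 0$ and $\tilde{k}_2 = 0\iff k_2 = 0$, which is precisely the assertion (and, by Proposition~\ref{Asymp}, these are exactly the conditions that single out the admissible metric perturbation $h+h_{\mathrm{pg}}$).

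The only point that needs a word of justification --- and where I would be slightly careful --- is that $\{\tilde{\mathfrak{h}}^{2M,\pm}\}$ and $\{\tilde{\mathfrak{h}}^{\infty,\pm}\}$ really do play the roles of the asymptotic bases at $x_*\to-\infty$ and $x_*\to+\infty$ (growing vs.\ decaying, regular vs.\ singular), so that the expansion coefficients above are well defined. Near the horizon, $x\to 1^+$ corresponds to $x_*\to-\infty$ and $w(x)\to 1+\hat\omega^2\neq 0$, so $\tilde{\mathfrak{h}}^{2M,\pm} = \mathfrak{h}^{2M,\pm}/w$ inherits the leading behaviour of $\mathfrak{h}^{2M,\pm}$ from~(\ref{AHP})--(\ref{AHM}) up to a nonzero constant; in particular $\tilde{\mathfrak{h}}^{2M,-}$ still carries the singular negative power (and the logarithm when $4M\mu\in\mathbb{Z}_{>0}$, $C_N\neq 0$) while $\tilde{\mathfrak{h}}^{2M,+}$ is the regular branch. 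Near infinity, $x\to\infty$ corresponds to $x_*\to+\infty$ since $x_* = x+\mathcal{O}(\log x)$, and $w(x)\sim\hat\omega^2 x^2$, so dividing~(\ref{AI}) by $w$ only shifts the power of $x$ by $-2$ while leaving the exponential factors $e^{\pm\sqrt{\mu^2+\omega^2}\,r}$ intact; hence $\tilde{\mathfrak{h}}^{\infty,+}$ still grows and $\tilde{\mathfrak{h}}^{\infty,-}$ still decays exponentially. I expect no real obstacle: the argument is pure bookkeeping, and the single mild subtlety --- that $w$ grows polynomially at infinity --- is harmless precisely because the dichotomy there is exponential, so a polynomial rescaling can neither turn a growing solution into a decaying one nor vice versa; it then suffices to record that $w>0$ throughout $(1,\infty)$ so that the correspondence is an isomorphism on the whole interval rather than merely near the ends.
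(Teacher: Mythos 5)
Your proposal is correct and follows exactly the route the paper intends: the paper states Proposition~\ref{Asymp2} as a ``trivial consequence'' of Proposition~\ref{Asymp} without writing out a proof, and the accompanying remark records precisely the leading-order behaviour of $\tilde{\mathfrak{h}}^{2M,\pm}$ and $\tilde{\mathfrak{h}}^{\infty,\pm}$ that you derive (the factor $\frac{1}{1+\hat{\omega}^2}$ at the horizon and the power shift by $-2$ with prefactor $\frac{1}{\hat{\omega}^2}$ at infinity). Your added care about $w>0$ on $(1,\infty)$ and the polynomial-versus-exponential dichotomy at infinity is exactly the bookkeeping the paper leaves implicit.
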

\begin{remark} In the case $4M\mu$ is not a positive integer or $4M\mu$ is a positive integer and $C_N=0$ the leading order terms of these basis elements are
\begin{eqnarray}
\tilde{\mathfrak{h}}^{2M,\pm}&=&(x-1)^{\pm \hat{\mu}}\Big(\frac{1}{1+\hat{\omega}^2}+\mathcal{O}(x-1)\Big),\label{asymschh}\\
\tilde{\mathfrak{h}}^{\infty,\pm}&=&e^{\pm\sqrt{\hat{\mu}^2+\hat{\omega}^2}x}x^{ \pm\frac{(2\hat{\mu}^2+\hat{\omega}^2)}{2\sqrt{\hat{\mu}^2+\hat{\omega}^2}}}\Big(\frac{1}{\hat{\omega}^2}+\mathcal{O}\Big(\frac{1}{x}\Big)\Big).\label{asymschi}
\end{eqnarray}
\end{remark}
\subsection{Direct Variational Argument}\label{DVA}
This section establishes a variational argument which will be used to infer the existence of a negative eigenvalue to the Schr\"odinger operator in equation~(\ref{ODEF}).
 \begin{prop}\label{propvar}
 Let~$W:\mathbb{R}\rightarrow \mathbb{R}$ and define
 \begin{eqnarray}
 E_0:=\inf_{\underset{||v||_{L^2(\mathbb{R})}=1}{v\in H^1(\mathbb{R})}}\Big\{E(v):=\langle\nabla v,\nabla v\rangle_{L^2(\mathbb{R})}+\langle Wv,v\rangle_{L^2(\mathbb{R})}\Big\}.
 \end{eqnarray}
 Suppose that $W=p+q$ with~$q\in C^0(\mathbb{R})$ such that
 \begin{eqnarray}
 \lim_{|x|\rightarrow \infty}q(x)=0
 \end{eqnarray}
 and~$p(x)\in L^{\infty}(\mathbb{R})$ positive. If~$E_0<0$, then there exists~$u\in H^1(\mathbb{R})$ such that~$||u||_{L^2(\mathbb{R})}=1$ and~${E(u)=E_0}$.
 \end{prop}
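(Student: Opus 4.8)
The plan is to apply the direct method of the calculus of variations, the essential difficulty being the failure of compactness of the embedding $H^1(\mathbb{R})\hookrightarrow L^2(\mathbb{R})$. I would fix a minimizing sequence $u_n\in H^1(\mathbb{R})$ with $\|u_n\|_{L^2(\mathbb{R})}=1$ and $E(u_n)\to E_0$. Since $q$ is continuous and vanishes at infinity it is bounded, so $W=p+q\in L^\infty(\mathbb{R})$, and then $\|\nabla u_n\|_{L^2(\mathbb{R})}^2=E(u_n)-\langle Wu_n,u_n\rangle_{L^2(\mathbb{R})}\le E(u_n)+\|W\|_{L^\infty(\mathbb{R})}$ is bounded; together with $\|u_n\|_{L^2(\mathbb{R})}=1$ this bounds $u_n$ in $H^1(\mathbb{R})$. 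Passing to a subsequence I get $u_n\rightharpoonup u$ weakly in $H^1(\mathbb{R})$ and, by Rellich--Kondrachov, $u_n\to u$ strongly in $L^2_{\mathrm{loc}}(\mathbb{R})$.

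Next I would pass to the limit in $E$. Write $u_n=u+r_n$ with $r_n\rightharpoonup 0$ in $H^1(\mathbb{R})$, and set $A(v):=\|\nabla v\|_{L^2(\mathbb{R})}^2+\langle pv,v\rangle_{L^2(\mathbb{R})}\ge 0$ (nonnegative because $p\ge 0$), so $E(v)=A(v)+\langle qv,v\rangle_{L^2(\mathbb{R})}$. For the $q$-term, splitting $\int q\,r_n^2$ over $\{|x|\le R\}$ and $\{|x|>R\}$, the first piece tends to $0$ by strong $L^2_{\mathrm{loc}}$ convergence of $r_n$ to $0$ and the second is bounded by $\big(\sup_{|x|>R}|q|\big)\,\|r_n\|_{L^2(\mathbb{R})}^2$, small for $R$ large; hence $\langle qu_n,u_n\rangle_{L^2(\mathbb{R})}\to\langle qu,u\rangle_{L^2(\mathbb{R})}$. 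For $A$, bilinearity gives $A(u_n)=A(u)+A(r_n)+2B(u,r_n)$ with $B$ the associated bilinear form, $B(u,r_n)\to 0$ because $r_n\rightharpoonup 0$ in $H^1$ while $\nabla u,pu\in L^2(\mathbb{R})$, and $A(r_n)\ge 0$; so $A(u)\le\lim_n A(u_n)$. Hence
\[
E(u)=A(u)+\langle qu,u\rangle_{L^2(\mathbb{R})}\le\lim_n\big(A(u_n)+\langle qu_n,u_n\rangle_{L^2(\mathbb{R})}\big)=\lim_n E(u_n)=E_0,
\]
and in particular $u\not\equiv 0$, since otherwise $E(u)=0>E_0$.

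Finally I would exclude loss of mass. By weak lower semicontinuity of the $L^2$ norm $m:=\|u\|_{L^2(\mathbb{R})}^2\le 1$, and by the previous step $m>0$. If $0<m<1$ then $v:=u/\sqrt m$ is admissible and, since $E$ is a quadratic form, $E(v)=\tfrac1m E(u)<E(u)\le E_0$ (using $E(u)\le E_0<0$ and $\tfrac1m>1$), contradicting the definition of $E_0$. Therefore $m=1$, $u$ is admissible, and then $E(u)\le E_0$ forces $E(u)=E_0$, so $u$ is the sought minimizer.

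The step I expect to be the main obstacle is precisely the last one: ruling out that mass escapes to spatial infinity along the minimizing sequence. The two features that make it work are that $q\to 0$ at infinity --- which renders $v\mapsto\langle qv,v\rangle_{L^2(\mathbb{R})}$ weakly continuous, so that only the manifestly nonnegative part $A$ can concentrate --- and the strict negativity $E_0<0$ together with the degree-two homogeneity of $E$, which via the rescaling argument forbids any genuine splitting of the minimizing sequence. Positivity of $p$ enters only through the nonnegativity $A\ge 0$.
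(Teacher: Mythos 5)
Your proposal is correct and follows essentially the same route as the paper: the direct method with a minimizing sequence bounded in $H^1(\mathbb{R})$, weak lower semicontinuity of the nonnegative part (Dirichlet plus $p$-term), weak continuity of the $q$-term, and the rescaling argument exploiting $E_0<0$ and degree-two homogeneity to rule out loss of mass. The only cosmetic difference is that you establish $\langle qu_n,u_n\rangle_{L^2}\to\langle qu,u\rangle_{L^2}$ by hand via the split over $\{|x|\le R\}$ and $\{|x|>R\}$, whereas the paper packages the same estimate as a compactness lemma for the multiplication operator $M_q:H^1(\mathbb{R})\to L^2(\mathbb{R})$.
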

 \begin{proof}
 By the definition of the infimum there exists a minimizing sequence~$(u_m)_m\subset H^1(\mathbb{R})$ and~$||u_m||_{L^2}=1$ such that 
 \begin{eqnarray}
 \lim_{n\rightarrow \infty}E(u_n)=E_0.
 \end{eqnarray}
Now,~$u_n$ are bounded in~$H^1(\mathbb{R})$ by the following argument. There exists an~$M\in \mathbb{N}$ such that, for all~$m\geq M$,
 \begin{eqnarray}
 E(u_m)\leq E_0+1.
 \end{eqnarray}
 So, for~$m\geq M$,
 \begin{eqnarray}
\langle\nabla u_m,\nabla u_m\rangle_{L^2(\mathbb{R})}\leq E_0+1+\sup_{x\in \mathbb{R}}|p(x)|+\sup_{x\in \mathbb{R}}|q(x)|.
 \end{eqnarray}
 Hence~$||u_m||_{H^1(\mathbb{R})}$ is controlled. Now using theorem~\ref{lc} from appendix~\ref{analysis} there exists a subsequence~$(u_{m_n})_n$ such that~$u_{m_n}\rightharpoonup u$ in~$H^1(\mathbb{R})$. \\
 
 \noindent Consider
\begin{eqnarray}
E(u_m)=\int_{\mathbb{R}}|\nabla u_m|^2+p(x)|u_m|^2+q(x)|u_m|^2dx.\label{Es}
\end{eqnarray}
Since the Dirichlet energy is lower semicontinuous, only the latter two terms under the integral~(\ref{Es}) need to be examined more carefully. The middle term in integral~$(\ref{Es})$ are simply a weighted~$L^2$ integral, so lower semicontinuity is established via
\begin{eqnarray}
||u_n-u||^2_{L^2_{p}}=\langle u_n-u,u_n-u\rangle_{L^2_{p}}=||u_n||^2_{L^2_{q}}-2\langle u_n,u\rangle_{L^2_{p}}+||u||^2_{L^2_{p}}.
\end{eqnarray}
So
\begin{eqnarray}
||u||^2_{L^2_p}\leq ||u_n||^2_{L^2_p}-2\langle u,u_n-u\rangle_{L^2_p}.
\end{eqnarray}
Hence by weak convergence
\begin{eqnarray}
||u||^2_{L^2_p}\leq \liminf_{n\rightarrow \infty}||u_n||^2_{L^2_p}.
\end{eqnarray}
The proposition~\ref{compactness} from appendix~\ref{analysis} establishes that the multiplication operator~$M_q:u\rightarrow qu$ is compact from~$H^1(\mathbb{R})$ to~$L^2(\mathbb{R})$. Hence, by the characterisation of compactness through weak convergence (theorem~\ref{lc} from appendix~\ref{analysis}),~$qu_m\rightarrow qu$ in~$L^2(\mathbb{R})$. Therefore
\begin{eqnarray}
\langle qu,u\rangle_{L^2}=\lim_{m\rightarrow \infty}\langle qu_m,u_m \rangle_{L^2}=\liminf_{m\rightarrow\infty}\langle qu_m,u_m \rangle_{L^2}.
\end{eqnarray}
Hence, the last term under the integral~(\ref{Es}) is also lower semicontinuous. Therefore
\begin{eqnarray}
E(u)\leq \liminf_{n\rightarrow \infty}E(u_n)=E_0.
\end{eqnarray}
\noindent Since the infimum is negative the minimiser is non-trivial. One needs to show that there is no loss of mass, i.e.,~$||u||_{L^2}=1$. Note~$||u||_{L^2}\leq \liminf_{n\rightarrow \infty}||u_n||_{L^2}=1$. So suppose~$||u||_{L^2}<1$ and define~$\tilde{u}=\frac{u}{||u||_{L^2}}$ so~$||\tilde{u}||_{L^2}=1$, then
 \begin{eqnarray}
 E(\tilde{u})=\frac{E_0}{||u||^2_{L^2(\mathbb{R})}}\leq E_0
 \end{eqnarray}
 since~$||u ||_{L^2}\leq 1$. Hence one would obtain a contradition to the infimum if~$||u||_{L^2}<1$.
 \end{proof}
 \begin{corollary}\label{corm}
 Let $W=V$ with $V$ as defined in equation~(\ref{pot}) then
 \begin{eqnarray}
 E(v):=\langle \nabla v,\nabla v\rangle_{L^2(\mathbb{R})}+\langle V v,v \rangle_{L^2(\mathbb{R})}\label{EFM}\qquad E_0:=\inf_{\underset{||v||_{L^2(\mathbb{R})}=1}{v\in H^1(\mathbb{R})}}E(v)
 \end{eqnarray}
 satisfies the assumptions of proposition~\ref{propdirect}.
 \end{corollary}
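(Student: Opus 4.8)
The plan is to treat Corollary~\ref{corm} as a checklist: each hypothesis that Proposition~\ref{propdirect} imposes on the energy functional $E$ is a structural property of the explicit potential $V$ in~(\ref{pot}), so the proof reduces to a direct regularity and asymptotic analysis of $V$, viewed as a function of the tortoise variable $x_*$. I would carry out this analysis term by term on~(\ref{pot}), using that $x \mapsto x_* = x + \log|x-1|$ is a smooth strictly increasing bijection of $(1,\infty)$ onto $\mathbb{R}$, so that statements about $x \to 1^+$ and $x \to \infty$ translate into statements about $x_* \to -\infty$ and $x_* \to +\infty$.

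First I would establish continuity and boundedness. The two denominators in~(\ref{pot}), namely $x^4$ and $1 + \hat{\omega}^2 x^3$, are strictly positive on $(1,\infty)$, so $V$ is a quotient of polynomials with nonvanishing denominator there and hence $V \in C^\infty((1,\infty))$, i.e. $V \in C^0(\mathbb{R})$ in the $x_*$ variable. For the limits: as $x \to 1^+$ every summand in~(\ref{pot}) carries an explicit factor $(x-1)$ with the remaining factors bounded, so $V \to 0$ (i.e. $V(x_*) \to 0$ as $x_* \to -\infty$); as $x \to \infty$ the first term tends to $\hat{\omega}^2$ while the other three are $O(x^{-2})$, so $V(x_*) \to \hat{\omega}^2$ as $x_* \to +\infty$. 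Continuity together with these finite limits shows $V \in L^\infty(\mathbb{R})$. Reading off the first term of~(\ref{pot}) then gives the decomposition $V = p + q$ demanded by Proposition~\ref{propdirect}: set $p(x_*) := \hat{\omega}^2\frac{x-1}{x}$, which is positive and bounded on $\mathbb{R}$, hence in $L^\infty(\mathbb{R})$, and
\[
q(x_*) := \frac{(6x-11)(x-1)}{x^4}+\frac{18(x-1)^2}{x^4(1+\hat{\omega}^2x^3)^2}-\frac{6(4x-5)(x-1)}{x^4(1+\hat{\omega}^2x^3)},
\]
which is continuous with $q(x_*) \to 0$ as $x_* \to \pm\infty$ by the estimates just given. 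This is exactly the form $W = p + q$ under which $E$ falls within the scope of Proposition~\ref{propdirect}, and it is what guarantees the coercivity and weak lower semicontinuity of $E$ that the direct method requires.

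The one genuinely substantive point, and the main obstacle, is the strict sign condition $E_0 < 0$: it is what places the infimum strictly below the bottom $\hat{\omega}^2 \ge 0$ of the essential spectrum of the Schr\"odinger operator, and is therefore what ultimately forces the minimizer to be attained and nontrivial. Verifying it is not a soft structural fact but a quantitative estimate, and this is where the window $|\omega| \in [\frac{3}{20M},\frac{8}{20M}]$ enters. I would insert the explicit test function constructed in section~\ref{TF} into $E$ and estimate the two competing contributions — the positive Dirichlet energy $\langle \nabla v, \nabla v\rangle_{L^2(\mathbb{R})}$ against the potential energy $\langle V v, v\rangle_{L^2(\mathbb{R})}$, which must be made negative enough by concentrating $v$ on the region where $V < 0$ — to conclude $E(v) < 0$ for $v$ of unit $L^2$ norm, and hence $E_0 < 0$. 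Balancing these two terms, and pinning down the admissible range of $\omega$ for which the balance is favorable, is the crux of the argument; by contrast the continuity, boundedness and decomposition of $V$ established above are routine.
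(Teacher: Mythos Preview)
Your decomposition $V = p + q$ with $p(x_*) = \hat{\omega}^2\frac{x-1}{x}$ and $q$ the remaining three terms, together with the verification that $p$ is positive and bounded and that $q \in C^0(\mathbb{R})$ vanishes as $|x_*| \to \infty$, is exactly what the paper does, and your term-by-term asymptotic analysis is correct. On that part there is nothing to add.

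Where you diverge from the paper is in scope. The reference to Proposition~\ref{propdirect} in the statement of the corollary is evidently a slip for Proposition~\ref{propvar}: the proof in the paper checks only the structural hypotheses of Proposition~\ref{propvar} (the decomposition $W = p + q$ with $p \in L^\infty$ positive and $q$ continuous vanishing at infinity), and says nothing about $E_0 < 0$. The quantitative negativity of the infimum is established separately, in the next proposition, via the explicit test function. So your final paragraph, while not wrong in content, does not belong to this corollary; the paper deliberately separates the soft structural check (here) from the hard test-function estimate (Proposition~\ref{h2proof}). Once you drop that paragraph, your argument and the paper's coincide.
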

 \begin{proof}
The function~$V:\mathbb{R}\rightarrow \mathbb{R}$ can be written as $V=p+q$ with $p$ and $q$ as follows. Define
 \begin{eqnarray}
p(x_*)&:= & \hat{\omega}^2\frac{x-1}{x},\\
q(x_*)&:=&\frac{(6x-11)(x-1)}{x^4}+\frac{18(x-1)^2}{x^4(1+\hat{\omega}^2x^3)^2}-\frac{6(4x-5)(x-1)}{x^4(1+\hat{\omega}^2x^3)}
 \end{eqnarray}
 where in these expressions $x$ considered as a implicit function of $x_*$. Since~$x\in (1,\infty)$, it follows that~$p(x_*)>0$ for all~$x_*\in \mathbb{R}$. Moreover,
 \begin{eqnarray}
 \sup_{x_*\in \mathbb{R}}|p(x_*)|=1.
 \end{eqnarray}
Therefore,~$p\in L^{\infty}(\mathbb{R})$. Note that the function~$q$ satisfies~$\lim_{|x_*|\rightarrow \infty}q(x_*)=0$. So the assumptions of proposition~\ref{propvar} hold.
 \end{proof}
 \subsection{The Test Function and Existence of a Minimiser}\label{TF}
The ODE~(\ref{ODEF}) is now in a form where a direct variational argument can be used to prove that there exists an eigenfunction of the Schr\"odinger operator associated to the left-hand side of the ODE~(\ref{ODEF}) with a negative eigenvalue, i.e.~$-\hat{\mu}^2<0$. The following proposition constructs a suitable test function such that it is in the correct function space,~$H^1(\mathbb{R})$, and, for all $|\hat{\omega}|\in[\frac{3}{10},\frac{8}{10}]$, implies that the infimum of the energy functional in equation~(\ref{EFM}) is negative. (As will be apparent, the negativity is inferred via complicated but purely algebraic calculations.)
\begin{prop}\label{h2proof}
Define~$u_T(x_*):=x(1+|\hat{\omega}|^2x^3)(x-1)^{\frac{1}{n}}e^{-4|\hat{\omega}| (x-1)}$ where~$x$ is an implicit function of~$x_*$,~$n$ is a finite non-zero natural number,~$\hat{\omega}\in \mathbb{R}\setminus\{0\}$ and define $E$ and $E_0$ as in equation~(\ref{EFM}) of corollary~\ref{corm}. Then~$u_T\in H^1(\mathbb{R})$ and, for~$n=100$ and~$|\hat{\omega}|\in[\frac{3}{10},\frac{8}{10}]$,~${E_0\leq E(u_T)<0}$. 
\end{prop}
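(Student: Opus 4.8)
The proof naturally splits into two independent parts: first, that $u_T\in H^1(\mathbb{R})$; second, the quantitative statement $E(u_T)<0$ for $n=100$ and $|\hat{\omega}|\in[\tfrac{3}{10},\tfrac{8}{10}]$. The inequality $E_0\le E(u_T)$ in the conclusion only records that the constrained infimum defining $E_0$ is negative: since $E$ and $\|\cdot\|_{L^2(\mathbb{R})}^2$ are both homogeneous of degree two, $E_0=\inf_{0\neq v\in H^1(\mathbb{R})}E(v)/\|v\|_{L^2(\mathbb{R})}^2$, so $E(u_T)<0$ already forces $E_0<0$ (and $E_0\le E(u_T)$ after normalising $u_T$), which is what Proposition~\ref{propvar}, via Corollary~\ref{corm}, needs. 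The substance is therefore entirely in establishing $E(u_T)<0$, and the plan is to change variables back to $x\in(1,\infty)$, insert the explicit $u_T$ and the explicit potential $V$ from~(\ref{pot}), and reduce to a one-parameter family of concrete integral inequalities that can be settled by direct computation.

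For the $H^1$ membership, recall that $x_*=x+\log|x-1|$ is a smooth increasing bijection of $(1,\infty)$ onto $\mathbb{R}$, so $u_T$ is smooth on $\mathbb{R}$ and only its behaviour as $x_*\to\pm\infty$ is at issue. As $x_*\to-\infty$ one has $x\to 1^+$ with $x-1=e^{x_*-1}(1+o(1))$, hence $(x-1)^{1/n}=e^{(x_*-1)/n}(1+o(1))$ and $u_T\sim(1+\hat{\omega}^2)e^{(x_*-1)/n}$ decays exponentially; using $\partial_{x_*}=\tfrac{x-1}{x}\,\partial_x$ and differentiating in $x$ (the most singular contribution coming from $\partial_x(x-1)^{1/n}$) one gets $\partial_{x_*}u_T=O\big((x-1)^{1/n}\big)=O\big(e^{(x_*-1)/n}\big)$, also exponentially decaying. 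As $x_*\to+\infty$ one has $x=x_*+O(\log x_*)$, and both $u_T$ and $\partial_{x_*}u_T$ are bounded by $C\,x_*^{\,4+1/n}e^{-4|\hat{\omega}|(x_*-1)(1+o(1))}$, which is square-integrable. Hence $u_T,\partial_{x_*}u_T\in L^2(\mathbb{R})$ and $u_T\in H^1(\mathbb{R})$.

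For the energy, write $U(x):=u_T(x_*(x))$. Changing variables in the functional~(\ref{EFM}) with $dx_*=\tfrac{x}{x-1}\,dx$ and $\partial_{x_*}u_T=\tfrac{x-1}{x}U'$ gives
\[
E(u_T)=\int_1^\infty\!\Big(\tfrac{x-1}{x}\,U'(x)^{2}+\tfrac{x}{x-1}\,V(x)\,U(x)^{2}\Big)\,dx .
\]
Here $U'(x)=(x-1)^{1/n-1}e^{-4|\hat{\omega}|(x-1)}\,B(x,|\hat{\omega}|)$ with $B$ a polynomial in $x$, so the kinetic term equals $\tfrac{1}{x}B(x,|\hat{\omega}|)^{2}\,(x-1)^{2/n-1}e^{-8|\hat{\omega}|(x-1)}$; and each of the four pieces of $V$ in~(\ref{pot}) carries a factor $(x-1)$, which cancels the $\tfrac{1}{x-1}$, while the factor $(1+\hat{\omega}^2x^3)^2$ in $U^2$ cancels the matching denominators in $V$, so the potential term equals $(x-1)^{2/n}e^{-8|\hat{\omega}|(x-1)}$ times a rational function of $x$ regular on $[1,\infty)$. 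Collecting and substituting $y=x-1$, and performing one polynomial division by $x$,
\[
E(u_T)=\int_0^\infty y^{\,2/n-1}\,e^{-8|\hat{\omega}|y}\Big(P(y,|\hat{\omega}|)+\tfrac{c(|\hat{\omega}|)}{\,y+1\,}\Big)\,dy ,
\]
where $P$ is a polynomial in $y$ with coefficients polynomial in $|\hat{\omega}|$ and $c(|\hat{\omega}|)$ is an explicit constant coming from the single surviving $1/x$; the integrand is integrable at $y=0$ since $2/n-1>-1$ for $n=100$, consistently with the $H^1$ bound above.

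Finally, the polynomial part integrates in closed form via $\int_0^\infty y^{\,2/n-1+j}e^{-8|\hat{\omega}|y}\,dy=\Gamma(\tfrac{2}{n}+j)\,(8|\hat{\omega}|)^{-2/n-j}$, and the remaining $\int_0^\infty \tfrac{y^{\,2/n-1}}{y+1}e^{-8|\hat{\omega}|y}\,dy$ is a convergent positive quantity lying strictly between $0$ and $\Gamma(\tfrac2n)(8|\hat{\omega}|)^{-2/n}$, so it may be replaced by a bound of the appropriate sign. With $n=100$ (so $\tfrac2n=\tfrac1{50}$) this presents $E(u_T)$—or a controlled upper bound for it—as an explicit elementary function $F$ of the single variable $|\hat{\omega}|$, built from the numbers $\Gamma(\tfrac1{50}+j)$ and powers of $8|\hat{\omega}|$. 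The proof is completed by checking $F(|\hat{\omega}|)<0$ for all $|\hat{\omega}|\in[\tfrac3{10},\tfrac8{10}]$—these endpoints corresponding, under $\hat{\omega}=2M\omega$, to $|\omega|\in[\tfrac{3}{20M},\tfrac{8}{20M}]$—which, $F$ being a fixed function on a compact interval, is done by bounding $F$ and $F'$ and subdividing $[\tfrac3{10},\tfrac8{10}]$ into finitely many pieces (or, alternatively, by estimating the original integral over $(1,\infty)$ directly from pointwise bounds on the integrand and its exponential weight). This last verification is long but purely algebraic and numerical; it is the main obstacle, and once it is in hand every remaining step is elementary.
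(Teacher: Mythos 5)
Your strategy is the same as the paper's: change variables back to $x\in(1,\infty)$, substitute $y=x-1$, expand the integrand of $E(u_T)$ into terms of the form $y^{2/n-1+j}e^{-8|\hat{\omega}|y}$ plus a single non-polynomial piece involving $1/(y+1)$, evaluate everything in closed form, and check negativity of the resulting explicit function of $|\hat{\omega}|$ on $[\tfrac{3}{10},\tfrac{8}{10}]$. (The paper phrases the closed-form evaluation via the confluent hypergeometric function $U(\tfrac{2}{n},k+\tfrac{2}{n};8|\hat{\omega}|)$ and its recurrences rather than binomially expanding into Gamma functions, but that is the same computation.) Your $H^1$ argument by asymptotics at $x_*\to\pm\infty$ is fine and if anything cleaner than the paper's. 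However, there is a genuine gap: the proposition \emph{is} the quantitative inequality, and you defer precisely that. You never produce the coefficients of $P(y,|\hat{\omega}|)$ or the constant $c(|\hat{\omega}|)$, and you end by declaring the verification of $F(|\hat{\omega}|)<0$ to be ``the main obstacle'' to be done later. The paper's proof consists almost entirely of carrying this out: it records the explicit coefficients $a_1,\dots,a_{11}$, reduces $E(u_T)/\|u_T\|_{L^2}^2$ to an explicit rational function of $|\hat{\omega}|$ and $n$, and for $n=100$ certifies negativity on the whole interval via Sturm's algorithm applied to the numerator polynomial together with sign checks at the endpoints. A subdivision-plus-derivative-bound check as you propose could also work, but it has to actually be exhibited.

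A second point deserves attention because it is where your plan could have failed. You propose to handle the non-elementary term $\int_0^\infty \tfrac{y^{2/n-1}}{y+1}e^{-8|\hat{\omega}|y}\,dy$ by replacing it with ``a bound of the appropriate sign,'' noting it lies between $0$ and $\Gamma(\tfrac{2}{n})(8|\hat{\omega}|)^{-2/n}$. For $n=100$ one has $\Gamma(\tfrac{2}{n})=\Gamma(\tfrac{1}{50})\approx 50$, which dwarfs the factors $\Gamma(\tfrac{2}{n}+j)\approx (j-1)!$ attached to the polynomial terms; so if the coefficient $c(|\hat{\omega}|)$ of this term were nonzero, such a crude replacement would almost certainly swamp the (small) negative value you are trying to certify. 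What actually saves the computation is that $c(|\hat{\omega}|)\equiv 0$: in the paper's notation the coefficient $a_1$ of $f_0(x)=x^{-1}(x-1)^{2/n-1}e^{-8|\hat{\omega}|(x-1)}$ vanishes identically, through an exact cancellation between the kinetic contribution ($c_1=1$) and the potential contribution. This is the structural fact that makes $E(u_T)/\|u_T\|_{L^2}^2$ a genuine rational function of $|\hat{\omega}|$ and renders the Sturm-algorithm check possible; your write-up neither observes it nor verifies that your estimate survives without it. Finally, a small remark on the statement itself: $E_0\le E(u_T)$ should be read, as you do parenthetically, as $E_0\le E(u_T)/\|u_T\|_{L^2(\mathbb{R})}^2$ after normalisation; the paper's proof only establishes (and only needs) $E_0\le E(u_T)/\|u_T\|_{L^2(\mathbb{R})}^2<-\tfrac{1}{4000}$.
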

\begin{proof}
Let $k\in \mathbb{N}\cup\{0\}$ and define the following functions
\begin{align}
f_k(x):=x^{k-1}(x-1)^{\frac{2}{n}-1}e^{-8|\hat{\omega}| (x-1)}\label{fk}.
\end{align}
The $H^1(\mathbb{R})$-norm of $u_T$ can be expressed as
\begin{eqnarray}
||u_{T}||_{H^1(\mathbb{R})}^2=\int_{1}^{\infty}\Big|\frac{x-1}{x} \frac{du_T}{dx}\Big|^2\frac{x}{x-1}dx+\int_{1}^{\infty}|u_T|^2\frac{x}{x-1}dx
\end{eqnarray}
where on the right-hand side the change of variables from $x_*\in \mathbb{R}$ to $x\in(1,\infty)$ has been made. 
To calculate $||u_T||_{L^2(\mathbb{R})}$ it is useful to write it as a linear combination of the functions $f_k$ in equation~(\ref{fk}). Explicitly, one can show that 
\begin{eqnarray}
|u_T|^2\frac{x}{x-1}=f_4(x)+2|\hat{\omega}|^2f_7(x)+|\hat{\omega}|^4f_{10}(x).
\end{eqnarray}
Similarly, one can show that
\begin{align}
\Big|\frac{x-1}{x}\frac{du_T}{dx}\Big|^2\frac{x}{x-1}=\sum_{k=1}^{11}c_{k}f_{k-1}(x)
\end{align}
with
\begin{align}
\begin{split}
c_1&=1,\qquad c_2=-2 - \frac{2}{n} - 8 |\hat{\omega}|,\qquad c_3= 1 + \frac{1}{n^2} + \frac{2}{n} + 16 |\hat{\omega}| + \frac{8 |\hat{\omega}|}{n} + 
 16 |\hat{\omega}|^2,\\
c_4&= -8 |\hat{\omega}| - \frac{8 |\hat{\omega}|}{n} - 24 |\hat{\omega}|^2,\qquad c_5=-\frac{10 |\hat{\omega}|^2}{n} - 
 40 |\hat{\omega}|^3, \\
 c_6&=8 |\hat{\omega}|^2 + \frac{2 |\hat{\omega}|^2}{n^2} + \frac{10 |\hat{\omega}|^2}{n} + 80 |\hat{\omega}|^3 + \frac{
 16 |\hat{\omega}|^3}{n} + 32 |\hat{\omega}|^4, \qquad c_7=-40 |\hat{\omega}|^3 - \frac{16 |\hat{\omega}|^3}{n} - 
 48 |\hat{\omega}|^4, \\
 c_8&=-\frac{8 |\hat{\omega}|^4}{n} - 32 |\hat{\omega}|^5, \qquad c_9=16 |\hat{\omega}|^4 + \frac{|\hat{\omega}|^4}{n^2} + \frac{8 |\hat{\omega}|^4}{n} + 64 |\hat{\omega}|^5 +\frac{ 
 8 |\hat{\omega}|^5}{n} + 16 |\hat{\omega}|^6,\\
 c_{10}&=-32 |\hat{\omega}|^5 - \frac{8 |\hat{\omega}|^5}{n} - 
 32 |\hat{\omega}|^6, \qquad c_{11}=16 |\hat{\omega}|^6.
 \end{split}
\end{align}
One can express $E(u_T)$ with the change of variables from $x_*$ to $x$ as
\begin{eqnarray}
E(u_T)=\int_1^{\infty}\Big(\Big|\frac{x-1}{x}\frac{du_T}{dx}\Big|^2+V(u_T)^2\Big)\frac{x}{x-1}dx.
\end{eqnarray}
The integrand can be written as
\begin{eqnarray}
\Big(\Big|\frac{x-1}{x}\frac{du_T}{dx}\Big|^2+V(u_T)^2\Big)\frac{x}{x-1}=\sum_{k=1}^{11}a_kf_{k-1}(x),
\end{eqnarray}
with
\begin{align}
\begin{split}
a_1&=0,\qquad a_2=-\frac{2+n+8n|\hat{\omega}|}{n},\qquad
a_3=1 + \frac{1}{n^2} + 16 |\hat{\omega}| + 16 |\hat{\omega}|^2 + \frac{2 + 8 |\hat{\omega}|}{n},\\
a_4&=-|\hat{\omega}|\Big(\frac{8 (1 + n) }{n}+33 |\hat{\omega}|\Big),\qquad
a_5=\frac{(21 n-10 ) |\hat{\omega}|^2}{n} - 40 |\hat{\omega}|^3,\\
a_6&=|\hat{\omega}|^2\Big(\frac{2 (1 + 5 n - 2 n^2) }{n^2} + \frac{16 (1 + 5 n) |\hat{\omega}|}{n} + 
 32 |\hat{\omega}|^2\Big),\qquad
a_7=-|\hat{\omega}|^3\Big(\frac{8 (2 + 5 n) }{n} + 39 |\hat{\omega}|\Big),\\
a_8&=-|\hat{\omega}|^4\Big(\frac{(8 + 15 n) }{n} + 32 |\hat{\omega}|\Big),\quad
a_9=|\hat{\omega}|^4\Big(\frac{1 + 8 n + 22 n^2 }{n^2} + \frac{8 (1 + 8 n) }{n} + 
 16 |\hat{\omega}|^2\Big),\\
a_{10}&=-|\hat{\omega}|^5\Big(\frac{8 (1 + 4 n) }{n} + 33 |\hat{\omega}|\Big),\quad
a_{11}=17 |\hat{\omega}|^6.
\end{split}
\end{align}
Therefore, if one can compute the integrals
\begin{eqnarray}
I_{k}:=\int_{1}^{\infty }f_{k}(x)dx\label{IK0}
\end{eqnarray}
for $k=0,...,10$ then one can compute $||u_T||_{L^2(\mathbb{R})}$, $||\frac{du_T}{dx_*}||_{L^2(\mathbb{R})}$ and $E(u_T)$.\\

Defining a change variables in the integrals~(\ref{IK0}) by $t=x-1$, the integrals~(\ref{IK0}) become 
\begin{eqnarray}
I_k=\int_0^{\infty}(t+1)^{k-1}t^{\frac{2}{n}-1}e^{-8|\hat{\omega}| t}.\label{IK}
\end{eqnarray}
Note that the confluent hypergeometric function of the second kind $U(a,b;z)$ can be defined as
\begin{eqnarray}
U(a,b;z):=\frac{1}{\Gamma(a)}\int_0^{\infty}(t+1)^{b-a-1}t^{a-1}e^{-z t}
\end{eqnarray}
for $a,b,z\in \mathbb{C}$ with $\mathrm{Re}(a)>0$ and $\mathrm{Re}(z)>0$ where $\Gamma(a)$ is the Euler Gamma function, which can be defined through the integral
\begin{eqnarray}
\Gamma(a)=\int_{0}^{\infty}t^{a-1}e^{-t}dt
\end{eqnarray}
for $a\in \mathbb{C}$ with $\mathrm{Re}(a)>0$. For a reference see chapter 9 of~\cite{dover}. Therefore, setting $a=\frac{2}{n}$, $b=k+\frac{2}{n}$ and $z=8|\hat{\omega}|$ gives
\begin{align}
I_k=\Gamma\Big(\frac{2}{n}\Big)U\Big(\frac{2}{n},k+\frac{2}{n};8|\hat{\omega}|\Big).
\end{align}
The function $U(a,b;z)$ satisfies the following recurrence properties (see  chapter 9 of~\cite{dover} and chapter 16 of~\cite{HSP}):
\begin{align}
U(0,b;z)&=1\label{HG0}\\
U(a,b;z)-z^{1-b}U(1+a-b,2-b;z)&=0\label{HG1}\\
U(a,b;z)-aU(a+1,b;z)-U(a,b-1;z)&=0\label{HG2}\\
(b-a-1)U(a,b-1;z)+(1-b-z)U(a,b;z)+z U(a,b+1;z)&=0.\label{HG3}
\end{align}
Setting $a=\frac{2}{n}$, $b=1+\frac{2}{n}$ and $z=8|\hat{\omega}|$ in equation~(\ref{HG1}), and using equation~(\ref{HG0}) allows one to calculate $I_1$. Setting $a=\frac{2}{n}$, $b=2+\frac{2}{n}$ and $z=8|\hat{\omega}|$ in equation~(\ref{HG2}), using $I_1$ and equation~(\ref{HG0}) allows one to calculate $I_2$. Setting $a=\frac{2}{n}$, $b=k+\frac{2}{n}$ and $z=8|\hat{\omega}|$ in equation~(\ref{HG3}), using $I_{k-1},...,I_1$ and equation~(\ref{HG0}) allows one to calculate $I_k$. Finally, one can show that $I_0<\infty$ by the following argument. One can see from the definition of $I_k$ in equation~(\ref{IK}) that
\begin{eqnarray}
I_0=\int_1^{\infty}\frac{1}{x(x-1)}(x-1)^{\frac{2}{n}}e^{-8|\hat{\omega}| (x-1)}dx.
\end{eqnarray}
Now, since $e^{-8|\hat{\omega}|(x-1)}<1$ on $x\in (1,\infty)$ and $\frac{(x-1)^{\frac{2}{n}-1}}{x}<\frac{1}{2}(x-1)$ for $n\geq 1$ on $x\in(2,\infty)$, 
\begin{align}
I_0&\leq \int_1^2\frac{1}{x(x-1)}(x-1)^{\frac{2}{n}}+\frac{1}{2}\int_2^{\infty}(x-1)e^{-8|\hat{\omega}| (x-1)}<\infty.\label{I0}
\end{align}

Using the recurrence properties in equations~(\ref{HG0})--(\ref{HG3}) and the estimate~(\ref{I0}) allows one to explicitly show that ${||u_T||_{H^1(\mathbb{R})}<\infty}$ for~$n\geq 1$,~$\hat{\omega}\in \mathbb{R}\setminus\{0\}$, i.e.,~$u_T\in H^1(\mathbb{R})$. Moreover, one can calculate~$\frac{E(u_T)}{||u_T||_{L^2(\mathbb{R})}}$. Explicitly, $\frac{E(u_T)}{||u_T||_{L^2(\mathbb{R})}}$ is given by
\begin{eqnarray}
\frac{E(u_T)}{||u_T||^2_{L^2(\mathbb{R})}}=\frac{|\hat{\omega}|^2\sum_{i=1}^{9}p_i(n)|\hat{\omega}|^{i-1}}{\sum_{j=1}^{10}q_j(n)|\hat{\omega}|^{i-1}}
\end{eqnarray}
with
\begin{align}
\begin{split}
p_1(n)&:=16 + 416 n + 5576 n^2 + 36176 n^3 + 123809 n^4 + 234794 n^5 + 
 244459 n^6 + 128034 n^7 + 25560 n^8\\
 p_2(n)&:=32 n (16 + 336 n + 3296 n^2 + 15572 n^3 +29107 n^4 +21238 n^5+4361 n^6 - 366 n^7)\\
 p_3(n)&:=128 n^2 (56 + 924 n + 6130 n^2 + 20133 n^3 + 11972 n^4 - 3365 n^5 - 
   466 n^6)\\
 p_4(n)&:=1024 n^3 (56 + 700 n + 2750 n^2 + 6041 n^3- 1715 n^4 - 18 n^5)\\
 p_5(n)&:=2048 n^4 (140 +1260 n + 2225 n^2 + 3443 n^3 - 1758 n^4)\\
 p_6(n)&:=32768 n^5 (28 + 168 n + 43 n^2 + 111 n^3)\\
  p_7(n)&:=917504 n^6 (2 + 7 n - 3 n^2)\\
   p_8(n)&:=1048576 n^7 (2 + 3 n)\\
   p_9(n)&:=1048576 n^8\\
  q_1(n)&:=116 + 288 n + 2184 n^2 + 9072 n^3 + 22449 n^4 + 33642 n^5 + 
 29531 n^6 + 13698 n^7 + 2520 n^8\\
 q_2(n)&:=4 n (144 + 2016 n + 12104 n^2 + 39120 n^3 + 71801 n^4 + 73494 n^5 + 
   38171 n^6 + 7590 n^7)\\
 q_3(n)&:=128 n^2 (72 + 756 n + 3534 n^2 + 8535 n^3 + 11180 n^4 + 7137 n^5 + 
   1642 n^6)\\
 q_4(n)&:=1536 n^3 (56 + 420 n + 1510 n^2 + 2535 n^3 + 2351 n^4 + 706 n^5)\\
 q_5(n)&:=2048 n^4 (252 + 1260 n + 3485 n^2 + 3495 n^3 + 2554 n^4)\\
 q_6(n)&:=8192 n^5 (252 + 756 n + 1653 n^2 + 669 n^3 + 512 n^4)\\
  q_7(n)&:=393216 n^6 (14 + 21 n + 39 n^2)\\
   q_8(n)&:=524288 n^7 (18 + 9 n + 16 n^2)\\
   q_9(n)&:=9437184 n^8\\
   q_{10}(n)&:=4194304 n^9\\
\end{split}
\end{align}
Taking~$n=100$, one can check, via Sturm's algorithm~\cite{sturm}, that the polynomial 
\begin{eqnarray}
\mathfrak{p}(n,|\hat{\omega}|):=\sum_{i=1}^{9}p_i(n)|\hat{\omega}|^{i-1}
\end{eqnarray}
has two real roots in~$|\hat{\omega}|\in (0,1)$. Taking~$|\hat{\omega}|=\frac{3}{10}$ and~$|\hat{\omega}|=\frac{8}{10}$, one can check~$\mathfrak{p}(100,\frac{3}{10})<0$ and~${\mathfrak{p}(100,\frac{8}{10})<0}$. Hence, $E_0\leq \frac{E(u_T)}{||u_T||^2_{L^2(\mathbb{R})}}<-\frac{1}{4000}<0$ for all $|\hat{\omega}|\in [\frac{3}{10},\frac{8}{10}]$. 
\end{proof}
\subsection{Proof of Proposition~\ref{propG}}
To prove proposition~\ref{propG}, one can clearly reformulate as follows:
 \begin{prop}\label{propdirect}
For all~$|\hat{\omega}|\in[\frac{3}{10},\frac{8}{10}]$, there exists a~$C^{\infty}(\mathbb{R})$ solution~$\tilde{\mathfrak{h}}$ to the Schr\"odinger equation~(\ref{ODEF}) with~$\hat{\mu}> \frac{1}{20\sqrt{10}}>0$, and in the language of proposition~\ref{Asymp2},~$\tilde{k}_2=0$ and~$\tilde{c}_1=0$.
 \end{prop}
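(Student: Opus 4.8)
The plan is to feed the variational input of Proposition~\ref{h2proof} into Proposition~\ref{propvar} (whose hypotheses are verified for $V$ by Corollary~\ref{corm}), then upgrade the resulting $H^1$ minimiser to a classical eigenfunction by one-dimensional elliptic regularity, and finally read off the boundary conditions $\tilde k_2 = \tilde c_1 = 0$ from $H^1$-membership together with the asymptotic bases of Proposition~\ref{Asymp2}.

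Concretely, first: by Proposition~\ref{h2proof}, for every $|\hat\omega|\in[\frac{3}{10},\frac{8}{10}]$ the test function yields $E_0 \le E(u_T)/\|u_T\|_{L^2(\mathbb{R})}^2 < -\frac{1}{4000} < 0$, and by Corollary~\ref{corm} together with Proposition~\ref{propvar} this infimum is attained at some $u \in H^1(\mathbb{R})$ with $\|u\|_{L^2(\mathbb{R})} = 1$; in particular $u \not\equiv 0$. Next I would derive the Euler--Lagrange equation. Writing $J(v) := E(v) - E_0\|v\|_{L^2(\mathbb{R})}^2$, one has $J \ge 0$ on $H^1(\mathbb{R})$ (by the definition of $E_0$ and scaling) and $J(u) = 0$, so $u$ is a global minimiser of the differentiable quadratic functional $J$; hence $DJ(u) = 0$, i.e.\ $\langle u', \varphi'\rangle_{L^2} + \langle V u, \varphi\rangle_{L^2} = E_0\langle u, \varphi\rangle_{L^2}$ for all $\varphi \in H^1(\mathbb{R})$ (the middle pairing is well defined since $V = p+q$ is bounded). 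Thus $u$ is a weak solution of~(\ref{ODEF}) with $\hat\mu^2 := -E_0$, and $\hat\mu^2 > \frac{1}{4000} = \big(\frac{1}{20\sqrt{10}}\big)^2$, so $\hat\mu > \frac{1}{20\sqrt{10}} > 0$.

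For smoothness, note that $x \mapsto x_* = x + \log(x-1)$ is a smooth increasing bijection of $(1,\infty)$ onto $\mathbb{R}$, and the denominators $x$, $x^4$, $1+\hat\omega^2 x^3$ in~(\ref{pot}) do not vanish on $[1,\infty)$, so $V \in C^\infty(\mathbb{R})$. A standard bootstrap then gives $u \in C^\infty(\mathbb{R})$: from $u \in H^1_{\mathrm{loc}}$ and $u'' = (V+\hat\mu^2)u \in L^2_{\mathrm{loc}}$ one gets $u \in H^2_{\mathrm{loc}} \hookrightarrow C^1$, and iterating raises the regularity by one derivative each time.

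Finally, since $u \in H^1(\mathbb{R}) \subset L^2(\mathbb{R})$, its decomposition in the asymptotic bases of Proposition~\ref{Asymp2} cannot involve any mode that fails to be square-integrable near an end. By~(\ref{asymschi}), $\tilde{\mathfrak{h}}^{\infty,+}$ grows like $e^{+\sqrt{\hat\mu^2+\hat\omega^2}\,x}$ as $x_* \to +\infty$ and is not in $L^2$ there, so $\tilde c_1 = 0$; by~(\ref{asymschh}) together with $x-1 \sim e^{\,x_*-1}$ as $x_* \to -\infty$, the solution $\tilde{\mathfrak{h}}^{2M,-}$ has leading behaviour $(x-1)^{-\hat\mu} \sim e^{-\hat\mu(x_*-1)}$ (this leading term persists also in the exceptional case $4M\mu \in \mathbb{Z}_{>0}$, where $\tilde{\mathfrak{h}}^{2M,-}$ additionally contains a convergent term proportional to $(x-1)^{\hat\mu}\ln(x-1)$, which is harmless), hence $\tilde{\mathfrak{h}}^{2M,-} \notin L^2$ near $-\infty$ and $\tilde k_2 = 0$. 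Setting $\tilde{\mathfrak{h}} := u$ completes the proof. The most delicate point is this last step: one must know that $\hat\mu > 0$ (just established) makes the two retained modes $\tilde{\mathfrak{h}}^{2M,+}$ and $\tilde{\mathfrak{h}}^{\infty,-}$ genuinely $L^2$ near their respective ends --- immediate from~(\ref{asymschh})--(\ref{asymschi}) --- so that $H^1$-membership is precisely equivalent to $\tilde k_2 = \tilde c_1 = 0$, and that the Euler--Lagrange step correctly identifies the Lagrange multiplier as $E_0$ rather than merely some unknown eigenvalue.
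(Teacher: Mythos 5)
Your proposal is correct and follows essentially the same route as the paper: feed Proposition~\ref{h2proof} and Corollary~\ref{corm} into Proposition~\ref{propvar} to get a nontrivial $L^2$-normalised minimiser, derive the Euler--Lagrange equation with multiplier $E_0=-\hat{\mu}^2$, upgrade to smoothness by one-dimensional elliptic regularity, and read off $\tilde{k}_2=\tilde{c}_1=0$ from $H^1$-membership against the asymptotic bases of Proposition~\ref{Asymp2}. Your explicit handling of the Lagrange multiplier via $J(v)=E(v)-E_0\|v\|_{L^2}^2$ and of the exceptional logarithmic case in $\tilde{\mathfrak{h}}^{2M,-}$ merely spells out steps the paper delegates to cited references.
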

 \begin{proof}
 By proposition~\ref{propvar}, corollary~\ref{corm} and proposition~\ref{h2proof}, for all $\omega\in [\frac{3}{10},\frac{8}{10}]$, there exists a minimiser $u\in H^1(\mathbb{R})$ with $||u||_{L^2(\mathbb{R})}=1$ such that 
 \begin{eqnarray}
 E(u)=E_0:=\inf\Big\{\langle \nabla v,\nabla v\rangle_{L^2(\mathbb{R})}+\langle V v,v \rangle_{L^2(\mathbb{R})}:{v\in H^1(\mathbb{R})},{||v||_{L^2(\mathbb{R})}=1}\Big\}
 \end{eqnarray}
 with $V$ as defined in equation~(\ref{pot}). Moreover, by proposition~\ref{h2proof}, $E_0<-\frac{1}{4000}<0$. \\
 
By standard Euler--Lagrange methods (see theorem~3.21 and example~3.22 in~\cite{RIND}),~$u$ will weakly solve the ODE
\begin{eqnarray}
-\frac{d^2u}{dx_*^2}+V(x_*)u=-\hat{\mu}^2 u
\end{eqnarray}
with $-\hat{\mu}^2=E_0$. From proposition~\ref{h2proof}, $\hat{\mu}^2=-E_0>\frac{1}{4000}$. Hence, for all $|\hat{\omega}|\in[\frac{3}{10},\frac{8}{10}]$, there exists a weak solution $u\in H^{1}(\mathbb{R})$ to the Schr\"odinger equation~(\ref{ODEF}) with $||u||_{L^2(\mathbb{R})}=1$ and $\hat{\mu}=\sqrt{-E_0}>\frac{1}{20\sqrt{10}}$.\\

From the regularity theorem~\ref{Schreg}, any~$u\in H^1(\mathbb{R})$ which weakly solves the Sch\"odinger equation~(\ref{ODEF}) is in fact smooth. Therefore, for all $|\hat{\omega}|\in[\frac{3}{10},\frac{8}{10}]$, there exists a solution $u\in C^{\infty}(\mathbb{R})$ to the Schr\"odinger equation~(\ref{ODEF}) with $\hat{\mu}=\sqrt{-E_0}>\frac{1}{20\sqrt{10}}$.\\

 To verify the boundary conditions of $u$, recall by proposition~\ref{Asymp2} the solution $u$ can be expressed, in the bases associated to $r=2M$ and $r\rightarrow \infty$, as
 \begin{align}
 u&=\tilde{k}_1\tilde{\mathfrak{h}}^{2M,+}+\tilde{k}_2\tilde{\mathfrak{h}}^{2M,-}\\
 u&=\tilde{c}_1\tilde{\mathfrak{h}}^{\infty,+}+\tilde{c}_2\tilde{\mathfrak{h}}^{\infty,-}
 \end{align}
 with $\tilde{k}_1,\tilde{k}_2,\tilde{c}_1,\tilde{c}_2\in \mathbb{R}$.
Note that
\begin{align}
\int_{-\infty}^{0}|\tilde{\mathfrak{h}}^{2M,-}|^2dx_*=\int_{1}^{\frac{3}{2}}|\tilde{\mathfrak{h}}^{2M,-}|^2\frac{x}{x-1}dx&=\infty,
\end{align}
whilst 
\begin{align}
\int_{-\infty}^{0}|\tilde{\mathfrak{h}}^{2M,+}|^2+|\Delta_{x_*}\mathfrak{h}^{2M,+}|^2dx_*=\int_{1}^{\frac{3}{2}}\Big(|\tilde{\mathfrak{h}}^{2M,+}|^2+\Big|\frac{x-1}{x}\Delta_{x}\mathfrak{h}^{2M,+}\Big|^2\Big)\frac{x}{x-1}dx&<\infty.
\end{align}
Similarly, for $X_*>0$ sufficently large
\begin{align}
 \int_{X_*}^{\infty}|\tilde{\mathfrak{h}}^{\infty,+}|^2dx_*=\int_{x(X_*)}^{\infty}|\tilde{\mathfrak{h}}^{\infty,+}|^2\frac{x}{x-1}dx&=\infty,
 \end{align}
 whilst 
\begin{align}
\int_{X_*}^{\infty}|\tilde{\mathfrak{h}}^{\infty,-}|^2+|\Delta_{x_*}\mathfrak{h}^{\infty,-}|^2dx_*=\int_{x(X_*)}^{\infty}\Big(|\tilde{\mathfrak{h}}^{\infty,-}|^2+\Big|\frac{x-1}{x}\Delta_{x}\mathfrak{h}^{\infty,-}\Big|^2\Big)\frac{x}{x-1}dx&<\infty.
\end{align}
Therefore, since $u\in H^1(\mathbb{R})$, the solution $u$, in the language of proposition~\ref{Asymp2}, must have~$\tilde{k}_2=0$ and~$\tilde{c}_1=0$. \\

Therefore, taking~$\tilde{\mathfrak{h}}=u$ and~$|\hat{\omega}|\in [\frac{3}{10},\frac{8}{10}]$ gives a~$C^{\infty}(\mathbb{R})$ solution to the Schr\"odinger equation~(\ref{ODEF}) with~$\hat{\mu}>\frac{1}{20\sqrt{10}}>0$,~$\tilde{k}_2=0$ and~$\tilde{c}_1=0$. 
 \end{proof}

\pagebreak
\appendices
\section{Christoffel and Riemann Tensor Components for the 5D Schwarzschild Black String}\label{ChR}
To compute $\Box_g h_{ab}$ one requires the Christoffel symbols and the Riemann tensor components; the non-zero Christoffel symboles are listed below:
\begin{eqnarray}
\Gamma^t_{tr}&=&\frac{M}{r(r-2M)},\\
\Gamma^r_{tt}&=&\frac{M(r-2M)}{r^3},\qquad \Gamma^r_{rr}=\frac{-M}{r(r-2M)},\qquad \Gamma^r_{{\theta}{\theta}}=(2M-r),\qquad \Gamma^r_{{\phi}{\phi}}=(2M-r)\sin^2\theta,\\
 \Gamma^{{\theta}}_{r{\theta}}&=&\frac{1}{r}=\Gamma^{\phi}_{r{\phi}},\qquad \Gamma^{{\theta}}_{{\phi}{\phi}}=-\sin\theta\cos\theta,\qquad \Gamma^{\phi}_{{\theta}{\phi}}=\cot\theta.
\end{eqnarray}
The others are obtained from symmetry of lower indices. Note, ${R^z}_{\mu\alpha\beta}={R^\mu}_{z\alpha\beta}={R^{\mu}}_{\alpha z\beta}={R^{\mu}}_{\alpha\beta z}=0$. So the Riemann tensor components that are relevant are the ones with spacetime indices $\mu\in \{0,...,3\}$ which are just the usual Schwarzschild Riemann tensor components; the non-zero ones are listed below for completeness, 
\begin{eqnarray}
{R^t}_{rtr}=\frac{2M}{r^2(r-2M)},\qquad {R^t}_{{\theta}t{\theta}}&=&-\frac{M}{r},\qquad {R^t}_{{\phi}t{\phi}}=-\frac{M\sin^2\theta}{r},\\
{R^r}_{trt}=-\frac{2M(r-2M)}{r^4},\qquad {R^{r}}_{{\theta}r{\theta}}&=&- \frac{M}{r},\qquad {R^r}_{{\phi}r{\phi}}=-\frac{M\sin^2\theta}{r},\\
{R^{\theta}}_{t{\theta}t}=\frac{M (r-2M)}{r^4},\qquad {R^{\theta}}_{r{\theta}r}&=&-\frac{M }{r^2 (r-2M)},\qquad {R^{\theta}}_{{\phi}{\theta}{\phi}}= 
 \frac{2 M \sin^2\theta}{r},\\
{R^{{\phi}}}_{t{\phi}t}=\frac{M(r-2M)}{r^4},\qquad {R^{{\phi}}}_{r{\phi}r}&=&-\frac{M}{r^2(r-2M)},\qquad {R^{\phi}}_{{\theta}{\phi}{\theta}}=\frac{2M}{r}.
\end{eqnarray}
Any others can be found from the ${R^a}_{b(cd)}=0$ symmetry. 
\pagebreak

\section{Singularities in Second Order ODE}\label{Sing}
This section is heavily based on the book of Olver~\cite{Oliver}. In particular, see chapter 5 sections 4 and 5 and chapter 7 section 2. 
\begin{definition}[Ordinary Point/Regular Singularity/Irregular Singularity]\label{RegSingDef}
Let~$p$ and~$q$ be meromorphic functions on a subset of $\mathbb{C}$. Consider the linear $2^{\mathrm{nd}}$ order ODE
\begin{eqnarray}
\frac{d^2f}{dz^2}+p(z)\frac{df}{dz}+q(z)f=0.\label{ODESING}
\end{eqnarray}
Then $z_0\in \mathbb{C}$ is an ordinary point of this differential equation if both $p(z)$ and $q(z)$ are analytic there. If~$z_0$ is not an ordinary point and both 
\begin{eqnarray}
(z-z_0)p(z)\quad \text{and} \quad (z-z_0)^2q(z)
\end{eqnarray}
are analytic at $z_0$ then $z_0$ is a regular singularity, otherwise $z_0$ is an irregular singularity. 
\end{definition}
\begin{remark}
The singular behavior of $z=\infty$ is determined by making the change of variables $\tilde{z}=\frac{1}{z}$ in the ODE (\ref{ODESING}). This case will be considered explicitly in section~\ref{irrengsingsec}.
\end{remark}
In the following, general results for ODE are presented. 
\subsection{Regular Singularities}
\noindent In this paper solutions of a second order ODE in a neighbourhood $|z-z_0|<r$ of a regular singular point are required. The classical method is to search for a convergent series solution in such a neighbourhood. 
\begin{definition}[Indicial Equation]
Let~$p$ and~$q$ be meromorphic functions on a subset of $\mathbb{C}$. Consider the following $2^{nd}$-order ODE with a regular singularity at~${z_0\in \mathbb{C}}$
\begin{eqnarray}
\frac{d^2f}{dz^2}(z)+p(z)\frac{df}{dz}(z)+q(z)f(z)=0.\label{ODEREG}
\end{eqnarray}
Assume that there exist a convergent power series, 
\begin{eqnarray}
(z-z_0)p(z)=\sum_{k=0}^{\infty}p_k(z-z_0)^k,\qquad (z-z_0)^2q(z)=\sum_{k=0}^{\infty}q_k(z-z_0)^k\qquad \forall |z-z_0|<r.\label{assumregode}
\end{eqnarray}
The indicial equation is defined as
\begin{eqnarray}
I(\alpha):=\alpha(\alpha-1)+p_0\alpha+q_0=0.
\end{eqnarray}
\end{definition}
\begin{remark}
The indicial equation arises by considering the a solution of the form $f(z)=(z-z_0)^{\alpha}$ to the ODE
\begin{eqnarray}
\frac{d^2f}{dz^2}(z)+\frac{p_0}{z-z_0}\frac{df}{dz}(z)+\frac{q_0}{(z-z_0)^2}f(z)=0.\label{ODEREGAPP}
\end{eqnarray}
The ODE~(\ref{ODEREGAPP}) is the leading order approximation of the ODE~(\ref{ODEREG}). The function $f(z)=(z-z_0)^{\alpha}$ solves the ODE~(\ref{ODEREGAPP}) if the $\alpha$ satisfies the indicial equation.
\end{remark}
\noindent The following two theorems deal with the asymptotic behavior of solutions in the neighbourhood of a regular singularity.
\begin{theorem}[Frobenius]\label{ThE1}
Let~$p$ and~$q$ be meromorphic functions on a subset of $\mathbb{C}$. Consider the following~$2^{nd}$-order ODE with a regular singularity at~${z_0\in \mathbb{C}}$
\begin{eqnarray}
\frac{d^2f}{dz^2}(z)+p(z)\frac{df}{dz}(z)+q(z)f(z)=0,\label{RSODE}
\end{eqnarray}
where
\begin{eqnarray}
(z-z_0)p(z)=\sum_{k=0}^{\infty}p_k(z-z_0)^k,\qquad (z-z_0)^2q(z)=\sum_{k=0}^{\infty}q_k(z-z_0)^k\label{seriespq}
\end{eqnarray}
converge for all $|z-z_0|<r$, where $r>0$. Let $\alpha_{\pm}$ be the two roots of the indicial equation. Suppose further that $\alpha_-\neq \alpha_++s$, where $s\in \mathbb{Z}$. Then there exists a basis of solution to the ODE~(\ref{RSODE}) of the form
\begin{eqnarray}
f^+(z)=(z-z_0)^{\alpha_+}\sum_{k=0}^{\infty}a_k^+(z-z_0)^k,\qquad f^-(z)=(z-z_0)^{\alpha_-}\sum_{k=0}^{\infty}a_k^-(z-z_0)^k\label{formalps}
\end{eqnarray}
 where these series converge for all $z$ such that $|z-z_0|<r$. Moreover, $a_k^+$ and $a_k^-$ can be calculated recursively by the formula
 \begin{eqnarray}
 I(\alpha_{\pm}+k)a^{\pm}_k+(1-\delta_{k,0})\sum_{s=0}^{k-1}\big((\alpha_{\pm}+s)p_{k-s}+q_{k-s}\big)a^{\pm}_s=0.\label{vanseries}
 \end{eqnarray}
\end{theorem}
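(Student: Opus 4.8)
The plan is to carry out the classical Frobenius construction and then verify convergence. First I would plug the formal ansatz $f(z) = (z-z_0)^{\alpha}\sum_{k\geq 0} a_k (z-z_0)^k$ (with $a_0 \neq 0$) into the ODE \eqref{RSODE}, using the power series \eqref{seriespq} for $(z-z_0)p(z)$ and $(z-z_0)^2 q(z)$. Collecting the coefficient of $(z-z_0)^{\alpha+k-2}$ and setting it to zero yields, for $k=0$, exactly the indicial equation $I(\alpha) = \alpha(\alpha-1) + p_0\alpha + q_0 = 0$ (using $a_0 \neq 0$), and for $k \geq 1$ the recursion \eqref{vanseries}, namely $I(\alpha+k)a_k = -\sum_{s=0}^{k-1}\big((\alpha+s)p_{k-s} + q_{k-s}\big)a_s$. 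So set $\alpha = \alpha_{\pm}$; the recursion determines $a_k^{\pm}$ uniquely from $a_0^{\pm}$ provided $I(\alpha_{\pm}+k) \neq 0$ for all $k \in \mathbb{Z}_{>0}$. Since the roots of $I$ are $\alpha_+$ and $\alpha_-$, we have $I(\alpha_{\pm}+k) = 0$ only if $\alpha_{\pm}+k \in \{\alpha_+,\alpha_-\}$, i.e.\ only if $\alpha_- = \alpha_+ + k$ or $\alpha_+ = \alpha_- + k$ for some positive integer $k$; this is precisely excluded by the hypothesis $\alpha_- \neq \alpha_+ + s$ for $s \in \mathbb{Z}$ (which also covers $s=0$, ruling out the repeated-root case, and negative $s$ by symmetry). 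Hence both formal series are well-defined.

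Next I would establish convergence of $\sum a_k^{\pm}(z-z_0)^k$ in $|z-z_0| < r$. The standard argument: fix $\rho < r$; since the series \eqref{seriespq} converge on $|z-z_0| < r$, there is a constant $C$ with $|p_k|, |q_k| \leq C\rho^{-k}$ for all $k$. Also $I(\alpha_{\pm}+k) = k^2 + O(k)$, so there is $N$ and $c>0$ with $|I(\alpha_{\pm}+k)| \geq c k$ for $k \geq N$. Feeding these bounds into the recursion and comparing with a geometric majorant, one shows by induction that $|a_k^{\pm}| \leq A \rho^{-k}$ for a suitable $A$ and all $k$, hence the radius of convergence is at least $\rho$; letting $\rho \uparrow r$ gives convergence on $|z-z_0| < r$. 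This is the routine comparison-test estimate and I would not grind through the constants.

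Finally I would check that $f^+$ and $f^-$ are genuinely a basis, i.e.\ linearly independent: near $z_0$ they behave like $a_0^+(z-z_0)^{\alpha_+}$ and $a_0^-(z-z_0)^{\alpha_-}$ with $\alpha_+ \neq \alpha_-$ (since $\alpha_+ - \alpha_- \notin \mathbb{Z}$, in particular $\neq 0$), so one cannot be a scalar multiple of the other; equivalently their Wronskian is nonzero. Since the ODE is second order with an isolated singularity at $z_0$, two linearly independent solutions span the solution space, so $\{f^+, f^-\}$ is a basis. The main obstacle here is really just bookkeeping: making sure the index shifts in substituting the product of power series are handled correctly so that the recursion comes out exactly as \eqref{vanseries}, and being careful that the hypothesis $\alpha_- \neq \alpha_+ + s$ for \emph{all} integers $s$ is exactly what is needed to keep $I(\alpha_{\pm}+k)$ from vanishing for both choices of sign simultaneously. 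No genuinely hard analysis is involved beyond the majorant estimate.
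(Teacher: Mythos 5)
Your proposal is correct and is the classical Frobenius argument; the paper itself gives no proof of Theorem~\ref{ThE1}, stating it as a known result and deferring to Olver (chapter 5, sections 4--5), where essentially your argument appears. Your derivation of the recursion matches~(\ref{vanseries}) exactly, your use of the non-resonance hypothesis to guarantee $I(\alpha_{\pm}+k)\neq 0$ for all $k\geq 1$ is the right (and only) place that hypothesis enters, and the majorant estimate and Wronskian/leading-order independence check are the standard way to finish (note only that closing the induction in the majorant step really uses $|I(\alpha_{\pm}+k)|\gtrsim k^2$, which you state, rather than the weaker $\gtrsim k$ you invoke).
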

\begin{remark}
If the roots of the indicial equation do not differ by an integer then theorem~\ref{ThE1} gives a basis of solutions for the ODE in a neighbourhood of the singular point. Equation~(\ref{vanseries}) determines the coefficients of the series expansion recursively from an arbitrarily assigned $a_0\neq 0$, which can be taken to be $1$. This process runs into difficulty if, and only if, the two roots differ by a positive integer. To see this, let $\alpha_+$ be the root of the indicial equation with largest real part, the other root is then $\alpha_+-s$ for some $s\in \mathbb{Z}_+$. Then since $I((\alpha_+-s)+s)=0$ one cannot determine $a_s$ via equation~(\ref{vanseries}) for this power series. In this case, one solution can be found with the above method by taking the root of the indicial equation with largest real part. 
\end{remark}
\noindent The following theorem investigates the case where the roots differ by an integer. Let $\alpha_+$ be the root of the indicial equation with largest real part, the other root is then $\alpha_+-s$ for some $s\in \mathbb{Z}_+\cup \{0\}$. 
\begin{theorem}\label{ThE2}
Consider the ODE~(\ref{RSODE}) as in theorem~\ref{ThE1} again satisfying~(\ref{seriespq}).
Let~$\alpha_+$ and~${\alpha_-=\alpha_{+}-N}$, with~$N\in \mathbb{Z}_+\cup \{0\}$, be roots of the indicial equation. Then there exists a basis of solutions of the form
\begin{eqnarray}
f^+(z)=(z-z_0)^{\alpha_+}\sum_{k=0}^{\infty}a_k^+(z-z_0)^k,\quad f^-(z)=(z-z_0)^{\gamma}\sum_{k=0}^{\infty}a^+_k(z-z_0)^k+C_Nf^+(z)\ln (z-z_0)
\end{eqnarray}
with $\gamma=\alpha_++1$ if $N=0$ and $\gamma=\beta_-$ if $N\neq 0$, where these power series are convergent for all $z$ such that $|z-z_0|<r$. Moreover, the coefficents $a_k^+$, $a_k^-$ and $C_N$ can be calculated recursively.
\end{theorem}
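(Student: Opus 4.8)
The plan is to normalise $z_0 = 0$ by translation and to rewrite the equation as $z^2 f'' + z\tilde p(z)f' + \tilde q(z)f = 0$, where $\tilde p(z) = zp(z) = \sum_{k\ge 0}p_k z^k$ and $\tilde q(z) = z^2 q(z) = \sum_{k\ge 0}q_k z^k$, both convergent for $|z|<r$. For the first solution I would take $\alpha_+$ to be the root of the indicial equation of largest real part (either one if they coincide). Since $\alpha_- = \alpha_+ - N$ with $N\ge 0$, we have $\alpha_+ + k \notin\{\alpha_+,\alpha_-\}$ for every integer $k\ge 1$, so $I(\alpha_+ + k)\neq 0$, and the recursion~(\ref{vanseries}) of theorem~\ref{ThE1} determines $a_k^+$ uniquely from $a_0^+ := 1$. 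The resulting series $f^+(z) = z^{\alpha_+}\sum_{k\ge 0}a_k^+ z^k$ converges on $|z|<r$ by the same majorant (dominant-series) estimate used to prove theorem~\ref{ThE1}, which needs only that this recursion not break down.

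The substantive step is the second solution, which I would obtain by reduction of order: by Abel's identity $f^-(z) := f^+(z)\int^z \frac{W(\zeta)}{f^+(\zeta)^2}\,d\zeta$ is a second, linearly independent solution, where $W(z) = \exp\!\big(-\!\int^z p\big)$ is the Wronskian. Since $p(z) = p_0/z + (\text{analytic})$ one has $W(z) = z^{-p_0}\varphi(z)$ with $\varphi$ analytic and $\varphi(0)\neq 0$, and $f^+(z)^2 = z^{2\alpha_+}\psi(z)$ with $\psi$ analytic and $\psi(0)\neq 0$. Using the indicial relation $\alpha_+ + \alpha_- = 1 - p_0$ one finds the exponent $-p_0 - 2\alpha_+ = \alpha_- - \alpha_+ - 1 = -N-1$, hence near $z=0$
\[
\frac{W(z)}{f^+(z)^2} = z^{-N-1}\sum_{k\ge 0}d_k z^k,\qquad d_0\neq 0.
\]
Integrating term by term produces a logarithm precisely from the $k=N$ term,
\[
\int^z\frac{W}{(f^+)^2} = d_N\ln z + \sum_{\substack{k\ge 0\\ k\neq N}}\frac{d_k}{k-N}\,z^{k-N},
\]
and multiplying back by $f^+(z) = z^{\alpha_+}(\text{analytic, nonzero at }0)$ gives $f^-(z) = C_N f^+(z)\ln z + z^{\gamma}\sum_{k\ge 0}a_k^- z^k$ with $C_N := d_N$. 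When $N=0$ the non-logarithmic part picks up an extra factor $z$ (the series being multiplied starts at $k=1$), so $\gamma = \alpha_+ + 1$, and moreover $C_0 = d_0\neq 0$; when $N\ge 1$ the non-logarithmic part is $z^{\alpha_+ - N}(\text{analytic, nonzero at }0)$, so $\gamma = \alpha_- = \beta_-$ with $a_0^-\neq 0$, while $C_N$ may or may not vanish. Linear independence is automatic, since the Wronskian of $f^+$ and $f^-$ equals $W(z) = z^{-p_0}\varphi(z)\not\equiv 0$.

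To get convergence of the power series in $f^-$ on the whole disc $|z|<r$ and recursive computability of $a_k^-$ and $C_N$, I would then substitute the ansatz $f^-(z) = C_N f^+(z)\ln z + z^{\gamma}\sum_k a_k^- z^k$ into the ODE and match powers: the coefficient of $z^{\gamma+k}$ yields $I(\gamma+k)\,a_k^- = -(\text{convolution of }\tilde p,\tilde q\text{ with }a_0^-,\dots,a_{k-1}^-) - C_N(\text{explicit term built from the }a_j^+)$. Here $I(\gamma+k)\neq 0$ for all $k\ge 1$ except, when $N\ge 1$, at $k=N$, and that single vanishing is exactly what pins down $C_N$ so that the right-hand side is consistent; all remaining $a_k^-$ are then uniquely determined, and the usual majorant estimate gives radius of convergence at least $r$.

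The hard part will be the bookkeeping in the reduction-of-order step: tracking the exact powers of $z$ so that the logarithm appears as an honest scalar multiple of $f^+$, handling the degenerate case $N=0$ and the separated-by-an-integer case $N\ge 1$ uniformly, and --- most delicately --- justifying that the non-logarithmic factor in $f^-$ is analytic on the \emph{full} disc $|z|<r$ rather than merely near $z_0$ (the quotient $\varphi/\psi$ above is a priori analytic only away from the zeros of $f^+$), which is precisely the point at which one must fall back on the fresh majorant estimate for the $a_k^-$ recursion.
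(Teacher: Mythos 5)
Your proposal is correct, but note that the paper itself contains no proof of this statement: Theorem~\ref{ThE2} is quoted as a classical result, with the reader referred to Olver~\cite{Oliver} (chapter 5, sections 4--5). So the comparison is between your argument and the textbook treatment rather than anything in the paper. Your route --- Abel's identity and reduction of order to \emph{discover} the form of $f^-$ and the exact mechanism producing the logarithm, followed by re-substitution of the ansatz to obtain the recursion for the $a_k^-$, the determination of $C_N$ from the single degenerate index $k=N$, and a majorant estimate for convergence on the full disc --- is the standard self-contained proof and all the individual steps check out: the exponent bookkeeping $-p_0-2\alpha_+=\alpha_--\alpha_+-1=-N-1$ via $\alpha_++\alpha_-=1-p_0$ is right, the $N=0$ case correctly yields $\gamma=\alpha_++1$ with $C_0=d_0\neq 0$ forced, and for $N\geq 1$ the leading coefficient $-d_0/N$ of the non-logarithmic part is nonzero so $\gamma=\alpha_-$ (the statement's $\beta_-$ is evidently a typo for $\alpha_-$, as you assumed). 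You have also correctly isolated the one genuine subtlety, namely that the reduction-of-order representation is only valid near $z_0$ and away from zeros of $f^+$, so that convergence on all of $|z-z_0|<r$ must come from the recursion-plus-majorant argument rather than from the integral formula; Olver's own treatment likewise settles convergence at the level of the coefficient recursion. The only cosmetic gap is that you do not write out the majorant estimate itself, but you invoke it in exactly the places where it is needed and it is the same estimate already required for Theorem~\ref{ThE1}.
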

\subsection{Irregular Singulities}\label{irrengsingsec}
This section summaries the key result for constructing a basis of solutions to the ODE~(\ref{HZ}) associated to $r\rightarrow \infty$. (The results presented can in fact be applied to any irregular singular point of an ODE~(\ref{ODESING}) since without loss of generality, the irregular singularity can be assumed to be at infinity after a change of coordinates.) The following definition makes precise the notion of a irregular singularity at infinity.
\begin{definition}[Irregular Singularity at Infinity]\label{irregsinginf}
Let~$p$ and~$q$ be meromorphic functions on a subset of $\mathbb{C}$ which includes the set~$\{z\in \mathbb{C}:|z|>a\}$. Consider the following~$2^{nd}$-order ODE
\begin{eqnarray}
\frac{d^2f}{dz^2}+p(z)\frac{df}{dz}+q(z)f=0.\label{IRSODE}
\end{eqnarray}
Assume for $|z|>a$, $p$ and $q$ may be expanded as convergent power series
\begin{eqnarray}
p(z)=\sum_{n=0}^{\infty}\frac{p_n}{z^n},\qquad q(z)=\sum_{n=0}^{\infty}\frac{q_n}{z^n}
\end{eqnarray}
The ODE~(\ref{IRSODE}) has an irregular singular point at infinity if one of $p_0$, $q_0$ and $q_1$ do not vanish. 
\end{definition}
The main theorem~\ref{ThE3} of this section can be motivated by the following discussion. Consider a formal power series
\begin{eqnarray}
w=e^{\lambda z}z^{\mu}\sum_{n=0}^{\infty}\frac{a_n}{z^n}\label{IS4}.
\end{eqnarray}
Substituting the expansions into the ODE and equating coefficients yields
\begin{eqnarray}
\lambda^2+p_0\lambda+q_0&=&0\label{IS1}\\
(p_0+2\lambda)\mu&=&-(p_1\lambda+q_1)\label{IS2}
\end{eqnarray}
and
\begin{eqnarray}
(p_0+2\lambda)na_n&=&(n-\mu)(n-1-\mu)a_{n-1}+\sum_{j=1}^{n}(\lambda p_{j+1}+q_{j+1}-(j-n-\mu)p_j)a_{n-j}\label{IS3}.
\end{eqnarray}
Now, equation~(\ref{IS1}) has two roots
\begin{eqnarray}
\lambda_{\pm}=\frac{1}{2}\Big(-p_0\pm\sqrt{p_0^2-4g_0}\Big).\label{lambdapm}
\end{eqnarray}
These give rise to
\begin{eqnarray}
\mu_{\pm}=-\frac{p_1\lambda_{\pm}+q_1}{p_0+2\lambda_{\pm}}.\label{mupm}
\end{eqnarray}
The two values of $a_0$, $a_{0}^{\pm}$ can be, without loss of generality, set to $1$ and the higher order coefficients determined iteratively from equation~(\ref{IS3}) unless one is in the exceptional case where $p_0^2=4g_0$ (for further information on this case see section~1.3 of chapter 7 in~\cite{Oliver}). The issue that arises is that in most cases the formal series solution~(\ref{IS4}) does not converge. However, the following theorem characterises when~(\ref{IS4}) provides an asymptotic expansion for the solution for sufficently large $|z|$. 

\begin{theorem}\label{ThE3}
Let $p(z)$ and $q(z)$ be meromorphic functions with convergent series expansions
\begin{eqnarray}
p(z)=\sum_{n=0}^{\infty}\frac{p_n}{z^n},\qquad q(z)=\sum_{n=0}^{\infty}\frac{q_n}{z^n}
\end{eqnarray}
for $|z|>a$ with $p_0^2\neq 4q_0$. Then the second order ODE
\begin{eqnarray}
\frac{d^2f}{dz^2}+p(z)\frac{df}{dz}+q(z)f=0\label{IS6}
\end{eqnarray}
has unique solutions $f^{\pm}(z)$, such that in the regions
\begin{eqnarray}
\begin{cases}
\{|z|>a\}\cap\{|\mathrm{Arg}((\lambda_--\lambda_+)z)|\leq \pi\}\quad (\text{for }f^+)\\
 \{|z|>a\}\cap\{|\mathrm{Arg}((\lambda_+-\lambda_-)z)|\leq \pi\}\quad (\text{for }f^-)\label{Regions}
\end{cases}
\end{eqnarray}
of the complex plane, $f^{\pm}$ is holomorphic, where $\lambda_{\pm}$ and $\mu_{\pm}$ are defined in equations~(\ref{lambdapm}) and~(\ref{mupm}). Moreover, for all $N>1$, $f^{\pm}(z)$ satisfies
\begin{eqnarray}
f^{\pm}(z)= e^{\lambda_{\pm}z}z^{\mu_{\pm}}\Big(\sum_{n=0}^{N-1}\frac{a^{\pm}_n}{z^n}+\mathcal{O}\Big(\frac{1}{z^{N}}\Big)\Big)
\end{eqnarray}
in the regions given in equation~(\ref{Regions}).
\end{theorem}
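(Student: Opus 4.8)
The plan is to prove this by the classical method of transforming the equation so that the sought solution tends to a constant, recasting the transformed equation as a Volterra integral equation, and solving it by successive approximations. I treat the two cases $f^+$ and $f^-$ simultaneously by fixing $\lambda=\lambda_{\pm}$ and $\mu=\mu_{\pm}$ as in equations~(\ref{lambdapm}) and~(\ref{mupm}), and work throughout in the corresponding sector specified in~(\ref{Regions}).

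First I would substitute $f(z)=e^{\lambda z}z^{\mu}w(z)$ into the ODE~(\ref{IS6}) and divide through by $e^{\lambda z}z^{\mu}$. Using the defining relation $\lambda^2+p_0\lambda+q_0=0$ from~(\ref{IS1}), the constant term in the coefficient of $w$ cancels; using $(p_0+2\lambda)\mu=-(p_1\lambda+q_1)$ from~(\ref{IS2}), the $z^{-1}$ term of that coefficient also cancels. The result is an equation of the shape
\begin{equation}
w''+\Big(A+\tfrac{B}{z}+\mathcal{O}(z^{-2})\Big)w'+\Big(\tfrac{C}{z^2}+\mathcal{O}(z^{-3})\Big)w=0,
\end{equation}
where $A=2\lambda+p_0=\pm\sqrt{p_0^2-4q_0}=\pm(\lambda_+-\lambda_-)$ is nonzero precisely by the hypothesis $p_0^2\neq 4q_0$. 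This is a perturbation of $w''+Aw'=0$, whose relevant solution is the constant $w\equiv 1$.

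Next I would pass to an integral equation. Writing the displayed equation as $(e^{Az}w')'=e^{Az}\,G(z,w,w')$, where $G$ gathers the lower-order terms (each carrying a factor of at least $z^{-1}$), and integrating twice from $\infty$ along a suitable path yields a Volterra equation $w(z)=1+\int_{z}^{\infty}\mathcal{K}(z,t)\,G\big(t,w(t),w'(t)\big)\,dt$ with kernel built from $(1-e^{A(z-t)})/A$. Integrating from infinity is what selects the solution with $w\to 1$. I would then run the successive-approximation scheme $w_0=1$, $w_{k+1}=1+\int_{z}^{\infty}\mathcal{K}(z,t)G(t,w_k,w_k')\,dt$ on the Banach space of bounded holomorphic functions on $\{|z|>a'\}$ intersected with the sector, and show the operator is a contraction for $|z|$ large. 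Holomorphy of the limit follows since each iterate is holomorphic in $z$ and convergence is uniform on compact subsets, so $f=e^{\lambda z}z^{\mu}w$ is holomorphic in the stated region. For the expansion, I would take the formal coefficients $a_n^{\pm}$ produced by the recurrence~(\ref{IS3}), set $w=\sum_{n=0}^{N-1}a_n z^{-n}+r_N$, derive the analogous integral equation for $r_N$ whose inhomogeneity is $\mathcal{O}(z^{-N-1})$, and conclude $r_N=\mathcal{O}(z^{-N})$ from the same kernel estimates, giving exactly the claimed form of $f^{\pm}$.

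The uniqueness and the analytic heart of the argument both hinge on one estimate: controlling the exponential factor $e^{A(z-t)}$ in the kernel uniformly as $t\to\infty$ throughout a sector of angular width exceeding $\pi$. The sector condition in~(\ref{Regions}) is exactly what permits a choice of ``progressive'' integration path along which $\mathrm{Re}(At)$ is monotone, keeping $\mathrm{Re}\big(A(z-t)\big)$ bounded above, so that together with the $t^{-2}$ decay of $G$ the integral operator contracts. I expect this uniform kernel bound near the Stokes directions, where dominance between $e^{\lambda_+z}$ and $e^{\lambda_-z}$ switches, to be the main obstacle. Granting it, uniqueness follows: any second solution with the same expansion differs from $f^{\pm}$ by a solution whose expansion vanishes to all orders, so writing the difference in the basis $\{f^+,f^-\}$, the companion component would, in the subsector where it is exponentially dominant, violate the remainder bound; the width exceeding $\pi$ guarantees such a subsector exists, forcing the difference to vanish.
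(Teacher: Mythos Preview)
The paper does not give its own proof of this theorem: it is stated as a background result in Appendix~\ref{Sing}, with the reader referred to Olver~\cite{Oliver}, chapter~7, section~2. Your outline is precisely the classical argument found there---the substitution $f=e^{\lambda z}z^{\mu}w$, reduction to a Volterra integral equation with kernel $(1-e^{A(z-t)})/A$, and iteration on progressive paths in the sector---so your approach matches the source the paper defers to. One small point: your claimed $\mathcal{O}(z^{-2})$ in the $w'$ coefficient and $\mathcal{O}(z^{-3})$ in the $w$ coefficient are not quite sharp enough on their own to make the Volterra kernel integrable along rays (you need the $z^{-1}$ term in the $w'$ coefficient handled as well, typically by a further integrating factor $z^{B/A}$ or by absorbing it into the inhomogeneity $G$ with the explicit $t^{-1}$ contributing a logarithm that is harmless against $t^{-2}$); Olver does this carefully, so when you write up the estimate make sure the contraction bound actually closes.
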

\pagebreak
\section{Tranformation to Schr\"odinger Form}\label{SchT}
\begin{prop} \label{Sch}Consider the second order homogeneous linear ODE
\begin{eqnarray}
\frac{d^2u}{dr^2}+p(r)\frac{du}{dr}+q(r)u=0,\qquad p,q\in C^1(I),\qquad I\subset\mathbb{R}
\label{ODE1}.
\end{eqnarray}
Suppose that there exists a sufficiently regular coordinate transformation $s(r)$ and a function $w(r)$ such that 
\begin{eqnarray}
\frac{dw}{dr}+\frac{1}{2}\Big(\frac{1}{\big(\frac{ds}{dr}\big)}\frac{d^2s}{dr^2}+p\Big)w=0.
\end{eqnarray}
Then the ODE~(\ref{ODE1}) can be reduced to the form
\begin{eqnarray}
-\frac{d^2z}{ds^2}(s)+V(s)z(s)=0,
\end{eqnarray}
with
\begin{eqnarray}
V(s)=\frac{1}{2\big(\frac{ds}{dr}\big)^2}\bigg(\frac{dp}{dr}-\frac{3}{2\big(\frac{ds}{dr}\big)^2}\Big(\frac{d^2s}{dr^2}\Big)^2+\frac{1}{\big(\frac{ds}{dr}\big)}\frac{d^3s}{dr^3}+\frac{p^2}{2}-2g\bigg).
\end{eqnarray}
\end{prop}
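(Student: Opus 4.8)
The plan is to reduce the ODE~(\ref{ODE1}) to Schrödinger form in two steps: first a substitution $u = w(r) z(r)$ to kill the first-order term, then a change of independent variable $r \mapsto s$ to normalize the leading coefficient. Actually it is cleaner to do both at once and just verify the claimed formula, so I will substitute $u(r) = w(r) z(s(r))$ directly into~(\ref{ODE1}) and compute. Writing $' = \frac{d}{dr}$ and denoting $\dot z = \frac{dz}{ds}$, $\ddot z = \frac{d^2 z}{ds^2}$, the chain rule gives $u' = w' z + w s' \dot z$ and $u'' = w'' z + 2 w' s' \dot z + w s'' \dot z + w (s')^2 \ddot z$. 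Plugging into $u'' + p u' + q u = 0$ and collecting the coefficient of $\ddot z$, of $\dot z$, and of $z$ yields
\begin{eqnarray}
w (s')^2 \ddot z + \big( 2 w' s' + w s'' + p w s' \big) \dot z + \big( w'' + p w' + q w \big) z = 0. \nonumber
\end{eqnarray}
The hypothesis on $w$, namely $w' + \tfrac12\big( \tfrac{s''}{s'} + p \big) w = 0$, is precisely the condition that makes the coefficient of $\dot z$ vanish: indeed $2 w' s' + w s'' + p w s' = 2 s' \big( w' + \tfrac12 (\tfrac{s''}{s'} + p) w \big) = 0$. Dividing through by $w (s')^2$ (legitimate since $w \neq 0$ and $s' \neq 0$ on $I$ by the regularity assumptions) gives $\ddot z + \frac{w'' + p w' + q w}{w (s')^2} z = 0$, i.e. $-\ddot z + V(s) z = 0$ with $V = -\frac{w'' + p w' + q w}{w (s')^2}$.

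The remaining task is to show this $V$ equals the stated expression. The only real work is to eliminate $w$ and $w'$, $w''$ in favor of $s$ and $p$. From the defining ODE, $\frac{w'}{w} = -\tfrac12\big(\tfrac{s''}{s'} + p\big)$; differentiating, $\frac{w''}{w} = \big(\frac{w'}{w}\big)^2 + \big(\frac{w'}{w}\big)' = \tfrac14\big(\tfrac{s''}{s'} + p\big)^2 - \tfrac12\big(\tfrac{s'''}{s'} - \tfrac{(s'')^2}{(s')^2} + p'\big)$. Substituting these into $\frac{w'' + p w' + q w}{w} = \frac{w''}{w} + p \frac{w'}{w} + q$ and simplifying, one collects the $p^2$ term ($\tfrac14 p^2 - \tfrac12 p^2 = -\tfrac14 p^2$), the $p'$ term ($-\tfrac12 p'$), the $\frac{s'''}{s'}$ term ($-\tfrac12 \frac{s'''}{s'}$), and the $\frac{(s'')^2}{(s')^2}$ terms ($\tfrac14 \cdot 2 - \tfrac12 \cdot(-1)$ from cross and square contributions, giving $+\tfrac34 \frac{(s'')^2}{(s')^2}$ with the sign flipped after the overall $-$), plus $q$. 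Dividing by $(s')^2$ and flipping the overall sign gives exactly
\begin{eqnarray}
V(s) = \frac{1}{2 (s')^2}\bigg( p' - \frac{3}{2 (s')^2}(s'')^2 + \frac{1}{s'} s''' + \frac{p^2}{2} - 2q \bigg), \nonumber
\end{eqnarray}
matching the claim (with $g$ a typo for $q$, and $s'$ nonvanishing so the divisions are valid).

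The only genuine obstacle is bookkeeping: keeping the Schwarzian-derivative-like combination of $s$-derivatives straight through the differentiation of $w'/w$, since sign errors are easy there. There is no analytic subtlety — the regularity hypotheses $p, q \in C^1(I)$ together with "sufficiently regular" $s$ (meaning $C^3$ with $s' \neq 0$) and $w \neq 0$ are exactly what is needed to justify the chain rule applications and the divisions. I would present the computation by first stating the substitution, then the collected ODE in $z$, then the elimination of $w$, and finally the identification of $V$, relegating the intermediate algebra to a displayed line or two.
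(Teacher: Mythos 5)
Your proposal is correct and follows essentially the same route as the paper: substitute $u = w\, z(s(r))$, observe that the hypothesis on $w$ is exactly the vanishing of the $\frac{dz}{ds}$ coefficient, and then eliminate $w$, $w'$, $w''$ via the defining ODE for $w$ to obtain the stated $V$ (your logarithmic-derivative bookkeeping reproduces the paper's explicit computation of $\frac{d^2w}{dr^2}$, and your identification of $g$ as a typo for $q$ is right).
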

\begin{proof}
The proof is a straight-forward calculation. Take $u(s)=w(s)z(s)$, then
\begin{eqnarray}
\Big(\frac{ds}{dr}\Big)^2w\frac{d^2z}{ds^2}+\bigg(2\Big(\frac{ds}{dr}\Big)^2\frac{dw}{ds}+w\frac{d^2s}{dr^2}+pw\frac{ds}{dr}\bigg)\frac{dz}{ds}+\bigg(\Big(\frac{ds}{dr}\Big)^2\frac{d^2w}{ds^2}+\frac{dw}{ds}\frac{d^2s}{dr^2}+p\frac{dw}{ds}\frac{ds}{dr}+qw\bigg)z=0.\nonumber
\end{eqnarray}
To reduce this to symmetric form one can set
\begin{eqnarray}
2\Big(\frac{ds}{dr}\Big)^2\frac{dw}{ds}+w\frac{d^2s}{dr^2}+pw\frac{ds}{dr}=0,
\end{eqnarray}
which is equivalent to $w(r)$ satisfying
\begin{eqnarray}
\frac{dw}{dr}+\frac{1}{2}\Big(\frac{1}{\big(\frac{ds}{dr}\big)}\frac{d^2s}{dr^2}+p\Big)w=0.
\end{eqnarray}
Hence
\begin{eqnarray}
\frac{d^2w}{dr^2}=-\frac{1}{2}\bigg(\frac{df}{dr}-\frac{1}{\big(\frac{ds}{dr}\big)^2}\Big(\frac{d^2s}{dr^2}\Big)^2+\frac{1}{\big(\frac{ds}{dr}\big)}\frac{d^3s}{dr^3}-\frac{1}{2}\bigg(\frac{1}{\big(\frac{ds}{dr}\big)}\frac{d^2s}{dr^2}+p\bigg)^2\bigg)w.
\end{eqnarray}
Notice the last term in the ODE for $z$ reduces to
\begin{eqnarray}
\frac{d^2w}{dr^2}+p\frac{dw}{dr}+qw.
\end{eqnarray}
Reducing this with the expressions for the derivatives of $w$ gives the potential for $-\frac{d^2z}{ds^2}+V(s)z=0$ as
\begin{eqnarray}
V(s)&=&\frac{1}{2\big(\frac{ds}{dr}\big)^2}\bigg(\frac{dp}{dr}-\frac{3}{2\big(\frac{ds}{dr}\big)^2}\Big(\frac{d^2s}{dr^2}\Big)^2+\frac{1}{\big(\frac{ds}{dr}\big)}\frac{d^3s}{dr^3}+\frac{p^2}{2}-2q\bigg).
\end{eqnarray}
\end{proof}
\begin{remark}
Applying this to $s=r_*(r)=r+2M\log|r-2M|$ gives
\begin{eqnarray}
V(r(r_*))=\frac{(r-2M)^2}{2r^2}\bigg(\frac{df}{dr}+\frac{2M(2r-3M)}{r^2(r-2M)^2}+\frac{p^2}{2}-2q\bigg).\label{PotT}
\end{eqnarray}
\end{remark}
\pagebreak

\section{Useful Results From Analysis}\label{analysis}
\subsection{Sobolev Embedding}
\begin{theorem}[Local Compactness of the $H^s$ Sobolev Injection]\label{lc}
 Let $d\geq 1$, $s>0$ and
 \begin{eqnarray}
 p_c=\begin{cases}
 \frac{2d}{d-2s}\qquad s<\frac{d}{2}\\
 \infty\qquad \text{otherwise}.
 \end{cases}
 \end{eqnarray}
 Then the embedding $H^s(\mathbb{R}^d)\hookrightarrow L^p_{\mathrm{loc}}(\mathbb{R}^d)$ is compact $\forall 1\leq p< p_c$. In otherwords, for $(f_n)_n\subset H^s(\mathbb{R}^d)$ bounded, there exists $f\in H^s(\mathbb{R}^d)$ and a subsequence $(f_{n_m})_m$ such that
 \begin{eqnarray}
 &f_{n_m}\rightharpoonup f&\qquad H^s(\mathbb{R}^d),\\
 &f_{n_m}\rightarrow f&\qquad L^p_{\mathrm{loc}}(\mathbb{R}^d)\quad \forall 1\leq p<p_c.
 \end{eqnarray}
 \end{theorem}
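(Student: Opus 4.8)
The plan is to exploit the Hilbert-space structure of $H^s(\mathbb{R}^d)$ together with a Fourier-analytic splitting into low and high frequencies, the classical route to Rellich--Kondrachov-type compactness. Since $H^s(\mathbb{R}^d)$ is a Hilbert space it is reflexive, so any bounded sequence $(f_n)_n$ admits a subsequence $(f_{n_m})_m$ with $f_{n_m}\rightharpoonup f$ in $H^s(\mathbb{R}^d)$ for some $f\in H^s(\mathbb{R}^d)$, with $\|f\|_{H^s}\le\liminf_m\|f_{n_m}\|_{H^s}$. Writing $g_m:=f_{n_m}-f$, it then suffices to prove that $g_m\to 0$ strongly in $L^p(K)$ for every compact $K$ and every $1\le p<p_c$, where now $g_m\rightharpoonup 0$ in $H^s$ and $\sup_m\|g_m\|_{H^s}=:C<\infty$.

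Next I would fix a ball $B_R\supset K$ and a cutoff $\chi\in C_c^\infty(\mathbb{R}^d)$ with $\chi\equiv 1$ on $B_R$. Multiplication by $\chi$ is bounded on $H^s$, so $\chi g_m$ remains bounded in $H^s$ and $\chi g_m\rightharpoonup 0$. The heart of the argument is to show $\chi g_m\to 0$ in $L^2(\mathbb{R}^d)$ via Plancherel and the splitting
\[
\|\chi g_m\|_{L^2}^2=\int_{|\xi|\le\rho}|\widehat{\chi g_m}(\xi)|^2\,d\xi+\int_{|\xi|>\rho}|\widehat{\chi g_m}(\xi)|^2\,d\xi.
\]
The high-frequency tail is controlled uniformly in $m$ by the $H^s$ bound, since $\int_{|\xi|>\rho}|\widehat{\chi g_m}|^2\le (1+\rho^2)^{-s}\|\chi g_m\|_{H^s}^2\le C'(1+\rho^2)^{-s}$, which can be made $<\epsilon/2$ by choosing $\rho$ large, independently of $m$. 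For the low-frequency part, for each fixed $\xi$ the function $\psi_\xi(x):=\chi(x)e^{-ix\cdot\xi}$ is Schwartz, hence lies in $H^{-s}$, so weak convergence gives $\widehat{\chi g_m}(\xi)=\langle g_m,\psi_\xi\rangle\to 0$ pointwise; together with the uniform majorant $|\widehat{\chi g_m}(\xi)|\le\|\chi g_m\|_{L^1}\le C''$ and dominated convergence on the bounded region $\{|\xi|\le\rho\}$, the low-frequency integral tends to $0$ as $m\to\infty$. Combining the two bounds yields $\|\chi g_m\|_{L^2}\to 0$, hence $\|g_m\|_{L^2(B_R)}\to 0$.

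Finally I would upgrade $L^2$ convergence to $L^p$. For $1\le p\le 2$ this is immediate from H\"older on the bounded set $B_R$. For $2<p<p_c$ I would fix any finite exponent $q$ with $p<q<p_c$ (when $p_c=\infty$, any $q\in(p,\infty)$), invoke the continuous Sobolev embedding $H^s(\mathbb{R}^d)\hookrightarrow L^{q}(\mathbb{R}^d)$ to bound $\|g_m\|_{L^{q}(B_R)}\le\|g_m\|_{L^{q}}\le C$, and then interpolate,
\[
\|g_m\|_{L^p(B_R)}\le\|g_m\|_{L^2(B_R)}^{\theta}\,\|g_m\|_{L^{q}(B_R)}^{1-\theta},\qquad \frac{1}{p}=\frac{\theta}{2}+\frac{1-\theta}{q},
\]
with $\theta\in(0,1)$; since the first factor vanishes in the limit and the second is bounded, $\|g_m\|_{L^p(B_R)}\to0$. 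I expect the main obstacle to be the rigorous justification of the low-frequency limit: one must confirm that weak $H^s$ convergence genuinely forces pointwise convergence of the localized Fourier transforms and that a legitimate dominated-convergence majorant exists on $\{|\xi|\le\rho\}$, and one must check that the cutoff $\chi g_m$ really inherits both the uniform $H^s$ bound and the weak convergence. By contrast, the high-frequency estimate, the $m$-uniform choice of $\rho$, and the concluding interpolation are comparatively routine.
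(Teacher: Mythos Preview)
Your argument is correct: the reflexivity step, the frequency splitting with the high-frequency tail controlled by the $H^s$ weight and the low-frequency part handled by pointwise convergence of $\widehat{\chi g_m}(\xi)$ plus dominated convergence on $\{|\xi|\le\rho\}$, and the final $L^2$-to-$L^p$ upgrade by H\"older and interpolation against the continuous Sobolev embedding all go through as you describe. The points you flag as potential obstacles are fine: multiplication by $\chi\in C_c^\infty$ is bounded on $H^s$ and preserves weak convergence (it is a bounded linear map), the functional $g\mapsto\int g\,\overline{\psi_\xi}$ is continuous on $H^s$ since $\psi_\xi\in L^2$, and the majorant $|\widehat{\chi g_m}(\xi)|\le\|\chi g_m\|_{L^1}\le |\mathrm{supp}\,\chi|^{1/2}\|\chi g_m\|_{L^2}$ is uniform in $m$.

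By way of comparison, the paper does not actually give a proof of this statement: it simply records the result as standard and refers to a textbook (Brezis). Your proposal therefore goes well beyond what the paper does, supplying a complete self-contained argument via the classical Fourier-analytic route to Rellich--Kondrachov compactness. What your approach buys is that it works directly for fractional $s>0$ without passing through integer-order Sobolev spaces or extension operators, at the modest cost of the cutoff-on-$H^s$ lemma; the paper, for its part, only needs the $d=1$, $s=1$ case, where the result is entirely elementary and any textbook proof suffices.
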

 \begin{proof}
 This result can be found in any text on Sobolev spaces, for example Brezis~\cite{BPDE}. 
 \end{proof}
 \subsection{The Multiplication Operator is Compact}
 \begin{prop}\label{compactness}
Let $q\in C^0(\mathbb{R}^n,\mathbb{R})$ with $\lim_{|x|\rightarrow \infty}q(x)=0$ and $s>0$. Then $M_q:u\rightarrow qu$ is a compact operator from $H^s(\mathbb{R}^n,\mathbb{R})$ to $L^2(\mathbb{R}^n,\mathbb{R})$.
 \end{prop}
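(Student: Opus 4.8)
The plan is to establish compactness of $M_q$ by approximating $q$ by compactly supported functions and reducing to a local compactness statement. First I would note that it suffices to show that $M_q$ maps bounded sequences to sequences with convergent subsequences in $L^2(\mathbb{R}^n)$. Fix $\varepsilon>0$; since $\lim_{|x|\to\infty}q(x)=0$, there is a radius $R=R(\varepsilon)$ such that $|q(x)|<\varepsilon$ for $|x|\ge R$. Write $q=q\mathbbm{1}_{B_R}+q\mathbbm{1}_{B_R^c}=:q_1+q_2$, so that $M_q=M_{q_1}+M_{q_2}$. The second piece is small in operator norm: for $u\in H^s(\mathbb{R}^n)$ one has $\|M_{q_2}u\|_{L^2}\le \varepsilon\|u\|_{L^2}\le \varepsilon\|u\|_{H^s}$, so $\|M_{q_2}\|_{H^s\to L^2}\le\varepsilon$.

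Next I would show $M_{q_1}$ is compact. Let $(u_m)_m$ be bounded in $H^s(\mathbb{R}^n)$. By the local compactness of the Sobolev injection (theorem~\ref{lc}, applicable since $2<p_c$ when $s>0$, using $p=2$), there is a subsequence $(u_{m_k})_k$ and $u\in H^s(\mathbb{R}^n)$ with $u_{m_k}\to u$ in $L^2_{\mathrm{loc}}(\mathbb{R}^n)$, in particular $u_{m_k}\to u$ in $L^2(B_R)$. Since $q_1$ is supported in $B_R$ and $q$ is continuous hence bounded on the compact set $\overline{B_R}$, say $|q|\le C_R$ there, we get
\begin{eqnarray}
\|M_{q_1}u_{m_k}-M_{q_1}u\|_{L^2(\mathbb{R}^n)}=\|q_1(u_{m_k}-u)\|_{L^2(B_R)}\le C_R\|u_{m_k}-u\|_{L^2(B_R)}\longrightarrow 0.
\end{eqnarray}
Thus $M_{q_1}u_{m_k}$ converges in $L^2(\mathbb{R}^n)$, so $M_{q_1}$ is compact.

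Finally I would assemble the pieces. The set of compact operators from $H^s(\mathbb{R}^n)$ to $L^2(\mathbb{R}^n)$ is closed in the operator norm topology. For each $j\in\mathbb{N}$ apply the splitting above with $\varepsilon=1/j$ to write $M_q=M_{q_1^{(j)}}+M_{q_2^{(j)}}$ with $M_{q_1^{(j)}}$ compact and $\|M_{q_2^{(j)}}\|_{H^s\to L^2}\le 1/j$. Then $M_{q_1^{(j)}}\to M_q$ in operator norm as $j\to\infty$, and being a norm limit of compact operators, $M_q$ is compact. The main obstacle — though a mild one — is ensuring the tail estimate for $M_{q_2}$ is genuinely in operator norm (not merely strong convergence) so that the closedness of the compact operators can be invoked; this is immediate here because $q_2$ is a bounded multiplier with small sup norm, so no subtlety beyond bookkeeping arises.
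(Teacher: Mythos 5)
Your proof is correct and follows essentially the same route as the paper: both split $q$ into a compactly supported piece, handled by the local compactness of the Sobolev embedding (theorem~\ref{lc}), and a tail of small supremum norm giving a small operator norm contribution. The only cosmetic differences are that the paper uses a smooth cutoff $\chi_R$ rather than an indicator function, and closes the argument by a total-boundedness ($\epsilon$-net) argument on the image of the unit ball rather than by invoking that the compact operators are closed under operator-norm limits.
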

 \begin{proof}
\noindent The function $q$ is continuous and decays, hence it is bounded. Let $\epsilon>0$, then, by assumption, $\exists R>0$ such that
 \begin{eqnarray}
 |q(x)|\leq \epsilon \qquad \text{if }|x|\geq R.
 \end{eqnarray}
 Define, $\chi_R:\mathbb{R}\rightarrow \mathbb{R}$ smooth by
\begin{eqnarray}
\chi_R(x)=\begin{cases}
1 \qquad |x|\leq R\\
0\qquad |x|\geq R+1.
\end{cases}
\end{eqnarray}
Let $(f_n)_n\subset H^s(\mathbb{R}^n,\mathbb{R})$ be bounded, so local compactness of the Sobolev embedding (theorem~\ref{lc}) gives convergence in $H^s(\mathbb{R}^n,\mathbb{R})$ and weak convergence in $L^2_{\mathrm{loc}}(\mathbb{R}^n,\mathbb{R})$ up to a subsequence. Let the limit be $f\in H^s(\mathbb{R}^n,\mathbb{R})$. Therefore, 
\begin{eqnarray}
||\chi_Rqf_{m_n}-\chi_R qf||^2_{L^2(\mathbb{R}^n)}&=&||\chi_Rqf_{m_n}-\chi_R qf||^2_{L^2(B_{R+1}(0))}\\
&\leq & C\sup_{x\in \mathbb{R}}|q(x)|^2||f_{m_n}-f||^2_{L^2(B_{R+1}(0))}\leq \epsilon^2.
\end{eqnarray}
Further, consider the set $S_{R}:=\{\chi_{R}qf:f\in H^s(\mathbb{R}^n,\mathbb{R}),||f||_{H^s(\mathbb{R}^n)}\leq 1\}$. Then
\begin{eqnarray}
||(1-\chi_R)q f||_{L^2(\mathbb{R}^n)}\leq \epsilon^2||f||_{L^2(\mathbb{R}^n)}\leq \epsilon^2.
\end{eqnarray}
Hence, $S_{\infty}$ is within an $\epsilon$-neighbourhood of $S_R$, which is compact, therefore $S_{\infty}$ is compact. By the characterization of compactness through weak convergence, $qf_m\rightarrow q f$ in $L^2(\mathbb{R}^n,\mathbb{R})$ up to a subsequence.
 \end{proof}
 \subsection{A Regularity Result}
\begin{theorem}[Regularity for the Schr\"odinger Equation]\label{Schreg}
Let $u\in H^1(\mathbb{R})$ be a weak solution of the equation $(-\Delta+V)u=\lambda u$ where $V$ is a measurable function and $\lambda\in \mathbb{C}$. Then, if $V\in C^{\infty}(\Omega)$ with $\Omega \subset \mathbb{R}$ open, not necessarily bounded, then $u\in C^{\infty}(\Omega)$ also.  
\end{theorem}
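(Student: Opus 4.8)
The plan is to run the standard elliptic bootstrap, which in one dimension is especially transparent because the equation solves algebraically for $u''$ in terms of $u$ and the coefficient $V$. Since smoothness is a local property and $\Omega$ may be disconnected or unbounded, it suffices to prove $u\in C^\infty(I)$ for every bounded open interval $I$ with $\overline{I}\subset\Omega$.

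First I would make the equation precise at the level of distributions. Unwinding the weak formulation, namely $\int_{\mathbb{R}}u'\varphi' + \int_{\mathbb{R}}Vu\varphi = \lambda\int_{\mathbb{R}}u\varphi$ for test functions $\varphi$, against $\varphi\in C_c^\infty(I)$ shows that $u'' = (V-\lambda)u$ holds in $\mathcal{D}'(I)$; all terms are well defined on $I$ because $V\in C^\infty(I)$ is in particular locally bounded there. Splitting into real and imaginary parts, one may assume $u$ and $\lambda$ are real.

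The base case is the one-dimensional Sobolev (Morrey) embedding $H^1(\mathbb{R})\hookrightarrow C^{0,1/2}(\mathbb{R})$, which gives $u\in C^0(\overline{I})$. The inductive step rests on the elementary fact that if $w\in\mathcal{D}'(I)$ satisfies $w''=f$ with $f\in C^k(I)$ for some $k\geq 0$, then $w\in C^{k+2}(I)$: choosing a classical $C^{k+2}$ function $F$ with $F''=f$ (a double antiderivative of $f$) gives $(w-F)''=0$ in $\mathcal{D}'(I)$, and a distribution on an interval with vanishing second derivative is an \emph{affine} function, so $w=F+(\text{affine})\in C^{k+2}(I)$. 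Applying this with $f=(V-\lambda)u$: if $u\in C^k(I)$ then, since $V\in C^\infty(I)\subset C^k(I)$, also $(V-\lambda)u\in C^k(I)$, whence $u\in C^{k+2}(I)$. Starting from $u\in C^0(I)$ and iterating yields $u\in C^k(I)$ for every $k$, i.e. $u\in C^\infty(I)$; as $I$ was arbitrary, $u\in C^\infty(\Omega)$.

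There is no real obstacle here: the statement is a soft consequence of the fact that in one spatial dimension the PDE expresses $u''$ in terms of $u$ and the smooth coefficient $V$. The only points needing a line of care are the passage from the weak formulation to the distributional identity $u''=(V-\lambda)u$ on $\Omega$ (for which local boundedness of $V$ on $\Omega$ is exactly what one uses) and the classical lemma that a distribution on an interval with vanishing second derivative is affine. An equivalent alternative is to run the same induction on the Sobolev scale, $u\in H^k_{\mathrm{loc}}(\Omega)\Rightarrow (V-\lambda)u\in H^k_{\mathrm{loc}}(\Omega)\Rightarrow u\in H^{k+2}_{\mathrm{loc}}(\Omega)$, and conclude from $\bigcap_k H^k_{\mathrm{loc}}(\Omega)=C^\infty(\Omega)$; I would present whichever is shorter, but the $C^k$ version above is the most self-contained.
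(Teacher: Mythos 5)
Your proof is correct. Note that the paper does not actually prove this theorem: its ``proof'' is a citation to Reed--Simon (Vol.\ II, p.\ 55) together with the remark that the result ``can be argued from standard elliptic regularity results and Sobolev embeddings'' and that the one-dimensional case used here is ``completely elementary.'' Your bootstrap is precisely that elementary argument, executed in full: the passage from the weak formulation to the distributional identity $u''=(V-\lambda)u$ on any bounded open $I$ with $\overline{I}\subset\Omega$, the Morrey embedding $H^1(\mathbb{R})\hookrightarrow C^{0,1/2}(\mathbb{R})$ for the base case, and the lemma that a distribution on an interval with vanishing second derivative is affine for the inductive step $C^k\Rightarrow C^{k+2}$. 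Each step is sound, and the one genuinely delicate point --- that the equation only needs to be interpreted, and $V$ only needs to be controlled, locally on $\Omega$, where $V$ is smooth and hence locally bounded --- is exactly the one you flag. What your version buys over the paper's citation is self-containedness and the fact that it visibly uses nothing beyond one-dimensional calculus of distributions, which is consistent with the paper's stated intent that the argument be elementary; the only cosmetic remark is that the splitting into real and imaginary parts is unnecessary, since the whole argument goes through verbatim for complex-valued $u$ and $\lambda$.
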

\begin{proof}
Reed and Simon volume II page 55~\cite{RSII}. Note one can argue this from standard elliptic regularity results and Sobolev embeddings. In this paper, only the one-dimensional case of this is applied, which is completely elementary.
\end{proof}

\pagebreak

\section{A Result on Stability in Spherical Gauge}\label{bounds}
This section contains a few technical results on where the instability may lie in frequency space. This helped guide the search for a suitable test function and the subsequent instability. 
\begin{prop}\label{poly}
Consider the quartic polynomial
\begin{eqnarray}
P(x)=ax^4+bx^3+cx^2+dx+e.
\end{eqnarray}
Let $\Delta$ denote its discriminant and define
\begin{eqnarray}
\Delta_0=64a^3e-16a^2c^2+16ab^2c-16a^2bd-3b^4.
\end{eqnarray}
If $\Delta<0$, then $P(x)$ has two distinct real roots and two complex conjugate roots with non-zero imaginary part. If $\Delta>0$ and $\Delta_0>0$, then there are two pairs of complex conjugate roots with non-zero imaginary part. 
\end{prop}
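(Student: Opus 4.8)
The plan is to prove Proposition~\ref{poly} by recalling the classical theory of the quartic discriminant and the auxiliary quantity $\Delta_0$, and then reading off the sign conditions for the nature of the roots. First I would recall the standard fact that for a quartic $P(x)=ax^4+bx^3+cx^2+dx+e$ with $a\neq 0$, the discriminant $\Delta$ can be written (up to the positive factor $a^{6}$) as $a^{6}\prod_{i<j}(r_i-r_j)^2$, where $r_1,\dots,r_4$ are the roots counted with multiplicity. Thus $\Delta\neq 0$ exactly when all roots are distinct, and the \emph{sign} of $\Delta$ encodes how many roots are real: if all four roots are real or all four are non-real, then $\Delta>0$; if exactly two are real and two are a genuine complex-conjugate pair, then $\Delta<0$. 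This dichotomy is immediate from tracking how the factors $(r_i-r_j)^2$ behave: a complex-conjugate pair $\alpha,\bar\alpha$ contributes $(\alpha-\bar\alpha)^2<0$, while a real--complex pair $s,\alpha$ together with $s,\bar\alpha$ contributes $(s-\alpha)^2(s-\bar\alpha)^2=|s-\alpha|^4>0$, and a pair of conjugate pairs contributes a product of the form $|\alpha_1-\alpha_2|^2|\alpha_1-\bar\alpha_2|^2\cdot(\alpha_1-\bar\alpha_1)^2(\alpha_2-\bar\alpha_2)^2$, whose sign is $(+)(+)=+$. Counting the number of negative factors gives: $\Delta<0\iff$ exactly one conjugate pair (hence two real roots), which is precisely the first claim.

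For the second claim I would invoke the finer classification that distinguishes, within the case $\Delta>0$, between ``four real roots'' and ``two conjugate pairs.'' Here the quantity $\Delta_0$ (which is, up to normalisation, the second subresultant / the quantity $8ac-3b^2$ evaluated appropriately — in the normalisation used in the statement $\Delta_0=64a^3e-16a^2c^2+16ab^2c-16a^2bd-3b^4$) together with $P$ and its derivative $P'$ serves as the discriminating invariant. The classical result (see e.g. the quartic entry in any reference on resolvent cubics, or Rees's criterion) states: when $\Delta>0$, the four roots are all real if and only if $\Delta_0<0$ and a third auxiliary quantity is negative, and they form two complex-conjugate pairs if and only if $\Delta_0>0$ or that third quantity is positive. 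In particular $\Delta>0$ and $\Delta_0>0$ forces the ``two conjugate pairs with non-zero imaginary part'' case: four real roots would require $\Delta_0<0$, so $\Delta_0>0$ rules it out, and four coincident-free non-real roots must come in two conjugate pairs. This gives the second claim.

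The main obstacle, such as it is, is purely bookkeeping: the several competing normalisations of $\Delta$ and $\Delta_0$ in the literature (factors of $a$, signs) must be matched to the exact polynomials written in the statement, and one must be careful that the quoted implications are one-directional (the proposition only asserts sufficiency, not equivalence, so I do not need the full four-way classification with all its boundary cases where $\Delta_0=0$). Concretely I would: (i) fix the normalisation by verifying the formula for $\Delta$ and $\Delta_0$ against a known example, e.g. $P(x)=x^4+px^2+q$ where $\Delta=16q(p^2-4q)^2$ and $\Delta_0=-16q$; (ii) prove the sign-of-$\Delta$ dichotomy by the factor-counting argument above; (iii) quote or re-derive, via the resolvent cubic, the refinement using $\Delta_0$ in the $\Delta>0$ regime, noting that $\Delta_0$ controls whether the resolvent cubic's relevant root is positive or negative, which in turn controls reality of the quartic's roots. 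Since the paper only uses these implications to locate the instability band in frequency space (guiding the choice of test function), a clean citation of the standard quartic classification plus the short self-contained argument for the $\Delta<0$ case suffices, and I would not grind through the full resolvent-cubic computation in the text.
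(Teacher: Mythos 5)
Your proposal is correct and takes essentially the same route as the paper, whose entire proof is a citation to the standard classification of quartic roots by the signs of $\Delta$ and $\Delta_0$; your sign-counting argument for the $\Delta<0$ case and the appeal to the resolvent-cubic refinement ($\Delta>0$ with $\Delta_0>0$ excludes four real roots, hence forces two conjugate pairs) are exactly the content of that reference. One small correction to your calibration step (i): for $P(x)=x^4+px^2+q$ the formula in the statement gives $\Delta_0=64q-16p^2$, not $-16q$ (your value $\Delta=16q(p^2-4q)^2$ for the discriminant is correct), so matching normalisations against that example as written would spuriously suggest a mismatch.
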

\begin{proof}
See reference~\cite{poly}.
\end{proof}
\begin{prop}[Regions of Stability in Frequency Space]
Let $\mu> 0$ and $\omega\neq 0$. There does not exist a solution $\mathfrak{h}$ of the ODE~(\ref{HZ}) with $c_1=0$, $k_2=0$ and $\hat{\omega}\in \mathbb{R} \setminus (-\sqrt{2},\sqrt{2})$ or $\hat{\mu}\geq \frac{3}{8}\sqrt{\frac{3}{2}}$. 
\end{prop}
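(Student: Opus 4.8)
The plan is to recast the non‑existence statement as a positivity statement for the Schr\"odinger operator $H:=-\frac{d^{2}}{dx_*^{2}}+V$ of section~\ref{AODE}, and to read it off from lower bounds on the energy functional $E$ of Corollary~\ref{corm}. Suppose, for contradiction, that for some $\hat\omega\in\mathbb R\setminus\{0\}$ and some $\hat\mu=2M\mu>0$ there is a solution $\mathfrak h$ of~(\ref{HZ}) with $k_2=0$ and $c_1=0$. By Proposition~\ref{Asymp2} the rescaled function $\tilde{\mathfrak h}:=\mathfrak h/w$ solves~(\ref{ODEF}) with $\tilde k_2=0$ and $\tilde c_1=0$; arguing exactly as in the proof of Proposition~\ref{propdirect}, using the $L^2$‑tail estimates for the asymptotic bases~(\ref{asymschh})--(\ref{asymschi}) together with the regularity Theorem~\ref{Schreg}, one finds $\tilde{\mathfrak h}\in C^{\infty}(\mathbb R)\cap H^1(\mathbb R)$, so that $\tilde{\mathfrak h}$ is an eigenfunction of $H$ with eigenvalue $-\hat\mu^2<0$. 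Thus $E(\tilde{\mathfrak h})=-\hat\mu^2\|\tilde{\mathfrak h}\|_{L^2(\mathbb R)}^2$, and since $E(u)=\langle\nabla u,\nabla u\rangle_{L^2(\mathbb R)}+\langle Vu,u\rangle_{L^2(\mathbb R)}\ge\langle Vu,u\rangle_{L^2(\mathbb R)}$, on the one hand a pointwise lower bound on $V$ will bound $\hat\mu^2$ from above, and on the other hand $E_0:=\inf_{\|u\|_{L^2}=1}E(u)\le-\hat\mu^2<0$, which will be contradicted by any proof that $H\ge0$.

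\textbf{Reduction to two estimates on the potential.} The proposition follows once one establishes: (I) $V(x_*;\hat\omega)\ge-\tfrac{27}{128}$ for all $x_*\in\mathbb R$ and all $\hat\omega\ne0$; and (II) $H\ge0$ on $L^2(\mathbb R)$ whenever $\hat\omega^2\ge2$. Indeed, (II) contradicts $E_0<0$ and gives the $\hat\omega$‑part. For the $\hat\mu$‑part: $V$ is real‑analytic in $x_*$ (it is rational in $x$, and $x$ is a real‑analytic function of $x_*$) and is not identically $-\tfrac{27}{128}$ since $V(x_*)\to0$ as $x_*\to-\infty$, so by (I) the set $\{V=-\tfrac{27}{128}\}$ is discrete and hence $\langle Vu,u\rangle_{L^2}>-\tfrac{27}{128}\|u\|_{L^2}^2$ for every $u\in L^2(\mathbb R)\setminus\{0\}$; applied to $u=\tilde{\mathfrak h}$ this yields $-\hat\mu^2=E(\tilde{\mathfrak h})/\|\tilde{\mathfrak h}\|_{L^2}^2>-\tfrac{27}{128}=-\bigl(\tfrac38\sqrt{3/2}\bigr)^2$, i.e. $\hat\mu<\tfrac38\sqrt{3/2}$, contradicting $\hat\mu\ge\tfrac38\sqrt{3/2}$.

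\textbf{Proof of (I) and the monotonicity reduction for (II).} Writing $x=r/(2M)\in(1,\infty)$ and $s=\hat\omega^2x^3\in(0,\infty)$, a short computation from~(\ref{pot}) gives $V=\dfrac{(x-1)\bigl(s^3+(6x-9)s^2+(9-12x)s+1\bigr)}{x^4(1+s)^2}$, so (I) is equivalent to the nonnegativity on $\{x>1,\,s>0\}$ of
\begin{equation*}
\mathcal Q(x,s):=\tfrac{27}{128}(1+s)^2x^4+(6s^2-12s)x^2+(s^3-15s^2+21s+1)x-(s^3-9s^2+9s+1).
\end{equation*}
One has $\mathcal Q(1,s)=\tfrac{27}{128}(1+s)^2>0$ and $\mathcal Q(x,s)\to+\infty$ as $x\to\infty$, so it suffices to exclude real roots of $\mathcal Q(\cdot,s)$ in $(1,\infty)$; since $\mathcal Q(\cdot,s)$ is a quartic in $x$ with vanishing cubic coefficient, this can be done by the discriminant test of Proposition~\ref{poly} (with $\Delta$ and $\Delta_0$ regarded as polynomials in $s$), after a short case analysis in $s$ — elementary but computation‑heavy. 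For (II) I would first reduce to the single value $\hat\omega^2=2$: a direct calculation gives $\partial V/\partial(\hat\omega^2)=\tfrac{x-1}{x}\cdot\dfrac{R(x,s)}{(1+s)^3}$ with $R(x,s)=s^3+3s^2+(24x-27)s+(7-12x)$, and one checks $R>0$ on $\{x>1,\,s\ge2x^3\}$ (it is increasing in $s$ for $s\ge2$, and at $s=2x^3$ it equals $8x^9+12x^6+48x^4-54x^3-12x+7$, which is increasing in $x$ and equals $9$ at $x=1$). Hence $V(\cdot;\hat\omega)$ is pointwise increasing in $\hat\omega^2$ on $\{\hat\omega^2\ge2\}$, so $H$ for $\hat\omega^2\ge2$ dominates $H$ for $\hat\omega^2=2$ as a quadratic form, and (II) follows once $-\frac{d^2}{dx_*^2}+V|_{\hat\omega^2=2}\ge0$ is proved.

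\textbf{The main obstacle.} That last step is where the real work lies. The potential $V|_{\hat\omega^2=2}$ is \emph{not} pointwise nonnegative: it dips slightly below zero on a half‑line near $x_*=-\infty$, where it moreover decays exponentially, while increasing to $2$ at $x_*=+\infty$; so $H\ge0$ cannot be read off a pointwise bound and is genuinely borderline (for $\hat\omega^2$ slightly below $2$ the operator does have a negative eigenvalue). I would establish $-\frac{d^2}{dx_*^2}+V|_{\hat\omega^2=2}\ge0$ by the ground‑state (Allegretto--Piepenbrink) substitution: it suffices to exhibit one $\phi\in C^2(\mathbb R)$ with $\phi>0$ and $-\phi''+V|_{\hat\omega^2=2}\,\phi\ge0$, since then for $u\in H^1(\mathbb R)$ an integration by parts gives $E(u)=\int_{\mathbb R}\phi^2\bigl|(u/\phi)'\bigr|^2+\int_{\mathbb R}\dfrac{-\phi''+V\phi}{\phi}\,u^2\ge0$. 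In the $x$‑variable $\phi$ corresponds to a positive supersolution of the reduced equation~(\ref{HZ}) at $\mu=0$, and verifying $-\phi''+V\phi\ge0$ amounts to one further explicit rational inequality on $(1,\infty)$; producing a workable $\phi$ — equivalently, showing that the shallow, exponentially localised well of $V$ together with the positive barrier to its right fails to bind — is the hard part of the argument.
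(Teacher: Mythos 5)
Your reformulation as spectral positivity of $H=-\frac{d^{2}}{dx_*^{2}}+V$ is legitimate (a solution with $k_2=c_1=0$ does give an $H^1(\mathbb{R})$ eigenfunction of $H$ with eigenvalue $-\hat\mu^2<0$), and the $\hat\mu$-half of your argument, via the pointwise bound $V\ge-\tfrac{27}{128}$ plus the strictness remark, would close \emph{modulo} the nonnegativity of $\mathcal Q(x,s)$, which you defer to an unexecuted discriminant case analysis. The genuine gap is in the $\hat\omega$-half, and you flag it yourself: everything is reduced to $-\frac{d^{2}}{dx_*^{2}}+V|_{\hat\omega^2=2}\ge0$, and this is never proved. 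Since $V|_{\hat\omega^2=2}$ is negative on a half-line, this is a borderline global spectral statement that cannot be read off any pointwise bound; the positive supersolution $\phi$ in your Allegretto--Piepenbrink step carries the entire content of the claim, and no candidate is produced or verified. As written, the $\hat\omega\in\mathbb{R}\setminus(-\sqrt2,\sqrt2)$ part of the proposition is not established.

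It is worth seeing why the paper does not meet this difficulty: it never passes to Schr\"odinger form. With $k_2=0$ and (WLOG) $k_1>0$, the solution $\mathfrak h$ of~(\ref{HZ}) is positive near $r=2M$ and tends to $0$ at both ends, hence attains a positive interior maximum $a$, where the ODE gives $\frac{d^2\mathfrak h}{dr^2}(a)=\frac{a\left(\hat\mu^2a+\hat\omega^2(\hat\mu^2a^4-2a+2)+\hat\omega^4a^3(a-1)\right)}{(\hat\omega^2a^3+1)(a-1)^2}\,\mathfrak h(a)$; positivity of the coefficient contradicts $\mathfrak h''(a)\le0$. Because the numerator already contains the manifestly positive terms $\hat\mu^2a$ and $\hat\omega^4a^3(a-1)$, it suffices that \emph{either} $\hat\mu^2a^4-2a+2\ge0$ (handled by the quartic discriminant of Proposition~\ref{poly}, giving $\hat\mu^2\ge\tfrac{27}{128}$) \emph{or} $\hat\omega^2a^3-2\ge0$ on $(1,\infty)$ (giving $\hat\omega^2\ge2$). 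In other words, the zeroth-order coefficient $-(q_{\hat\omega}-\hat\mu^2x^2/(x-1)^2)$ is pointwise positive in the relevant frequency ranges \emph{before} the Liouville transformation; passing to the potential $V$ (which absorbs $\frac{dp}{dr}$, $p^2/2$ and derivatives of the change of variables) redistributes terms and destroys that manifest positivity, and additionally decouples $\hat\mu$ (now the eigenvalue) from $\hat\omega$ (now in the potential), losing the favourable cross-terms. That is exactly why your route lands on a hard threshold problem where the paper's requires only an elementary maximum-principle argument and pointwise algebra. If you want to salvage your framework for the $\hat\omega$ part, the natural fix is to run the same interior-maximum argument on the untransformed equation rather than to try to prove $H\ge0$.
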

\begin{proof}
From proposition~\ref{Asymp}, the admissible boundary conditions for the solution are $\mathfrak{h}(r)=k_1\mathfrak{h}^{2M,+}(r)$ at the future event horizon and $\mathfrak{h}(r)=c_2\mathfrak{h}^{\infty,-}_z(r)$ at spacelike infinity. Without loss of generality, take $k_1>0$. Now, since the solution must decay exponentially towards infinity, there must be maxima $a\in (1,\infty)$. At such a point, one has
\begin{eqnarray}
\frac{d^2\mathfrak{h}}{dr^2}(a)=\frac{a(\hat{\mu}^2 a+\hat{\omega}^2(\hat{\mu}^2a^4-2a+2)+\hat{\omega}^4 a^3(a-1))}{(\hat{\omega}^2a^3+1)(a-1)^2}\mathfrak{h}(a),
\end{eqnarray} 
with $\mathfrak{h}(a)>0$. To derive a contradition, one must have
\begin{eqnarray}
\frac{a(\hat{\mu}^2 a+\hat{\omega}^2(\hat{\mu}^2a^4-2a+2)+\hat{\omega}^4 a^3(a-1))}{(\hat{\omega}^2a^3+1)(a-1)^2}>0.
\end{eqnarray} A sufficient condition for the numerator to be positive is 
\begin{eqnarray}
\hat{\mu}^2a^4-2a+2\geq 0.
\end{eqnarray}
This has discriminant
\begin{eqnarray}
\Delta=16\hat{\mu}^4(128\hat{\mu}^2-27),\qquad \Delta_0=128\hat{\mu}^2.
\end{eqnarray}
Hence, if $\hat{\mu}^2>\frac{27}{128}$, then there are no real roots. Thus, because the polynomial is positive at a point, say $a=1$, it is positive everywhere. If $\Delta=0$, there is a double real root and two complex conjugate roots. The real roots can only occur at a stationary point of the polynomial and therefore the polynomial cannot be negative anywhere. Since all other terms in the numerator are positive, the prefactor of $\mathfrak{h}$ also is. Hence, there can be no solution with the conditions $k_2=0$ and $c_1=0$ if $\hat{\mu}\geq \frac{3}{8}\sqrt{\frac{3}{2}}$. \\

\noindent Another sufficient condition for positivity of the numerator is
\begin{eqnarray}
\hat{\omega}^2a^3-2\geq 0.
\end{eqnarray}
This polynomial has a single real root at $a=\big(\frac{2}{\hat{\omega}^2}\big)^{\frac{1}{3}}$. For positivity on $a\in (1,\infty)$, one requires $\frac{2}{\hat{\omega}^2}\leq 1$ or $\hat{\omega}^2\geq 2$. Note that if $\hat{\mu}=0$ then this is precisely the polynomial that governs positivity. Hence, this bound for $\hat{\omega}$ is sharp.
\end{proof}

\begin{remark}
By an almost identical argument one can make the bound for $\hat{\mu}$ even sharper and show that $\hat{\mu}<\frac{1}{4}$ and $\hat{\mu}\leq \sqrt{2}|\hat{\omega}|$. 
\end{remark}
\pagebreak
\bibliography{BS}
\bibliographystyle{ieeetr}
 \end{document}